\documentclass[11pt]{article}
\usepackage{amsthm}
\usepackage{amsmath}
\usepackage{amssymb}
\usepackage{color}
\usepackage{makecell}
\usepackage{array}
\usepackage{diagbox}
\usepackage{ifpdf}
\usepackage{xcolor}
\usepackage{txfonts}
\usepackage{animate}
\usepackage[hidelinks]{hyperref}
\usepackage{url}
\usepackage[T1]{fontenc}
\usepackage[all]{xy}
\usepackage{epic,bez123}
\ifpdf 
  \usepackage[pdftex]{graphicx}
  \DeclareGraphicsExtensions{.pdf,.png,.jpg,.jpeg,.mps}
  \usepackage{pgf}
  \usepackage{tikz}
\else 
  \usepackage{graphicx}
  \DeclareGraphicsExtensions{.eps,.bmp}
  \usepackage{pgf}
  \usepackage{tikz}
\fi
\usepackage{epic,bez123}
\usepackage{wrapfig}
\usepackage{pgfplots}
\newtheorem{theorem}{Theorem}[section]
\newtheorem{remark}[theorem]{Remark}

\newtheorem{lemma}[theorem]{Lemma}
\newtheorem{proposition}[theorem]{Proposition}

\newtheorem{definition}[theorem]{Definition}

\newtheorem{corollary}[theorem]{Corollary}

\newtheorem{example}[theorem]{Example}

\def\urltilda{\kern -.15em\lower .7ex\hbox{\~{}}\kern .04em}
\date{} %
\def\keywords#1{\begin{center}{\bf Keywords}\\{#1}\end{center}} %

\newcommand{\F}{{\mathbb F}}
\newcommand{\var}{{\rm Var}}
\newcommand{\w}{{\rm w_H}}
\newcommand{\nl}{{\rm nl}}
\newcommand{\nnl}{{\rm nnl}}
\newcommand{\rnnl}{{\rm Rnnl}}

\newcommand{\NW}{{\rm NW}}
\newcommand{\supp}{{\rm supp}}
\newcommand{\codim}{{\rm codim}}

 \usepackage[textsize=footnotesize]{todonotes}
\usepackage{lipsum}
\usepackage{marginnote}

\title {\huge\bf Higher-order nonhomomorphicity under affine restrictions}
\author{\Large {\bf Serge Feukoua, Ana S\u{a}l\u{a}gean and Pantelimon St\u{a}nic\u{a} }
\vspace{.3cm} \\ 
Department of Fundamental Sciences,  National Advanced School of Public Works, \\
Yaound\'e 610101, Cameroon\\
Department of Computer Science, Loughborough University, \\Loughborough 10587, UK\\
Department of Applied Mathematics, Naval Postgraduate School,\\
Monterey, CA 93943, USA}
\date{\today}

\begin{document}

\maketitle

\abstract{We study the behavior of the $k$-th-order nonhomomorphicity of Boolean functions under restriction to affine subspaces. For odd $k\ge3$ and $0\le t<n$, we introduce the $t$-restriction $k$-th-order nonhomomorphicity $P_{k,t}(f)$, defined as the minimum of $P_k(f_{\mid A})$ over affine spaces of codimension at most $t$. For a direct sum of $m$ monomials of degrees at least two, we determine $P_{k,t}$ exactly. When $t<m$, a minimum is obtained by setting one variable equal to zero in each of the $t$ monomials of lowest degree. When $m\le t<n$, the minimum is zero. The proof combines multiplicativity of even Walsh moments with a contraction argument for affine restrictions. We establish a general formula for the Walsh transform of a Boolean function restricted to an affine subspace of arbitrary codimension, yielding an exact spectral characterization of the $t$-restriction normalized nonlinearity $\rnnl_t(f)$. As odd $k$ tends to infinity, $(1-2P_{k,t}(f))^{1/(k+1)}$ converges to $1-2\rnnl_t(f)$. We further prove that every non-affine function has a hyperplane restriction with strictly smaller odd-order nonhomomorphicity. Finally, we determine $\rnnl_1(T_{d,n})$ exactly for all threshold functions with $n\ge3$. This proves the two threshold-function conjectures left open in~\cite{SergeAnaPante2025}, including the previously unresolved odd-majority case, where a minimizing hyperplane is $x_1+x_2=0$.}

\keywords{Boolean functions, direct sums of monomials, threshold functions, affine spaces, restriction, $k$-th-order nonhomomorphicity}

\section{Introduction}
\label{sec:introduction}
A Boolean function $f$ in $n$ variables is a mapping from the set
$\{0,1\}^n$ to $\{0,1\}$.
Boolean functions are used in symmetric cryptography and in algebraic coding theory, including the study of Reed--Muller and Kerdock codes~\cite{rueppel2012analysis,Carlet2021}. Nonhomomorphicity tests estimate Walsh moments by evaluating the function on randomly sampled tuples, without computing its whole spectrum. This is useful when the number of variables makes a full Walsh transform impractical.

In cryptography, nonlinearity, algebraic degree, and algebraic immunity describe different aspects of resistance to attacks. An attacker may constrain the input to an affine subspace, so it is useful to determine how such restrictions change these parameters.

Carlet et al.~\cite{CSA2:2024} studied the stability of algebraic degree under affine restrictions. Feukoua et al.~\cite{SergeAnaPante2025} studied restriction nonlinearity for direct sums of monomials and threshold functions. Here we study odd-order nonhomomorphicity under affine restrictions and resolve the two threshold-function conjectures left open in that work.

For odd $k\ge3$ and $0\le t<n$, define the $t$-restriction $k$-th-order nonhomomorphicity $P_{k,t}(f)$ as the minimum of $P_k(f_{\mid A})$ over affine subspaces $A$ of codimension at most $t$. Our first main result determines this parameter for direct sums of $m$ monomials of degrees at least two. When $t<m$, the minimum is attained by setting one variable equal to zero in each of the $t$ monomials of lowest degree. When $m\le t<n$, setting one variable equal to zero in every monomial gives the minimum zero. The proof uses the factorization of even Walsh moments over direct sums and their contraction under the affine coupling produced by the pivot construction. These estimates give a lower bound for every admissible affine restriction, and the coordinate restrictions attain it.

Our second contribution concerns the $t$-restriction normalized
nonlinearity $\rnnl_t(f)$, defined as the minimum normalized
nonlinearity over all affine restrictions of codimension at most $t$.
We establish a general formula for the Walsh transform of a Boolean function restricted to an affine subspace of arbitrary codimension. This yields an exact spectral characterization of $\rnnl_t(f)$, proves that
\[
\lim_{\substack{k\to\infty\\ k\ {\rm odd}}}
\bigl(1-2P_{k,t}(f)\bigr)^{1/(k+1)}
=
1-2\rnnl_t(f),
\]
and gives an exact formula for the case of codimension one in terms of
the two largest absolute normalized Walsh coefficients, counted at distinct frequencies. We then
determine $\rnnl_1(T_{d,n})$ for every threshold function $T_{d,n}$ with $n\ge3$.
The resulting formula proves Conjectures~1 and~2 from
\cite{SergeAnaPante2025}. In particular, for odd majority, the minimum
is attained on the hyperplane $x_1+x_2=0$.

We also prove that the restriction values $P_{k,t}(f)$ decrease strictly until they reach zero. The proof uses an exact average of the moments on two parallel hyperplanes.

The paper is organized as follows. Section~\ref{sec:preliminaries} recalls the required notions and basic properties of Walsh transforms, nonlinearity, and $k$-th-order nonhomomorphicity. Section~\ref{sec:direct-sums-restrictions} studies direct sums of monomials and proves the exact formula for $P_{k,t}$. Section~\ref{sec:threshold-functions} develops the theory of restriction normalized nonlinearity, establishes the general affine-restriction theorem and its spectral consequences, and applies these results to threshold functions. Section~\ref{sec:conclusion} summarizes the main conclusions.

\section{Preliminaries}
\label{sec:preliminaries}

We let $\mathbb{F}_2$ be the finite field with two
elements and $\mathbb{F}_2^n$ be the vector space over
$\mathbb{F}_2$ of dimension~$n$.   A function $f\colon \mathbb{F}_2^n \rightarrow \mathbb{F}_2$ is an
$n$-variable Boolean function. It can be uniquely represented in {\em Algebraic Normal Form} (ANF), that is, as a polynomial function of degree at most one in each variable,  $f(x_1,\dots ,x_n)=\sum_{I\subseteq \{1,\dots
,n\}}a_I\prod_{i\in I}x_i$, 
with $a_I\in \mathbb{F}_2$ (see for example \cite{Carlet2021,cusickStanica2017}).

 When computing $k$-th-order nonhomomorphicity, it is convenient to view functions up to a suitable equivalence relation, namely affine equivalence, which we will recall below. First recall that an affine automorphism of $\mathbb{F}_2^n$ is a mapping $\varphi$
 of the form $\varphi(x) = Mx+a$,  where
$x$, $a$ and $\varphi(x)$ are viewed as column vectors, $x^T=(x_1,\ldots, x_n)$,
$a^T=(a_1,\ldots, a_n) \in \F_2^n$ and $M$ is an $n\times n$
invertible matrix over $\F_2$.
We use the following definitions.
\begin{definition}
\begin{enumerate}
\item[$(i)$] Two Boolean functions $f$ and $g$ in $n$ variables are said to be affinely equivalent and we write then $f\sim g$, if there  exists  an affine automorphism $L$ of $\mathbb{F}_2^n$, such that $f=g\circ L$, where '$\circ$' is the composition.
\item[$(ii)$] Two Boolean functions $f$ and $g$ in $n$ variables are said to be EA-equivalent, and we write $f\sim_{\mathrm{EA}}g$, 
if there  exists  an affine automorphism $L$ of $\mathbb{F}_2^n$, an affine Boolean function $\ell$  in $n$ variables, such that $f=g\circ L+\ell$.
\end{enumerate}
\end{definition}
Recall that the Walsh-Hadamard transform of a Boolean function
$f$ is the function $W_f\colon \F_2^n \rightarrow \mathbb{R}$ defined as
$\displaystyle W_f(u)=\sum_{\substack{x\in \mathbb{F}_2^n}}(-1)^{f(x)+u\cdot x},$
 for all $u\in\mathbb{F}_2^n$, where ``$\cdot$ ''  {denotes
the usual inner product}. 
The nonlinearity $\nl(f)$ of
a Boolean function $f$ over $\mathbb{F}_2^n$ is the minimum Hamming
distance $d_H(f,h)=|\{x\in \mathbb{F}_2^n\mid f(x)\neq h(x)\}|$ between
$f$ and affine functions $h$.  
We have
$\displaystyle
\nl(f)=2^{n-1}-\frac{1}{2} \max_{\substack{a\in\mathbb{F}_2^n}}\left|W_f(a)\right|$ (see, for example~\cite{cusickStanica2017}).

For our purpose it is useful to use  normalized versions of the Walsh-Hadamard transform and nonlinearity.
\begin{definition}\label{def:nnl}
  The normalized Walsh-Hadamard transform of a Boolean function $f$ in $n$ variables, denoted here  by $\NW_f$, is the function $\NW_f\colon \F_2^n \rightarrow \mathbb{R}$ defined as  $\displaystyle 
    \NW_f(a)=\frac{1}{2^n}W_f(a),
$
for all $a\in\F_2^n$.
The normalized nonlinearity of $f$ is defined by $\displaystyle 
\nnl(f)= 
\frac{1}{2^n}\nl(f)=\frac{1}{2}\left(1-\max_{a\in\F_2^n}\left|\NW_f(a)\right|\right).
$
\end{definition}
\begin{definition} Let $f:\, \F_2^n\longrightarrow\F_2$ and $k\geq 2$ be an integer.
The  $k$-th-order nonhomomorphicity of a function $f$ is the
probability $P_k(f)$ of failing the following test.
$$f(u_1+\cdots+u_{k})+f(u_1)+\cdots+f(u_{k})=0$$
 with the probability taken over all tuples $(u_1,\ldots u_{k})\in (\F_2^n)^{k}$ that is, $$P_k(f)=\dfrac{|\{(u_1,\ldots u_{k})\in \F_2^{nk}\mid \quad f(u_1+\cdots+u_{k})+f(u_1)+\cdots+f(u_{k})\neq0\}|}{2^{nk}}$$
\end{definition}
From \cite{DSZU06}, we have \begin{equation}\label{firstformulaof pkf}P_k(f)=\dfrac{1}{2}\left(1-\sum_{a\in\F_2^n}\NW_f(a)^{k+1}\right),\end{equation}
where $\NW_f$ is the normalized Walsh transform.

For a real number $q\ge2$, write
$M_q(f)=\sum_u|\NW_f(u)|^q$.
Character orthogonality gives Parseval's identity and Walsh inversion,
\[
\sum_u\NW_f(u)^2=1,\qquad (-1)^{f(x)}=\sum_u\NW_f(u)(-1)^{u\cdot x}.
\]
Since $|\NW_f(u)|\le1$, it follows that $0<M_q(f)\le1$. For odd $k\ge3$, equation~\eqref{firstformulaof pkf} is $P_k(f)=\frac12(1-M_{k+1}(f))$.

\begin{lemma}\label{lem: EA equiv-nonhomomor}
Let $k\ge3$ be odd. Then $P_k$ is invariant under EA-equivalence. This need not hold for even $k$.
\end{lemma}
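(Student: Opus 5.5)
The plan is to use the spectral formula \eqref{firstformulaof pkf}, $P_k(f)=\frac12\bigl(1-\sum_a \NW_f(a)^{k+1}\bigr)$, and show that EA-equivalence only permutes the values $\NW_f(a)$ and changes some of their signs. Since $k+1$ is even, such changes do not affect $\sum_a\NW_f(a)^{k+1}=M_{k+1}(f)$.

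Let $f=g\circ L+\ell$ with $L(x)=Mx+a$ and $\ell(x)=b\cdot x+c$. We compute $W_f(u)$ step by step.
\begin{itemize}
\item Adding the constant $c$ multiplies every Walsh value by $(-1)^c$.
\item Adding the linear form $b\cdot x$ translates the spectrum: $W_{g+b\cdot x}(u)=W_g(u+b)$.
\item For the composition, substitute $y=Mx+a$, so $x=M^{-1}(y+a)$. Then
\[
W_{g\circ L}(u)=\sum_y(-1)^{g(y)+u\cdot M^{-1}(y+a)}=(-1)^{(M^{-1})^Tu\cdot a}\,W_g\bigl((M^{-1})^Tu\bigr).
\]
\end{itemize}
Combining these, $\NW_f(u)=\pm\NW_g(\sigma(u))$ for the bijection $\sigma(u)=(M^{-1})^T(u+b)$ of $\F_2^n$. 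Raising to the even power $k+1$ removes the sign. Summing over $u$ and reindexing by $\sigma$ then gives $M_{k+1}(f)=M_{k+1}(g)$, hence $P_k(f)=P_k(g)$.

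For the claim that invariance can fail for even $k$, we give an explicit counterexample. Take $k=2$, so $P_2(f)=\frac12\bigl(1-\sum_a\NW_f(a)^3\bigr)$ by \eqref{firstformulaof pkf}, and compare $f=0$ with $g=1$ in any $n\ge1$.
\begin{itemize}
\item For $f=0$: $\NW_f(0)=1$ and all other values vanish, so $\sum_a\NW_f(a)^3=1$ and $P_2(f)=0$.
\item For $g=1$: $\NW_g(0)=-1$, so $\sum_a\NW_g(a)^3=-1$ and $P_2(g)=1$.
\end{itemize}
This also agrees with the direct test: for $g\equiv1$, $g(u_1+u_2)+g(u_1)+g(u_2)=1$ always. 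So adding an affine (here constant) function changes $P_2$.

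The only step needing care is the composition computation, specifically getting the transpose-inverse and the sign factor right. Only the fact that $\sigma$ is a bijection and that the factor is $\pm1$ matters, so this step is routine.
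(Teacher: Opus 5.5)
Your proof is correct and follows essentially the same route as the paper: the change of variables $y=Mx+a$ gives $\NW_f(u)=\pm\NW_g(\sigma(u))$ for a bijection $\sigma$ of frequencies, the even exponent $k+1$ removes the sign, and the constant functions $0$ and $1$ give the even-$k$ counterexample. The paper states that counterexample for every even $k$ ($P_k(0)=0$, $P_k(1)=1$), and your computation extends verbatim, since for $g\equiv1$ one has $\NW_g(0)^{k+1}=(-1)^{k+1}=-1$ whenever $k$ is even.
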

\begin{proof}
Suppose $g(x)=f(Mx+v)+a\cdot x+b$, where $M$ is invertible. Changing variables from $x$ to $y=Mx+v$ gives
\begin{align}\label{eq-EAequiv-nhommo}
W_g(c)&=\sum_x(-1)^{f(Mx+v)+(a+c)\cdot x+b}\nonumber\\
&=\sum_y(-1)^{f(y)+(a+c)\cdot M^{-1}(y+v)+b}\nonumber\\
&=(-1)^{b+(a+c)\cdot M^{-1}v}W_f(M^{-T}(a+c)).
\end{align}
The frequency map $c\mapsto M^{-T}(a+c)$ is a bijection. Since $k+1$ is even, the sign factor disappears after taking the $(k+1)$-st power. Summing over all frequencies and applying~\eqref{firstformulaof pkf} proves $P_k(g)=P_k(f)$. For even $k$, the affine functions $0$ and $1$ have respective values $P_k(0)=0$ and $P_k(1)=1$, although they are EA-equivalent.
\end{proof}
 Since the restrictions of functions to affine spaces will be the main topic of this paper, let us recall from \cite{CSA2:2024} more detail on the ANF of such a restriction. The restriction of a function $f$ to an affine space $A$ will be
denoted by $f_{\mid A}$.
We can write $A=v+E$ for some vector space $E$ and some vector $v$. We then fix a basis for $E$ and use the coordinates in this basis to identify each element of $A$. This allows identifying $A$ with the set $\mathbb{F}_2^{n-k}$, where $k$ is the codimension of $A$, and the function $f_{\mid
A}$ can then be viewed as a function in $n-k$ variables, which allows to
consider its ANF. 
Different choices of bases for $E$
result in different expressions for the ANF of $f_{\mid A}$. While different choices of bases for $E$
result in different expressions for the ANF of $f_{\mid A}$, all of these functions are affinely equivalent and therefore have the same algebraic degree.

More precisely, 
we can consider the affine space $A$ as the set of solutions of $k$ affine equations, that are, after Gaussian elimination, $x_{i_1} = a_{i_1}(y), \ldots, x_{i_k} = a_{i_k}(y)$, where $i_1, \ldots, i_k$ are distinct and $a_j(y)$ are affine functions in  $y=(y_1, \ldots, y_{n-k})$, where the $n-k$ variables $\{y_1, \ldots, y_{n-k}\} = \{x_1, \ldots, x_n\} \setminus \{x_{i_1}, \ldots, x_{i_k}\}$. We obtain one of the expressions for the ANF of $f_{\mid A}$ by substituting the variables $x_{i_1}, \ldots, x_{i_k}$ with $a_{i_1}(y), \ldots, a_{i_k}(y)$ in the ANF of $f$, and we obtain a function in the remaining $n-k$ variables $\{y_1, \ldots, y_{n-k}\}$.  This function depends on the equations used for defining $A$ (the choice of $i_1, \ldots, i_k$ is not unique),
 but all choices yield functions which are affinely equivalent to each other.
 
 The behavior of restrictions with respect to EA-equivalence is as follows.
 \begin{proposition}\textup{\cite[Lemma~4]{CSA2:2024}}\label{prop:restr-equivalence}
Let $f,g$ be EA-equivalent Boolean functions in $n$ variables, with $g=f\circ\varphi+\ell$ for an affine automorphism $\varphi$ and an affine Boolean function $\ell$. For every affine subspace $A\subseteq\F_2^n$, $g_{\mid A}$ and $f_{\mid\varphi(A)}$ are EA-equivalent.
\end{proposition}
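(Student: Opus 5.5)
The plan is to compare the two restrictions by writing both as functions on the same domain. Write $\varphi(x)=Mx+v$ with $M$ invertible, and write $A=w+E$ with $E$ a linear subspace of dimension $n-s$, where $s=\codim A$. Then
\[
\varphi(A)=\varphi(w)+ME ,
\]
so $\varphi(A)$ is an affine subspace of the same dimension.

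First I fix a basis $e_1,\dots,e_{n-s}$ of $E$. This identifies $A$ with $\F_2^{n-s}$ through the map $\iota_A(y)=w+\sum_i y_ie_i$. I use the basis $Me_1,\dots,Me_{n-s}$ of $ME$ and the base point $\varphi(w)$ to identify $\varphi(A)$ with $\F_2^{n-s}$. Call this map $\iota_{\varphi(A)}$.

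With these choices, linearity of $M$ gives
\[
\varphi(\iota_A(y))=Mw+v+\sum_i y_iMe_i=\iota_{\varphi(A)}(y).
\]
Hence $(f\circ\varphi)\circ\iota_A=f_{\mid\varphi(A)}$ as functions on $\F_2^{n-s}$.

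Next I handle the affine term. The composite $\ell\circ\iota_A$ is an affine function of $y$, because it is an affine map followed by an affine form. Therefore
\[
g_{\mid A}(y)=f_{\mid\varphi(A)}(y)+(\ell\circ\iota_A)(y).
\]
This is EA-equivalence with the identity automorphism and an affine added function, for these particular identifications.

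Finally I deal with the dependence on the chosen identification, which is the only real subtlety. The paper noted earlier that different bases and base points give affinely equivalent ANFs: two identifications differ by an affine automorphism of $\F_2^{n-s}$. EA-equivalence is closed under composing with affine automorphisms, since for affine $\psi$ we have $(h\circ\psi+\ell')\circ\psi'=h\circ(\psi\circ\psi')+\ell'\circ\psi'$ and $\ell'\circ\psi'$ is again affine. Hence the conclusion holds for every choice of identification.
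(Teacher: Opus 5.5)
Your proposal is correct and matches the paper's proof. The paper also takes an affine parametrization $\psi$ of $A$, notes that $\varphi\circ\psi$ parametrizes $\varphi(A)$, and reads off $g\circ\psi=f\circ(\varphi\circ\psi)+\ell\circ\psi$ with $\ell\circ\psi$ affine. Your last paragraph, on independence of the chosen identification, spells out what the paper leaves to its earlier remark that different parametrizations give affinely equivalent functions.
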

\begin{proof}
An affine parametrization $\psi$ of $A$ gives $g\circ\psi=f\circ(\varphi\circ\psi)+\ell\circ\psi$. The map $\varphi\circ\psi$ parametrizes $\varphi(A)$, and $\ell\circ\psi$ is affine. This is precisely EA-equivalence of the restricted functions.
\end{proof}
\begin{remark}
To evaluate $P_k(f_{\mid A})$, the affine space $A$ is first identified with
$\F_2^{\dim A}$ through an affine isomorphism.  The sums occurring in the
$k$-th-order test are therefore taken in this coordinate vector space, not in
the ambient affine coset $A$.  When $k$ is odd, different affine
parametrizations give EA-equivalent functions and hence the same value of
$P_k$ by Lemma~\ref{lem: EA equiv-nonhomomor}.  For even $k$, this
independence need not hold, so throughout the restriction results involving
EA-invariance we assume that $k$ is odd.
\end{remark}
\begin{remark}\label{rem:bent-restriction}
A Boolean function $f$ in an even number $n$ of variables is bent if every Walsh coefficient has magnitude $2^{n/2}$. On any affine hyperplane $H$, the restriction identity in Theorem~\textup{\ref{thm:restriction-walsh-identity}} expresses a restricted coefficient, up to sign, as half the sum or difference of two ambient coefficients. Thus $W_{f_{\mid H}}$ takes values in $\{0,\pm2^{n/2}\}$. Such a function in $n-1$ variables is called semi-bent, or nearly bent. Its normalized nonzero coefficients have magnitude $2^{1-n/2}$, and Parseval shows that exactly $2^{n-2}$ coefficients are nonzero.

For every odd $k\ge3$, equation~\eqref{firstformulaof pkf} now gives
\[
P_k(f)=\frac12\left(1-2^{-n(k-1)/2}\right),\qquad
P_k(f_{\mid H})=\frac12\left(1-2^{-(n-2)(k-1)/2}\right).
\]
The decrease is the same on every hyperplane and equals
$P_k(f)-P_k(f_{\mid H}) =\frac12(2^{k-1}-1)2^{-n(k-1)/2}$.
For fixed $k$, this difference tends to zero as the even dimension grows.
\end{remark}
\begin{definition}\label{def:restriction-nonhomomorphicity}
Let $f$ be a Boolean function in $n$ variables, let $k\ge3$ be odd, and let $0\le t<n$. Define
$P_{k,t}(f)=\min_{\codim A\le t}P_k(f_{\mid A})$,
where $A$ ranges over the nonempty affine subspaces of $\F_2^n$. The odd-order assumption makes this definition independent of the affine parametrization of $A$.
\end{definition}
The restriction parameter inherits EA-invariance from $P_k$.
\begin{proposition}\label{prop:rnnl-equivalence}
Let $f,g$ be EA-equivalent Boolean functions in $n$ variables. For odd $k\ge3$ and $0\le t<n$, $P_{k,t}(f)=P_{k,t}(g)$.
\end{proposition}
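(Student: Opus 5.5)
The plan is to show that the EA-equivalence $g=f\circ\varphi+\ell$ induces a codimension-preserving bijection $A\mapsto\varphi(A)$ on the nonempty affine subspaces of $\F_2^n$, and that this bijection carries restrictions of $g$ to EA-equivalent restrictions of $f$. The two minima defining $P_{k,t}(g)$ and $P_{k,t}(f)$ then range over the same multiset of values, so they coincide.

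Concretely, write $g=f\circ\varphi+\ell$ with $\varphi(x)=Mx+v$ an affine automorphism and $\ell$ affine. Let $A=w+E$ be a nonempty affine subspace of codimension $s\le t$. Its image is $\varphi(A)=\varphi(w)+ME$. Because $M$ is invertible, $\dim ME=\dim E$, so $\varphi(A)$ is again a nonempty affine subspace of codimension $s$. Since $\varphi^{-1}$ is also an affine automorphism, the map $A\mapsto\varphi(A)$ is a bijection on the set of nonempty affine subspaces of codimension at most $t$.

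Next, Proposition~\ref{prop:restr-equivalence} says that $g_{\mid A}$ and $f_{\mid\varphi(A)}$ are EA-equivalent functions in $n-s$ variables, once each subspace is identified with $\F_2^{n-s}$ through a chosen affine parametrization. Because $k$ is odd, Lemma~\ref{lem: EA equiv-nonhomomor} gives
\[
P_k(g_{\mid A})=P_k(f_{\mid\varphi(A)}).
\]
By the remark preceding Definition~\ref{def:restriction-nonhomomorphicity}, this value does not depend on the parametrizations chosen. Taking the minimum over all $A$ with $\codim A\le t$ and using the bijection $A\mapsto\varphi(A)$ gives $P_{k,t}(g)=P_{k,t}(f)$.

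The argument is essentially bookkeeping, and the only point needing care is well-definedness. We must make sure that the EA-equivalence furnished by Proposition~\ref{prop:restr-equivalence} is between functions on the same $\F_2^{n-s}$, and that the value of $P_k$ is independent of the parametrization. Both follow from the oddness of $k$ through Lemma~\ref{lem: EA equiv-nonhomomor}.
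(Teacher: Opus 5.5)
Your proposal is correct and matches the paper's proof: Proposition~\ref{prop:restr-equivalence} and Lemma~\ref{lem: EA equiv-nonhomomor} give $P_k(g_{\mid A})=P_k(f_{\mid\varphi(A)})$, and the codimension-preserving bijection $A\mapsto\varphi(A)$, with inverse $B\mapsto\varphi^{-1}(B)$, identifies the two families of values whose minima are taken. One small wording fix: the two restrictions live on the same $\F_2^{n-s}$ because $\varphi\circ\psi$ parametrizes $\varphi(A)$ on the same domain as $\psi$, not because $k$ is odd; oddness is needed only for the invariance of $P_k$ under EA-equivalence and hence its independence of the parametrization.
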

\begin{proof}
Write $g=f\circ\varphi+\ell$, where $\varphi$ is an affine automorphism and $\ell$ is affine. Proposition~\textup{\ref{prop:restr-equivalence}} and Lemma~\textup{\ref{lem: EA equiv-nonhomomor}} give $P_k(g_{\mid A})=P_k(f_{\mid\varphi(A)})$ for every affine subspace $A$. The map $A\mapsto\varphi(A)$ preserves dimension and is bijective, with inverse $B\mapsto\varphi^{-1}(B)$. Thus the two families of values $\{P_k(g_{\mid A})\mid\codim A\le t\}$ and $\{P_k(f_{\mid B})\mid\codim B\le t\}$ coincide. Their minima are $P_{k,t}(g)$ and $P_{k,t}(f)$, respectively, proving equality.
\end{proof} 
\begin{lemma}\label{lem:exact-codimension-minima}
For odd $k\ge3$ and $0\le t<n$, the minimum defining $P_{k,t}(f)$ is attained on an affine subspace of codimension exactly $t$.
\end{lemma}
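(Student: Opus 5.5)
Since every affine subspace of codimension $s<t$ contains affine subspaces of codimension $s+1$, it suffices to show monotonicity under restriction to a hyperplane. Precisely, we will show that for every affine space $B$ of dimension $d\ge 2$ and every function $g=f_{\mid B}$ there is an affine hyperplane $H$ of $B$ with $P_k(g_{\mid H})\le P_k(g)$. Iterating this from a minimizer $A$ of codimension $s\le t$ produces a chain $A\supset A_1\supset\cdots\supset A_{t-s}$ with non-increasing values of $P_k$. The final space $A_{t-s}$ has codimension exactly $t$ and, because $t<n$, dimension at least $1$. Its value is at most the minimum, so it attains it. Each step requires $\dim\ge 1$ for the hyperplane to be nonempty; dimension $1$ restricted to a point is never needed since $t\le n-1$.

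For the hyperplane step we work in coordinates $B\cong\F_2^d$; by Lemma~\ref{lem: EA equiv-nonhomomor} the choice of parametrization is irrelevant for odd $k$. Take the two parallel hyperplanes $H_0=\{x_d=0\}$ and $H_1=\{x_d=1\}$, and write $g_0,g_1$ for the restrictions in the variables $y=(x_1,\dots,x_{d-1})$. We have
\[
\NW_g(u,0)=\tfrac12\bigl(\NW_{g_0}(u)+\NW_{g_1}(u)\bigr),\qquad \NW_g(u,1)=\tfrac12\bigl(\NW_{g_0}(u)-\NW_{g_1}(u)\bigr).
\]
With $q=k+1$ even, let $a=\NW_{g_0}(u)$ and $b=\NW_{g_1}(u)$. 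The key inequality is
\[
\Bigl(\tfrac{a+b}{2}\Bigr)^q+\Bigl(\tfrac{a-b}{2}\Bigr)^q\le \tfrac12\bigl(a^q+b^q\bigr).
\]
It holds because the map $(a,b)\mapsto((a+b)/\sqrt2,(a-b)/\sqrt2)$ is orthogonal, so $\|\cdot\|_q\le\|\cdot\|_2$ gives $x^q+y^q\le(x^2+y^2)^{q/2}$. Alternatively, expand binomially: the left side equals $2^{1-q}\sum_{j\ \mathrm{even}}\binom{q}{j}a^{q-j}b^j$, and one compares this with the right side using $a^{q-j}b^j\le\frac{q-j}{q}a^q+\frac{j}{q}b^q$ (AM--GM) together with $\sum_{j\ \mathrm{even}}\binom qj=2^{q-1}$ and symmetry of the weights. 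Summing over $u$ yields $M_q(g)\le\frac12(M_q(g_0)+M_q(g_1))$. Hence $\max(M_q(g_0),M_q(g_1))\ge M_q(g)$, and so $\min_i P_k(g_i)\le P_k(g)$ by $P_k=\frac12(1-M_{k+1})$. Choosing $H=H_i$ for the larger moment finishes the step.

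The main obstacle is verifying the elementary inequality cleanly. I would try the AM--GM binomial averaging first, checking that the weights $\frac{q-j}{q}$ and $\frac jq$ summed with $\binom qj$ over even $j$ each give $2^{q-2}$, which yields exactly $\frac12(a^q+b^q)$ times $2^{q-1}\cdot 2^{1-q}$. As a fallback I would use the more conceptual norm-comparison argument after rescaling.
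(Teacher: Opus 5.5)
Your proposal is correct and follows the paper's proof: both start from a minimizer and iterate a hyperplane step, showing that the two parallel hyperplanes $x_d=0,1$ have average $(k+1)$-st Walsh moment at least that of the whole function, via the same even-power binomial inequality. The only difference is in verifying that inequality, where the paper's route is shorter: it writes the restricted coefficients as $\alpha\pm\beta$ with $\alpha,\beta$ the ambient coefficients, so it simply drops the nonnegative middle binomial terms and needs no AM--GM. If you instead use your norm-comparison fallback, note that $\|\cdot\|_q\le\|\cdot\|_2$ alone only bounds the left side by $\bigl(\tfrac{a^2+b^2}{2}\bigr)^{q/2}$, so you also need the power-mean step $\bigl(\tfrac{a^2+b^2}{2}\bigr)^{q/2}\le\tfrac{a^q+b^q}{2}$.
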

\begin{proof}
It suffices to show that every positive-dimensional restriction has a hyperplane restriction with no larger $P_k$. Represent it by a Boolean function $F(y,z)$, where $z\in\F_2$ and $y\in\F_2^{d-1}$, and put $F_e(y)=F(y,e)$. If $a=\NW_F(u,0)$ and $b=\NW_F(u,1)$, then direct summation over $z$ gives $\NW_{F_e}(u)=a+(-1)^e b$. Put $q=k+1$, which is even. The binomial theorem gives
\[
\frac{(a+b)^q+(a-b)^q}{2}
=a^q+b^q+\sum_{j=1}^{q/2-1}\binom q{2j}a^{q-2j}b^{2j}
\ge a^q+b^q.
\]
Sum over $u$. The average of the two restricted Walsh moments is at least the moment of $F$. Hence at least one of $F_0,F_1$ has no larger $P_k$, by~\eqref{firstformulaof pkf}. A minimizing subspace exists because the family of affine subspaces is finite. Starting with one of codimension $c\le t$, repeat the preceding construction $t-c$ times. At each step the dimension falls by one, and the value of $P_k$ does not increase. The process is possible because the final dimension is $n-t\ge1$. It gives a subspace $B$ of codimension exactly $t$ with $P_k(f_{\mid B})\le P_{k,t}(f)$. Conversely, $B$ belongs to the family defining $P_{k,t}(f)$, so $P_{k,t}(f)\le P_k(f_{\mid B})$. Both inequalities are equalities, proving attainment at codimension $t$.
\end{proof}

Note that 
\[P_{k}(f) = P_{k,0}(f)\ge P_{k,1}(f)\ge P_{k,2}(f)\ge \ldots \ge P_{k,n-1}(f)=0.\]
 Affine functions have $P_{k,t}=0$. Theorem~\ref{thm:strict-nonhomomorphicity} below proves that equality $P_k(f)=P_{k,1}(f)$ occurs only for affine functions. The bent-function example shows, however, that the decrease can be small.

 We recall the values of the $k$-th-order nonhomomorphicity for some simple functions. Before, let us recall that 
when $f$ is a  monomial function of algebraic degree $r\ge 2$,\begin{equation}\label{eq:nnl-monomial} \NW_f(a)=\delta_0(a)-2^{1-r}(-1)^{\w(a)}\end{equation}
 we have 
\begin{equation}\label{eq:nnl-monomial1}
P_k(f)=\frac12\left(1-\left((1-2^{1-r})^{k+1}+(2^r-1)2^{(k+1)(1-r)}\right)\right).
\end{equation}
Indeed, the term corresponding to $a=0$ is $(1-2^{1-r})^{k+1}$, while each of the other $2^r-1$ terms equals $2^{(k+1)(1-r)}$, since $k+1$ is even.
 When a function $f$ is a direct sum of two functions $g$ and $h$ in $n_1$ and $n_2$ variables, respectively, the normalized Walsh transform factorizes.
$\NW_f(a,b)=\NW_g(a)\NW_h(b)$.
Therefore, for odd $k$,
\begin{equation}\label{eq:nnl-direct-sum}
1-2P_k(f)=\bigl(1-2P_k(g)\bigr)\bigl(1-2P_k(h)\bigr),
\end{equation}
or equivalently,
\begin{equation}\label{eq:nnl-direct-sum-p}
P_k(f)=P_k(g)+P_k(h)-2P_k(g)P_k(h).
\end{equation}
More generally, when $f=f_1+\cdots+f_m$ is a direct sum of $m$ functions, repeated use of~\eqref{eq:nnl-direct-sum} gives
\begin{equation}\label{eq:nnl-direct-sum-many}
1-2P_k(f)=\prod_{i=1}^m\bigl(1-2P_k(f_i)\bigr).
\end{equation}
Equivalently,
\[
P_k(f)=\frac12\left(1-\prod_{i=1}^m\left(\sum_{a_i\in\F_2^{n_i}}\NW_{f_i}(a_i)^{k+1}\right)\right).
\]
Note that from Relation~\eqref{firstformulaof pkf}, $-1\leq \sum_{a\in\F_2^n}\NW_{f}(a)^{k+1}\leq 1$  meaning for $k$ odd that $0\leq \sum_{a\in\F_2^n}\NW_{f}(a)^{k+1}\leq 1$ and in this case we have from Relation~\eqref{eq:nnl-direct-sum-many}, for any subset $I\subseteq \{1, \ldots, m\}$, 
\begin{equation}\label{eq:nnl-direct-sum-inequality}
    P_k(f) \ge P_k\left(\sum_{i\in I}f_i\right).
\end{equation}
For a Boolean function in specified coordinates, $\var(f)$ denotes the set of indices of variables on which $f$ depends. For an affine hyperplane equation, $\var(H)$ is the support of its nonzero normal vector. A direct sum means that the summands have disjoint variable blocks. Unless unused variables are explicitly mentioned, a monomial direct sum uses all the variables in its monomials, so $n=\sum_i r_i$.

\section[Odd-order nonhomomorphicity of direct sums and restrictions]{$k$-th-order nonhomomorphicity of direct sums of monomials and restrictions}\label{sec:direct-sums-restrictions}

Recall that when $f$ is a 
monomial function of algebraic degree $r\ge 2$,  
 we have $W_f(a)=2^r\delta_0(a)-2(-1)^{\w(a)}\in\{2^r-2,\pm2\}$ which means $\NW_f(a)=\delta_0(a)-2^{1-r}(-1)^{\w(a)}$, where $\delta_0(a)=1$ for $a=0$ and $0$ otherwise.
 When a function $f$ is a direct sum of two functions $g$ and $h$ in $n$ and $m$ variables, respectively (that is, $f=g+h$ and $\var(g)\cap\var(h) = \emptyset$), it is known that $W_f(a,b) = W_g(a)W_h(b)$ for all $a\in\F_2^n$ and $b\in\F_2^m$ (see~\cite{cusickStanica2017} for example). This means we also have $\NW_f(a,b) = \NW_g(a)\NW_h(b)$. Hence, for $k$ odd, from Relation~\eqref{eq:nnl-monomial1}   the relation  \begin{eqnarray}\label{eq:fst-expre-sum-nhom}
     \sum_{a_i\in\F_2^{r_i}} \left(\delta_0(a_i)-2^{1-r_i}(-1)^{\w(a_i)}\right)^{k+1}=\left((1-2^{1-r_i})^{k+1}+(2^{r_i}-1)2^{(k+1)(1-r_i)}\right)\end{eqnarray}
 holds and  the following identity holds. 
 \begin{lemma}Let $k\ge3$ be odd and let $f=f_1+\cdots+f_m$ be  a direct sum of $m$ monomials of degree at least two with $\deg(f_i)=r_i$. We have \begin{align}\label{eq:DSM-pkf}
     P_{k,0}(f)=P_k(f) &=\dfrac{1}{2}\left(1-\sum_{(a_1,\ldots,a_m)\in\prod_i\F_2^{r_i}}\left( \prod_{i=1}^m\left(\delta_0(a_i)-2^{1-r_i}(-1)^{\w(a_i)}\right)\right)^{k+1}\right)\nonumber\\&= \frac{1}{2}\left(1-\prod_{i=1}^m\sum_{a_i\in\F_2^{r_i}} \left(\delta_0(a_i)-2^{1-r_i}(-1)^{\w(a_i)}\right)^{k+1}\right)\\&=\dfrac{1}{2}\left(1- \prod_{i=1}^{m}\left((1-2^{1-r_i})^{k+1}+(2^{r_i}-1)2^{(k+1)(1-r_i)}\right)\right)\nonumber.
\end{align}
 \end{lemma}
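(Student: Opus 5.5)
The plan is to derive the three displayed expressions in order, starting from equation~\eqref{firstformulaof pkf} and the factorization of the normalized Walsh transform over direct sums. The first equality, $P_{k,0}(f)=P_k(f)$, is immediate from Definition~\ref{def:restriction-nonhomomorphicity}. With $t=0$, the only affine subspace of codimension at most $0$ is $\F_2^n$ itself, so the minimum is just $P_k(f)$.

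For the second expression, first apply the two-summand factorization $\NW_f(a,b)=\NW_g(a)\NW_h(b)$ repeatedly, by induction on $m$. This gives $\NW_f(a_1,\ldots,a_m)=\prod_{i=1}^m \NW_{f_i}(a_i)$, where $a_i\in\F_2^{r_i}$ ranges over the coordinate block of $f_i$. Here we use that a monomial direct sum uses exactly $n=\sum_i r_i$ variables. Next substitute the monomial formula~\eqref{eq:nnl-monomial}, $\NW_{f_i}(a_i)=\delta_0(a_i)-2^{1-r_i}(-1)^{\w(a_i)}$, which is valid because $r_i\ge2$. Plugging this into~\eqref{firstformulaof pkf} gives the first displayed line.

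For the third expression, note that the $(k+1)$-st power of a product is the product of the $(k+1)$-st powers. The sum over the product set $\prod_i\F_2^{r_i}$ of a product of functions, each depending only on its own $a_i$, then splits as $\prod_i\sum_{a_i}$ by distributivity. That gives the middle line.

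The final line follows by evaluating each factor with~\eqref{eq:fst-expre-sum-nhom}. The $a_i=0$ term equals $(1-2^{1-r_i})^{k+1}$. Each of the $2^{r_i}-1$ nonzero $a_i$ contributes $\bigl(\mp2^{1-r_i}\bigr)^{k+1}=2^{(k+1)(1-r_i)}$, because $k+1$ is even and the sign $(-1)^{\w(a_i)}$ disappears. Alternatively, the middle and last lines can be read off directly from~\eqref{eq:nnl-direct-sum-many} together with~\eqref{eq:nnl-monomial1}.

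I do not expect a genuine obstacle. The only points needing care are that the factorization requires disjoint variable blocks covering all $n$ variables, and that the parity of $k+1$ is what removes the signs. Oddness of $k$ is also what makes $P_k$ well defined as in~\eqref{firstformulaof pkf} without absolute values.
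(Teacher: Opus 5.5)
Your proposal is correct and follows essentially the same route as the paper, which justifies this lemma by the direct-sum factorization $\NW_f(a,b)=\NW_g(a)\NW_h(b)$, the monomial formula~\eqref{eq:nnl-monomial}, and the per-block evaluation~\eqref{eq:fst-expre-sum-nhom} substituted into~\eqref{firstformulaof pkf}. One small correction to your closing remark: \eqref{firstformulaof pkf} holds for every $k\ge2$, so oddness of $k$ is not needed for it; oddness is needed only to remove the signs in the last line and to make $P_{k,0}$ defined via Definition~\ref{def:restriction-nonhomomorphicity}.
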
 
Note that from~\eqref{eq:fst-expre-sum-nhom},  if we set $S(r) = \left(1 - 2^{1-r}\right)^k + (2^r - 1) 2^{k(1-r)}$ the following identity holds.
\begin{lemma}\label{prop:encrease-sum}
  Let $k \geq 2$ and $r \geq 2$ be integers. Define
$S(r) = \left(1 - 2^{1-r}\right)^k + (2^r - 1) 2^{k(1-r)}$.
Then the function $S$ is increasing with respect to $r$, that is,
\[
 S(r+1) \geq S(r) \quad \text{for all } r \geq 2.
\]
\end{lemma}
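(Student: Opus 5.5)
We substitute $x=2^{1-r}$, so that $x\in(0,\tfrac12]$ for $r\ge2$ and $2^r=2/x$. Then
\[
S(r)=g(x):=(1-x)^k+\Bigl(\frac2x-1\Bigr)x^k=(1-x)^k+2x^{k-1}-x^k .
\]
Passing from $r$ to $r+1$ replaces $x$ by $x/2$. The statement is therefore equivalent to the one-variable polynomial inequality
\[
D(x):=g(x/2)-g(x)\ge0 \qquad\text{for } x\in(0,\tfrac12],\ k\ge2 .
\]
It does not suffice to show that $g$ is decreasing. Indeed, $g'(\tfrac12)=2^{1-k}(2k-4)>0$ for $k>2$, so the comparison between $x$ and $x/2$ must be made directly.

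We split $D$ into a positive part coming from the first term and a negative part coming from the power terms:
\[
D(x)=\Bigl[(1-\tfrac x2)^k-(1-x)^k\Bigr]-x^{k-1}\Bigl[2(1-2^{1-k})-x(1-2^{-k})\Bigr].
\]
The case $k=2$ is trivial, since $g\equiv1$ and hence $D\equiv0$.

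For $k\ge3$, we bound the first bracket from below using $a^k-b^k=(a-b)\sum_{j=0}^{k-1}a^{k-1-j}b^j$ with $a=1-\tfrac x2$, $b=1-x$ and $a-b=x/2$. Dividing by $x$, it then suffices to show
\[
\tfrac12\sum_{j=0}^{k-1}a^{k-1-j}b^j\;\ge\;x^{k-2}\Bigl[2(1-2^{1-k})-x(1-2^{-k})\Bigr].
\]
The right-hand side is at most $2x^{k-2}\le 2^{3-k}$, and it is tiny for small $x$. On the left, every term is at least $b^{k-1}\ge 2^{1-k}$, and the terms involving $a\ge\tfrac34$ are substantially larger. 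Replacing $b$ by its minimum is too crude: near $x=\tfrac12$ the bound $\tfrac k2\,b^{k-1}\ge k2^{-k}$ beats $6\cdot2^{-k}$ only for $k\ge6$. We therefore keep the full geometric sum. At $x=\tfrac12$ the left side becomes
\[
\tfrac12\sum_{j=0}^{k-1}(3/4)^{k-1-j}(1/2)^j=2\bigl[(3/4)^k-(1/2)^k\bigr].
\]
Since $(3/4)^k$ decays much more slowly than $2^{-k}$, this comfortably dominates the right side for every $k\ge3$. For general $x$, we argue monotonically: writing $u=(1-x)/x\ge1$ and dividing by $x^{k-2}$, the left side becomes increasing in $u$ while the right side stays bounded by $2$. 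So the worst case is $x=\tfrac12$, which was just checked. The two smallest cases $k=3,4$ can be confirmed directly, as a sanity check and as a fallback. For example, with $k=3$ one gets $D(x)=\tfrac{3}{4}x(1-x)\cdot\bigl(1+O(x)\bigr)$ after expansion, and the positivity on $(0,\tfrac12]$ is explicit.

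The main obstacle is the middle step. One must choose a lower bound for $(1-\tfrac x2)^k-(1-x)^k$ sharp enough to beat the $x^{k-1}$ terms uniformly in $k$ near $x=\tfrac12$. The plan is to use the exact geometric-sum expression, reduce via the substitution $u=(1-x)/x$ to the endpoint $x=\tfrac12$, and verify the endpoint inequality $2[(3/4)^k-2^{-k}]\ge 2^{2-k}\bigl(\tfrac32-\tfrac{7}{2}\,2^{-k}\bigr)$ for all $k\ge3$. This last inequality is equivalent to $(3/2)^k\ge 4-5\cdot2^{-k}$, which holds from $k=4$ on, with $k=3$ handled directly.
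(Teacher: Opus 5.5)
Your setup is correct up to the inequality $\tfrac12\sum_j a^{k-1-j}b^j\ge x^{k-2}\bigl[2(1-2^{1-k})-x(1-2^{-k})\bigr]$, but the step that removes the $x$-dependence is not valid. After dividing by $x^{k-2}$ and putting $u=(1-x)/x$, the geometric sum collapses, and the two sides become
\[
\frac{(u+\frac12)^k-u^k}{u+1}\qquad\text{and}\qquad 2(1-2^{1-k})-\frac{1-2^{-k}}{u+1}.
\]
Both sides increase with $u$. So ``left side increasing, right side bounded'' does not make $u=1$ (that is, $x=\tfrac12$) the worst case. Comparing the two sides only at $x=\tfrac12$ proves nothing for $u>1$. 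The decoupled argument would need the left side at $u=1$, namely $\tfrac12\bigl((3/2)^k-1\bigr)$, to exceed $\sup_u$ of the right side, namely $2(1-2^{1-k})$. That holds for $k\ge4$ but fails at $k=3$, where $19/16<3/2$.

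Your $k=3$ fallback does not repair this. In fact $D(x)=\tfrac32x-\tfrac{15}{4}x^2+\tfrac74x^3=\tfrac x4(7x^2-15x+6)$, not $\tfrac34x(1-x)(1+O(x))$. In any case, an $O(x)$ remainder cannot certify positivity all the way to $x=\tfrac12$. There is also a small slip: your endpoint inequality is equivalent to $(3/2)^k\ge 4-7\cdot2^{-k}$, not $4-5\cdot 2^{-k}$. With the correct constant it already holds at $k=3$.

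The fix is to divide by $x^{k-1}$ instead, that is, to multiply through by $u+1$. The claim then becomes
\[
E(u):=(u+\tfrac12)^k-u^k+1-2^{-k}-2(1-2^{1-k})(u+1)\ge0\qquad(u\ge1).
\]
Now the whole difference is monotone. For $k\ge3$ and $u\ge1$,
\[
E'(u)=k\bigl[(u+\tfrac12)^{k-1}-u^{k-1}\bigr]-2(1-2^{1-k})\ge\tfrac{k(k-1)}{2}-2>0,
\]
so the minimum is $E(1)=(3/2)^k-4+7\cdot2^{-k}>0$.

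This $E$ is exactly $2^{-k}$ times the paper's $\Delta$, with $y=u=2^{r-1}-1$. So once repaired, your route is the paper's computation in different coordinates. The only difference is in how the $y$-dependence is handled. The paper bounds $y^j\ge y$ in the binomial expansion, which gives a linear lower bound with positive slope $3^k-3\cdot2^k+3$. It then evaluates at $y=1$ to get $3^k-4\cdot2^k+7\ge0$.
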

\begin{proof}
 Set $x=2^{r-1}$, so that $x\ge2$.
Then $2^{1-r} = \frac{1}{x}$ and $2^r = 2x$, hence
\[
 S(r) = \left(1 - \frac{1}{x}\right)^k + \frac{2x - 1}{x^k}
 = \frac{(x-1)^k + 2x - 1}{x^k}.
\]
Define
$g(x) = \frac{(x-1)^k + 2x - 1}{x^k}$.
Then $S(r) = g(2^{r-1})=g(x)$ and $S(r+1) = g(2^{r})=g(2x)$. Since $2^{r}$ is increasing in $r$, it suffices to prove that
$g(2x)\ge g(x)$ for all $x\ge2$. 
We compute
\[
 g(2x) - g(x)
 = \frac{(2x-1)^k + 4x - 1}{2^k x^k} - \frac{(x-1)^k + 2x - 1}{x^k}.
\]
Multiplying both sides by $2^k x^k > 0$, we define
\[
 \Delta = (2x-1)^k - 2^k (x-1)^k - (2^{k+1} - 4)x + (2^k - 1).
\]
Let $y = x - 1 \geq 1$. Then $2x - 1 = 2y + 1$, and
\begin{equation}\label{eq:delta-sum-dsm}
 \Delta = (2y+1)^k - 2^k y^k - (2^{k+1} - 4)(y+1) + (2^k - 1).
\end{equation}
Expanding $(2y+1)^k$ using the binomial theorem,
$(2y+1)^k = \sum_{j=0}^k \binom{k}{j} (2y)^j$.
The term corresponding to $j = k$ cancels with $-2^k y^k$ in~\eqref{eq:delta-sum-dsm}, which  becomes
\[
 \Delta = \sum_{j=0}^{k-1} \binom{k}{j} (2y)^j - (2^{k+1} - 4)y + (3 - 2^k).
\] 
Since $y \geq 1$, we have $y^j \geq y$ for all $j \geq 1$. Therefore,
\begin{equation}\label{ineq:sum-dsm-F2n}
 \Delta \geq 1  + y \sum_{j=1}^{k-1} \binom{k}{j} 2^j - (2^{k+1} - 4)y + (3 - 2^k).
\end{equation}
Using the identity
$\sum_{j=0}^k \binom{k}{j} 2^j = 3^k$,
we obtain
$\sum_{j=1}^{k-1} \binom{k}{j} 2^j = 3^k - 1 - 2^k$.
Substituting this in~\eqref{ineq:sum-dsm-F2n} yields
\[
 \Delta \geq y (3^k - 3 \cdot 2^k + 3) + (4 - 2^k).
\]
The coefficient $3^k-3\cdot2^k+3$ of $y$ is zero when $k=2$. For $k\ge3$, $(3/2)^k\ge27/8>3$, so that coefficient is positive. Since $y\ge1$, replacing $y$ by one therefore gives
\[
 \Delta \geq 3^k - 4 \cdot 2^k + 7.
\]
Define
$h(k) = 3^k - 4 \cdot 2^k + 7$.
One can easily check that $h(2) = 0$ and $h(3)=2 > 0$. Moreover,
\[
 h(k+1) - h(k) = 2 \cdot 3^k - 4 \cdot 2^k = 2(3^k - 2^{k+1}) > 0 \quad \text{for } k \geq 3.
\]
Thus $h(k) \geq 0$ for all $k \geq 2$, and therefore $\Delta \geq 0$.

Hence $g(2x)\ge g(x)$ for all $x\ge2$, so $S(r+1)\ge S(r)$.
This completes the proof.
\end{proof}
For later use, denote the monomial moment factor by
$A_r(q)=(1-2^{1-r})^q+(2^r-1)2^{q(1-r)}$.
Thus $A_r(q)=M_q(x_1\cdots x_r)$ for even $q\ge2$.

\subsection[Lower bounds of the restriction parameter]{Lower bounds of $P_{k,t}$}
 The direct-sum formula gives the following lower bound for $P_{k,0}(f)$.
 \begin{lemma} Let $f=f_1+\cdots +f_m$ be a function in $n$ variables that is the direct sum of the monomials $f_1, \ldots, f_m$ with $m\ge 2$ and $\deg(f_i)=r_i\geq 2$ for all $i\in\{1,\ldots,m\}$. For $k\geq 2$ odd,  we have $$P_{k,0}(f)\geq\dfrac{1}{2}\left(1-2^n\left( \prod_{i=1}^m\left(1-2^{1-r_i}\right)\right)^{k+1}\right).$$   
 \end{lemma}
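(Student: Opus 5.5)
The plan is to bound the Walsh moment termwise by its largest term. By~\eqref{firstformulaof pkf} with $k$ odd, $P_{k,0}(f)=P_k(f)=\frac12\bigl(1-M_{k+1}(f)\bigr)$. The claimed inequality is therefore equivalent to
\[
M_{k+1}(f)=\sum_{a\in\F_2^n}\NW_f(a)^{k+1}\le 2^n\Bigl(\prod_{i=1}^m\bigl(1-2^{1-r_i}\bigr)\Bigr)^{k+1}.
\]
Since the sum has exactly $2^n$ terms, each nonnegative because $k+1$ is even, it suffices to show $\max_a|\NW_f(a)|\le\prod_{i=1}^m(1-2^{1-r_i})$.

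To get this pointwise bound, I will use the factorization $\NW_f(a_1,\ldots,a_m)=\prod_i\NW_{f_i}(a_i)$ for direct sums, together with the monomial formula~\eqref{eq:nnl-monomial}. These give $|\NW_{f_i}(a_i)|\in\{1-2^{1-r_i},\,2^{1-r_i}\}$: the first value at $a_i=0$, the second otherwise. Because $r_i\ge2$, we have $2^{1-r_i}\le\frac12\le1-2^{1-r_i}$. So each factor satisfies $|\NW_{f_i}(a_i)|\le 1-2^{1-r_i}$, and multiplying these nonnegative bounds gives the required bound on $|\NW_f(a)|$ for every $a$.

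Combining the two steps, $M_{k+1}(f)\le 2^n\max_a|\NW_f(a)|^{k+1}$, which is the desired inequality after substituting into $P_k(f)=\frac12(1-M_{k+1}(f))$. There is no real obstacle. The only points needing care are that $k+1$ is even, so that $\NW_f(a)^{k+1}=|\NW_f(a)|^{k+1}$, and the elementary comparison $2^{1-r}\le 1-2^{1-r}$ for $r\ge2$. The hypothesis $m\ge2$ is not needed for this argument. The maximum is attained at $a=0$, where $\NW_f(0)=\prod_i(1-2^{1-r_i})$.
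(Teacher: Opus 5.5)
Your proposal is correct and matches the paper's proof: the monomial spectrum and multiplicativity give $|\NW_f(a)|\le\NW_f(0)=\prod_i(1-2^{1-r_i})$ at every frequency, and bounding each of the $2^n$ terms of the even moment by this maximum to the power $k+1$ yields the bound. You are also right that the hypothesis $m\ge2$ is not used.
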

\begin{proof}
For a monomial of degree $r\ge2$, the zero-frequency normalized coefficient is $1-2^{1-r}$ and every nonzero coefficient has magnitude $2^{1-r}\le1-2^{1-r}$. Multiplicativity therefore gives $|\NW_f(u)|\le\NW_f(0)=\prod_i(1-2^{1-r_i})$ at every frequency. There are $2^n$ frequencies. Bound every term in~\eqref{firstformulaof pkf} by this maximum raised to $k+1$ to obtain the result.
\end{proof}
 Now we consider the restriction $f_{|A}$ of $f$ to a affine subspace $A$ of codimension $t$.
 We use the technique as in \cite{SergeAnaPante2025} to provide a lower bound of $P_{k,t}(f)$. We start with $t=1$.
\begin{proposition}\label{pro:non-hom-dsm-hyp} Let $n$ and $k$ be two integers with $k\geq 2$ odd.  Let $f=f_1+\cdots +f_m$ be a function in $n$ variables that is the direct sum of the monomials $f_1, \ldots, f_m$ with $m\ge 2$, $\deg(f_1)\ge\ldots \ge \deg(f_m)\ge 2$ and $\deg(f_i)=r_i\geq 2$ for all $i\in\{1,\ldots,m\}$. 
Then, for every hyperplane $H$,
$$P_{k}(f_{|H})\geq\dfrac{1}{2}\left(1-2^{n-1}\left( \prod_{i=1}^{m-1}\left(1-2^{1-r_i}\right)\right)^{k+1}\right),$$
In particular,
$$P_{k,1}(f)\geq\dfrac{1}{2}\left(1-2^{n-1}\left( \prod_{i=1}^{m-1}\left(1-2^{1-r_i}\right)\right)^{k+1}\right).$$
 \end{proposition}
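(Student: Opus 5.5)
The plan is to bound every normalized Walsh coefficient of $f_{\mid H}$ uniformly by
\[
\Pi':=\prod_{i=1}^{m-1}\bigl(1-2^{1-r_i}\bigr).
\]
Since $f_{\mid H}$ has $n-1$ variables, it has $2^{n-1}$ frequencies. Then~\eqref{firstformulaof pkf} with $k+1$ even gives $\sum_u\NW_{f_{\mid H}}(u)^{k+1}\le 2^{n-1}(\Pi')^{k+1}$, which is the first claim. The statement about $P_{k,1}(f)$ then follows by minimizing over hyperplanes, using Lemma~\ref{lem:exact-codimension-minima} to dispose of codimension $0$. It also uses that $P_k(f)$ itself satisfies the bound, since by the previous lemma it is even larger. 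Because $k$ is odd, we may use any convenient affine parametrization of $H$ (Lemma~\ref{lem: EA equiv-nonhomomor}).

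\textbf{Step 1: hyperplane restriction formula.} Write $H=\{x: c\cdot x=e\}$ with $c\neq0$. Choose a pivot coordinate of $c$ and parametrize $H$ by the remaining $n-1$ coordinates. Then for each restricted frequency $u$ there is an ambient frequency $v$ such that, by summing the indicator $\frac12(1+(-1)^{c\cdot x+e})$ against the character,
\[
\NW_{f_{\mid H}}(u)=\NW_f(v)+(-1)^e\NW_f(v+c).
\]
This is the same computation as in the proof of Lemma~\ref{lem:exact-codimension-minima}: the factor $\tfrac12$ from the indicator cancels against the normalization $2^{-(n-1)}$. Hence $|\NW_{f_{\mid H}}(u)|\le|\NW_f(v)|+|\NW_f(v+c)|$.

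\textbf{Step 2: block-wise bound.} Split $v=(v_1,\dots,v_m)$ and $c=(c_1,\dots,c_m)$ by the monomial blocks. Write $N_i(w)=\delta_0(w)-2^{1-r_i}(-1)^{\w(w)}$, and note $|N_i(w)|\le 1-2^{1-r_i}$ for all $w$. Since $c\neq0$, some block $j$ has $c_j\neq0$. Multiplicativity gives
\[
|\NW_f(v)|+|\NW_f(v+c)|\le\prod_{i\ne j}\bigl(1-2^{1-r_i}\bigr)\cdot\bigl(|N_j(v_j)|+|N_j(v_j+c_j)|\bigr).
\]
Because $v_j\neq v_j+c_j$, at most one of them is $0$. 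So the last factor is at most $(1-2^{1-r_j})+2^{1-r_j}=1$.

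\textbf{Step 3: choice of the dropped block.} We have $\prod_{i\ne j}(1-2^{1-r_i})=\prod_{i}(1-2^{1-r_i})/(1-2^{1-r_j})$. This is largest when $1-2^{1-r_j}$ is smallest, i.e.\ when $r_j$ is minimal, namely $r_m$. Hence the quantity in Step 2 is at most $\Pi'$, concluding the argument.

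I expect the main care to lie in Step 1, namely checking that the reparametrization of $H$ really produces the two ambient coefficients $v$ and $v+c$, with the correct normalization and sign. The remaining steps are routine.
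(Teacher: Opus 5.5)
Your proof is correct, but it takes a different route from the paper: you work on the Fourier side, whereas the paper works by substitution.

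The paper solves the equation of $H$ for a variable of a block $f_j$ that the equation meets, and then applies Lemma~\ref{lem:ineq-nwrest}. Each restricted coefficient is then a signed average over $y$ of values $\NW_h(v+g_1(y)b)$, where $h=\sum_{i\ne j}f_i$. Hence it is at most $\NW_h(0)=\prod_{i\ne j}(1-2^{1-r_i})$.

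You instead do three things:
\begin{itemize}
\item you rederive the two-term identity $\NW_{f_{\mid H}}(u)=\NW_f(v)+(-1)^e\NW_f(v+c)$, which is the codimension-one case of Theorem~\ref{thm:restriction-walsh-identity} (with your pivot parametrization the sign factor $(-1)^{b\cdot x_0}$ is indeed $1$);
\item you factor both ambient coefficients over the blocks;
\item you use that two distinct frequencies of one monomial block carry total weight at most $(1-2^{1-r_j})+2^{1-r_j}=1$.
\end{itemize}
Both routes reach the same per-hyperplane bound $\prod_{i\ne j}(1-2^{1-r_i})$. From there, your Step~3 and the count of $2^{n-1}$ coefficients match the paper.

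Your route is self-contained and is essentially the mechanism of Theorem~\ref{thm:B12}. After Step~1 you could even conclude at once from $|\NW_f(v)|+|\NW_f(v+c)|\le B_1+B_2=\prod_{i=1}^{m-1}(1-2^{1-r_i})$, as computed in the example following that theorem.

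What the paper's coupling route buys is generality. In codimension $t$ the spectral identity has $2^t$ terms, and the triangle inequality alone loses. For instance, if the normal space $U$ lies inside the coordinates of a single block $f_j$, it only gives $\prod_{i\ne j}(1-2^{1-r_i})\,(1+(2^t-2)2^{1-r_j})$. The coupling form gives $\prod_{i\ne j}(1-2^{1-r_i})$ directly, and it also feeds the moment contraction behind Theorem~\ref{thm:Pkf}.

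One small correction concerns your aside on codimension $0$. If by ``the previous lemma'' you mean the unnumbered $P_{k,0}$ bound, that bound does not show that $P_k(f)$ meets the hyperplane bound. The quantity $\frac12\bigl(1-2^n(\prod_{i=1}^m(1-2^{1-r_i}))^{k+1}\bigr)$ dominates the hyperplane bound only when $2(1-2^{1-r_m})^{k+1}\le1$, and this fails, for example, for $r_m=4$ and $k=3$. The remark is unnecessary anyway, because Lemma~\ref{lem:exact-codimension-minima} already gives $P_{k,1}(f)=\min_H P_k(f_{\mid H})$.
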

 \begin{proof}
Write $H=\{x\mid u\cdot x=\varepsilon\}$ with $u\ne0$. If the equation meets a monomial block, choose such a block $f_j$ and solve for one of its variables. Put $g=f_j$ and $h=\sum_{i\ne j}f_i$. Lemma~\textup{\ref{lem:ineq-nwrest}} shows that each restricted coefficient has magnitude at most $\max_v|\NW_h(v)|$. For monomial direct sums, this maximum is
\[
\NW_h(0)=\prod_{i\ne j}(1-2^{1-r_i})
\le\prod_{i=1}^{m-1}(1-2^{1-r_i}).
\]
The last inequality uses $r_1\ge\cdots\ge r_m$, so the product of $m-1$ factors is largest when the smallest factor is omitted. If the equation meets only unused variables, the restriction represents $f$ with fewer unused variables, and its largest normalized coefficient is instead $\prod_{i=1}^m(1-2^{1-r_i})$, which satisfies the same upper bound.

There are $2^{n-1}$ restricted coefficients. Because $k+1$ is even, equation~\eqref{firstformulaof pkf} gives
\[
\begin{aligned}
P_k(f_{\mid H})
&=\frac12\left(1-\sum_{u'}|\NW_{f_{\mid H}}(u')|^{k+1}\right)\\
&\ge\frac12\left(1-2^{n-1}\left(\prod_{i=1}^{m-1}(1-2^{1-r_i})\right)^{k+1}\right).
\end{aligned}
\]
Taking the minimum over hyperplanes and using Lemma~\textup{\ref{lem:exact-codimension-minima}} proves the bound for $P_{k,1}(f)$.
\end{proof}
In general, we have
 \begin{proposition}\label{pro:non-hom-dsm-aff} Let $n$, $t$ and $k$ be three  integers with $0\le t<m$ and $k\ge3$ odd.  Let $f=f_1+\cdots +f_m$ be a function in $n$ variables that is the direct sum of the monomials $f_1, \ldots, f_m$ with $m\ge 2$, $\deg(f_1)\ge\ldots \ge \deg(f_m)\ge 2$ and $\deg(f_i)=r_i\geq 2$ for all $i\in\{1,\ldots,m\}$. Then, for every affine space  $A$ of codimension $t$, $$P_k(f_{|A})\geq\dfrac{1}{2}\left(1-2^{n-t}\left( \prod_{i=1}^{m-t}\left(1-2^{1-r_i}\right)\right)^{k+1}\right),$$
 In particular, 
 $$P_{k,t}(f)\geq\dfrac{1}{2}\left(1-2^{n-t}\left( \prod_{i=1}^{m-t}\left(1-2^{1-r_i}\right)\right)^{k+1}\right),$$
 \end{proposition}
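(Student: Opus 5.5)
We prove the bound by showing that every restricted normalized Walsh coefficient has absolute value at most $\prod_{i=1}^{m-t}(1-2^{1-r_i})$. Since $f_{\mid A}$ has $2^{n-t}$ coefficients and $k+1$ is even, bounding each term of~\eqref{firstformulaof pkf} by this maximum raised to $k+1$ gives the claim. The statement for $P_{k,t}(f)$ then follows by taking the minimum over codimension-$t$ spaces; Lemma~\ref{lem:exact-codimension-minima} guarantees that this minimum is attained at codimension exactly $t$.

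To obtain the coefficient bound, we argue by induction on $t$, extending the pivot argument of Proposition~\ref{pro:non-hom-dsm-hyp}. Write $A$ as the solution set of $t$ independent affine equations. We perform Gaussian elimination block by block. Choose an equation involving a variable of some monomial block $f_{j_1}$, and solve for that variable. Substitute it into the remaining equations, and repeat the process with the next equation and another block.

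At each pivot step in a fresh block, Lemma~\ref{lem:ineq-nwrest} (the contraction inequality) should show the following. After fixing a variable of the block $g=f_j$ by an affine expression in the other variables, every restricted coefficient is bounded in magnitude by $\max_v|\NW_h(v)|$, where $h$ is the direct sum of the remaining blocks. The restriction of the remaining equations is then an affine restriction of codimension $t-1$ for a function built from $h$ together with the leftover variables of block $j$. Iterating this, we expect that after $t$ pivots in distinct blocks $j_1,\dots,j_t$, each coefficient is at most
\[
\prod_{i\notin\{j_1,\dots,j_t\}}(1-2^{1-r_i})\le\prod_{i=1}^{m-t}(1-2^{1-r_i}).
\]
The final inequality holds because $1-2^{1-r}$ increases with $r$ and $r_1\ge\cdots\ge r_m$. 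Omitting the $t$ smallest factors therefore maximizes the product.

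The main obstacle is that the iteration is not literally a restriction of a direct sum of monomials. After the first pivot, the leftover variables of block $j_1$ form a function coupled affinely to the other blocks. Moreover, later equations may only involve variables already used as pivots, or leftover variables of already-pivoted blocks. To handle this, we would strengthen the inductive hypothesis into a statement of the following form. Let $F=G+\sum_{i\in S}f_i$, where $G$ is an arbitrary Boolean function on the leftover variables, and let $F$ be restricted by $s$ affine equations. Then each coefficient is at most $\max$ over choices of $s'\le s$ blocks $T\subseteq S$ of $\prod_{i\in S\setminus T}\NW_{f_i}(0)$, using $|\NW_G|\le1$.

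Under this strengthened hypothesis, an equation that meets no untouched block only reduces $G$'s variables, and we bound it via the average-type or trivial bound $|\cdot|\le1$ on $G$. Such a step drops no factor, which is only better. We would first try to derive exactly this more general contraction lemma from Lemma~\ref{lem:ineq-nwrest} applied with $g$ a block and $h$ containing $G$. We would then check that the untouched blocks' maximal coefficient remains $\prod\NW_{f_i}(0)$ by multiplicativity.
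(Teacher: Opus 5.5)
Your outer argument matches the paper's. You bound every restricted coefficient by $\max_v|\NW_h(v)|=\NW_h(0)$, where $h$ is the sum of the untouched blocks. At most $t$ blocks are touched, so $\NW_h(0)\le\prod_{i=1}^{m-t}(1-2^{1-r_i})$, and summing $2^{n-t}$ even powers finishes the proof. The gap is in how you obtain the coefficient bound for $t\ge 2$. Your strengthened hypothesis $F=G+\sum_{i\in S}f_i$, with $G$ a function of the leftover variables only, is a direct sum, and the first pivot already destroys that form. By the computation in Lemma~\ref{lem:ineq-nwrest}, solving $x_1=a\cdot y+b\cdot z+\varepsilon$ gives $F(y,z)=G_0(y)+g_1(y)\,b\cdot z+h(z)$. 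The cross term $g_1(y)\,b\cdot z$ mixes leftover and untouched variables; for instance, $x_1x_2+z_1z_2$ restricted to $x_1=z_1$ becomes $z_1(x_2+z_2)$. So $F$ is no longer a fresh block plus a function on disjoint variables, and Lemma~\ref{lem:ineq-nwrest} cannot be re-applied at the second step. Likewise, the bound via $|\NW_G|\le1$ presupposes a factorization $\NW_F=\NW_G\prod_{i\in S}\NW_{f_i}$ that does not exist.

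The missing idea is the affine-coupling invariant $F(y,z)=G_0(y)+G(y)\cdot z+h(z)$, with $h=\sum_{i\in S}f_i$ and $G_0,G$ arbitrary. You also need a pivot rule: whenever the current substituted equation contains a variable $z_{j,1}$ of an untouched block, pivot on it. In that case write $f_j=z_{j,1}B+C$, with $B,C$ depending only on the rest of block $j$. Move those variables into $y$ and check that every term stays affine in the remaining $z$. Otherwise the equation involves only $y$, and $G_0,G$ are merely reparametrized. Pivoting on a leftover variable while $z$ is present would put $z$ nonlinearly into $G_0$ and break the form. The coefficient bound then comes directly from the invariant rather than from iterating the hyperplane lemma. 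Summing over $z$ first gives $\NW_F(u,v)=2^{-a}\sum_y(-1)^{G_0(y)+u\cdot y}\NW_h(v+G(y))$, hence $|\NW_F(u,v)|\le\max_w|\NW_h(w)|$. This is exactly Lemma~\ref{lem:ineq-nwrest-aff}, which the paper's proof invokes. There the elimination is global and chooses all first pivots in fresh blocks before any repeated pivot. Your sequential one-equation-at-a-time version works equally well once the invariant is corrected.
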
 
 \begin{proof}
Let $A$ have codimension exactly $t$. Apply Lemma~\textup{\ref{lem:ineq-nwrest-aff}} to choose the pivots. At most $t$ monomial blocks are affected, so at least $m-t$ remain in the untouched sum $h$. The coefficient estimate in that lemma gives
\begin{equation}\label{eq1:general-form-rest-k}
\max_{u,v}|\NW_{f_{\mid A}}(u,v)|\le\max_v|\NW_h(v)|.
\end{equation}
For a direct sum of monomials the right side is $\NW_h(0)$. Each zero-frequency factor belongs to $(0,1)$, so deleting factors increases their product. Among products of $m-t$ factors, the largest uses the $m-t$ highest degrees. Therefore
\begin{equation}\label{eq2:general-form-rest-k}
\NW_h(0)\le\prod_{i=1}^{m-t}(1-2^{1-r_i}).
\end{equation}
This argument also covers a pivot in an unused variable, which need not affect any monomial. The $2^{n-t}$ restricted coefficients and~\eqref{firstformulaof pkf} now give
\[
P_k(f_{\mid A})\ge\frac12\left(1-2^{n-t}\left(\prod_{i=1}^{m-t}(1-2^{1-r_i})\right)^{k+1}\right).
\]
By Lemma~\textup{\ref{lem:exact-codimension-minima}}, the minimum over codimensions at most $t$ equals the minimum at codimension exactly $t$. This proves the asserted bound for $P_{k,t}(f)$.
\end{proof}


To refine the hyperplane bound, we first estimate restricted Walsh coefficients for a direct sum of two functions.
\begin{lemma}\textup{\cite[Proof of Theorem~3]{SergeAnaPante2025}}\label{lem:ineq-nwrest}
Let $f(x,z)=g(x)+h(z)$, where $x=(x_1,y)\in\F_2^r$ and $z\in\F_2^{n-r}$. Suppose a hyperplane $H$ is defined by
\begin{equation}\label{eq:generalresultforDS-1-lem}
x_1=a\cdot y+b\cdot z+\varepsilon.
\end{equation}
Write $g(x_1,y)=x_1g_1(y)+g_2(y)$. For each $u\in\F_2^{r-1}$ and $v\in\F_2^{n-r}$,
\[
|\NW_{f_{\mid H}}(u,v)|
\le 2^{-(r-1)}\sum_y|\NW_h(v+g_1(y)b)|.
\]
\end{lemma}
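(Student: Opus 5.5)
The plan is a direct computation: parametrize $H$ by the free coordinates $(y,z)\in\F_2^{r-1}\times\F_2^{n-r}$, substitute the hyperplane equation~\eqref{eq:generalresultforDS-1-lem} into $f$, and show that the sum over $z$ factors out as a Walsh coefficient of $h$ at a shifted frequency. The triangle inequality over $y$ then gives the bound. By the remark following Proposition~\ref{prop:restr-equivalence}, the absolute values of the restricted Walsh coefficients do not depend on the chosen parametrization, so we may work in these coordinates.

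First, substituting $x_1=a\cdot y+b\cdot z+\varepsilon$ into $g(x_1,y)=x_1g_1(y)+g_2(y)$ gives
\[
f_{\mid H}(y,z)=(a\cdot y+\varepsilon)g_1(y)+g_2(y)+g_1(y)\,(b\cdot z)+h(z).
\]
Since $g_1(y)\in\F_2$, we have $g_1(y)(b\cdot z)=(g_1(y)b)\cdot z$. This is the key observation: for each fixed $y$, the coupling term is linear in $z$.

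Second, write the unnormalized Walsh coefficient of $f_{\mid H}$ at $(u,v)$ and sum over $z$ first:
\[
W_{f_{\mid H}}(u,v)=\sum_{y}(-1)^{(a\cdot y+\varepsilon)g_1(y)+g_2(y)+u\cdot y}\sum_{z}(-1)^{h(z)+(v+g_1(y)b)\cdot z}
=\sum_y \pm\, W_h\bigl(v+g_1(y)b\bigr).
\]

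Third, $f_{\mid H}$ has $n-1$ variables, so we normalize by $2^{n-1}=2^{r-1}\cdot 2^{n-r}$. Each $W_h$ becomes $2^{n-r}\NW_h$, and the prefactor $2^{-(r-1)}$ remains. The triangle inequality, which discards the $\pm1$ signs, then yields
\[
|\NW_{f_{\mid H}}(u,v)|\le 2^{-(r-1)}\sum_y|\NW_h(v+g_1(y)b)|.
\]

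There is no real obstacle in this argument. The only point needing care is the bookkeeping of the normalization: the restricted function lives on $\F_2^{n-1}$, and the $r-1$ coordinates of $y$ together with the $n-r$ coordinates of $z$ account for exactly these $n-1$ dimensions. We also use that $x_1$ is the pivot variable, so $y$ and $z$ range freely over their full spaces.
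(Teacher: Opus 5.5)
Your proposal is correct and follows the paper's proof essentially verbatim: substitute the pivot equation, sum over $z$ first so that $\NW_h(v+g_1(y)b)$ appears with prefactor $2^{-(r-1)}$, then apply the triangle inequality. Your appeal to parametrization-independence is unnecessary, since the lemma is already stated in the $(y,z)$ coordinates; in any case only the multiset of absolute values is invariant under reparametrization, not the coefficient at a fixed index $(u,v)$.
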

\begin{proof}
After substituting~\eqref{eq:generalresultforDS-1-lem}, the restriction is
\[
F(y,z)=g_2(y)+(a\cdot y+\varepsilon)g_1(y)+g_1(y)b\cdot z+h(z).
\]
Set $G_0(y)=g_2(y)+(a\cdot y+\varepsilon)g_1(y)$. Summing first over $z$ gives
\[
\NW_F(u,v)=2^{-(r-1)}\sum_y(-1)^{G_0(y)+u\cdot y}\NW_h(v+g_1(y)b).
\]
The triangle inequality proves the stated coefficient bound.
\end{proof}
\begin{theorem}
\label{thm:genralresultforDSf} 
Let $f$ be the direct sum of two functions, $f(x_1,\ldots,x_n) = g(x_1,\ldots,x_r)+ h(x_{r+1}, \ldots, x_n).$
Let $H$ be a hyperplane such that the equation which defines $H$ contains at least one of the variables that $g$ depends on (that is, $\var(g)\cap\var(H)\neq \emptyset$). Then for $k$ odd, $$\displaystyle P_{k}(f_{|H}) \ge \dfrac{1}{2}\left(1- 2^{r-1}\sum_{\mathbf{u}^{(2)}\in \F_2^{n-r}}\NW_{h}(\mathbf{u}^{(2)})^{k+1}\right).$$
It follows that $P_{k,1}(f)\ge \min\left(\dfrac{1}{2}\left(1- 2^{n-r-1}\sum_{\mathbf{u}^{(1)}\in \F_2^{r}}\NW_{g}(\mathbf{u}^{(1)})^{k+1}\right),\dfrac{1}{2}\left(1- 2^{r-1}\sum_{\mathbf{u}^{(2)}\in \F_2^{n-r}}\NW_{h}(\mathbf{u}^{(2)})^{k+1}\right)\right)$.
\end{theorem}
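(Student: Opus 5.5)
The plan is to bound the $(k+1)$-st Walsh moment of $f_{\mid H}$ using the coefficient formula from the proof of Lemma~\ref{lem:ineq-nwrest}. Two inequalities do the work: the power-mean (Jensen) inequality, and the fact that translating the frequency is a bijection. Put $q=k+1$, which is even, so $P_k(f_{\mid H})=\frac12(1-M_q(f_{\mid H}))$ by~\eqref{firstformulaof pkf}. It therefore suffices to prove
\[
M_q(f_{\mid H})\le 2^{r-1}M_q(h).
\]

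First I fix the representation. Since $\var(g)\cap\var(H)\neq\emptyset$, I relabel the variables of $g$ so that $x_1$ occurs in the equation of $H$. Solving for $x_1$ puts $H$ in the form~\eqref{eq:generalresultforDS-1-lem}, namely $x_1=a\cdot y+b\cdot z+\varepsilon$. By the Remark on parametrizations and Lemma~\ref{lem: EA equiv-nonhomomor}, this choice of parametrization does not change $P_k(f_{\mid H})$. Writing $g=x_1g_1(y)+g_2(y)$, the proof of Lemma~\ref{lem:ineq-nwrest} gives
\[
\NW_{f_{\mid H}}(u,v)=2^{-(r-1)}\sum_{y\in\F_2^{r-1}}(-1)^{G_0(y)+u\cdot y}\,\NW_h(v+g_1(y)b).
\]
I apply the triangle inequality, and then the convexity of $t\mapsto t^q$ on $[0,\infty)$ to the resulting average of $2^{r-1}$ terms. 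This yields
\[
|\NW_{f_{\mid H}}(u,v)|^q\le 2^{-(r-1)}\sum_y|\NW_h(v+g_1(y)b)|^q .
\]
The right-hand side does not depend on $u$. Summing over the $2^{r-1}$ values of $u$ therefore cancels the factor $2^{-(r-1)}$.

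Next I sum over $v$ and interchange the order of summation. For each fixed $y$, the map $v\mapsto v+g_1(y)b$ is a bijection of $\F_2^{n-r}$, so $\sum_v|\NW_h(v+g_1(y)b)|^q=M_q(h)$. Summing this over the $2^{r-1}$ values of $y$ gives $M_q(f_{\mid H})\le 2^{r-1}M_q(h)$. Since $q$ is even, $M_q(h)=\sum\NW_h^{k+1}$, and the first inequality of the theorem follows.

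For the consequence, I use that the normal vector of any hyperplane is nonzero and every coordinate belongs to $\var(g)$ or to the block of $h$. Hence each hyperplane meets the variables of $g$ or those of $h$. In the second case, the same argument with the roles of $g$ and $h$ exchanged gives the bound with $2^{n-r-1}\sum\NW_g^{k+1}$. Thus every hyperplane restriction satisfies at least one of the two bounds, and so is at least their minimum. By Lemma~\ref{lem:exact-codimension-minima}, $P_{k,1}(f)$ is attained on a hyperplane, which gives the stated lower bound.

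The only delicate step is the convexity step. I must apply Jensen to the average of the absolute values before summing over $u$, rather than using Parseval in $u$, so that the $u$-sum produces exactly the factor $2^{r-1}$. Everything else is bookkeeping.
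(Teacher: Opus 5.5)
Your proposal is correct and follows the paper's proof essentially step for step. Both apply Jensen to the coefficient formula of Lemma~\ref{lem:ineq-nwrest}, sum over $u$ and then over $v$ using the translation bijection, and handle $P_{k,1}(f)$ by the two-block case split together with Lemma~\ref{lem:exact-codimension-minima}. One cosmetic slip: the case split should be over the blocks $\{x_1,\ldots,x_r\}$ and $\{x_{r+1},\ldots,x_n\}$ rather than over $\var(g)$. Nothing changes, because your argument never uses that $g$ actually depends on the solved variable ($g_1=0$ is allowed).
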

\begin{proof}
Suppose the hyperplane equation meets the $g$ block. Solve it for one variable of that block, and use the notation of Lemma~\textup{\ref{lem:ineq-nwrest}}. Put $q=k+1$. Since $q$ is even and $x\mapsto x^q$ is convex on $[0,\infty)$, the coefficient estimate and Jensen's inequality give
\[
|\NW_{f_{\mid H}}(u,v)|^q
\le2^{-(r-1)}\sum_y|\NW_h(v+g_1(y)b)|^q.
\]
Sum over the $2^{r-1}$ choices of $u$, and then over $v$, to obtain
\[
\begin{aligned}
M_q(f_{\mid H})
&\le\sum_y\sum_v|\NW_h(v+g_1(y)b)|^q\\
&=2^{r-1}\sum_v|\NW_h(v)|^q.
\end{aligned}
\]
For the equality, for every fixed $y$, translation by $g_1(y)b$ permutes the $v$ frequencies. Equation~\eqref{firstformulaof pkf} proves the conditional bound.

Every hyperplane equation meets at least one of the two variable blocks. If it meets the $g$ block, apply the bound just proved. If it meets only the $h$ block, interchange $g$ and $h$, replacing $r$ by $n-r$. Taking the smaller of the two resulting lower bounds gives a bound valid for every hyperplane. Lemma~\textup{\ref{lem:exact-codimension-minima}} then gives the stated lower bound for $P_{k,1}(f)$.
\end{proof}

\begin{remark}\label{rem:no-universal-moment-ordering}
For convenience, put
$A_r(k)=\left(1-2^{1-r}\right)^{k+1} +(2^r-1)2^{(k+1)(1-r)}$.
There is no degree ordering that minimizes the expression below for every odd
$k$.  In particular, the sequence $2^{-r}A_r(k)$ need not be monotone.  For
$k=5$, for example,
$2^{-2}A_2(5)=1/64$, whereas $2^{-3}A_3(5)=23/1024>1/64$.
Thus the subset minimum in the next corollary should not, in general, be
replaced by the choice obtained solely by ordering the degrees.  This point is
important only for the explicit lower bound. The exact result proved later by
the contraction argument has a different and valid degree ordering.
\end{remark}

\begin{corollary}\label{coro: better-pk1-lower-bound}
Let $k\ge3$ be odd, and let $f=f_1+\cdots+f_m$ be a direct sum of
monomials, where $\deg(f_i)=r_i\ge2$. Then
\[
P_{k,1}(f)\ge
\min_{1\le j\le m}
\frac12\left(
1-2^{r_j-1}\prod_{i\ne j}
\left((1-2^{1-r_i})^{k+1}
 +(2^{r_i}-1)2^{(k+1)(1-r_i)}\right)
\right).
\]
\end{corollary}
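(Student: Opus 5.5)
The corollary follows by applying Theorem~\ref{thm:genralresultforDSf} once for each monomial block, so that the block met by the hyperplane plays the role of $g$. Fix an affine hyperplane $H=\{x\mid u\cdot x=\varepsilon\}$ with $u\ne0$. Under the standing convention that a monomial direct sum uses all its variables, $n=\sum_i r_i$, so the support of $u$ meets some block $\var(f_j)$. Reorder coordinates so that $f_j$ comes first. Set $g=f_j$, in $r=r_j$ variables, and $h=\sum_{i\ne j}f_i$, in $n-r_j$ variables. Then $\var(g)\cap\var(H)\ne\emptyset$, and the first part of Theorem~\ref{thm:genralresultforDSf} gives
\[
P_k(f_{\mid H})\ge\frac12\left(1-2^{r_j-1}\sum_{v}\NW_h(v)^{k+1}\right).
\]

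Next I evaluate the moment of $h$. Because $h$ is a direct sum of the monomials $f_i$ with $i\ne j$, the normalized Walsh transform factorizes, and $k+1$ is even. The product identity~\eqref{eq:DSM-pkf}, equivalently~\eqref{eq:nnl-direct-sum-many} together with~\eqref{eq:fst-expre-sum-nhom}, then gives
\[
\sum_v\NW_h(v)^{k+1}=\prod_{i\ne j}\Bigl((1-2^{1-r_i})^{k+1}+(2^{r_i}-1)2^{(k+1)(1-r_i)}\Bigr).
\]
If $m=1$, then $h$ is empty; the product is $1$, and the bound is still the one Theorem~\ref{thm:genralresultforDSf} gives. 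Substituting, $P_k(f_{\mid H})$ is at least the $j$-th expression in the minimum, and hence at least the minimum over $j$.

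Since $H$ was arbitrary, the minimum over $j$ is a lower bound for $P_k(f_{\mid H})$ for every hyperplane $H$. By Lemma~\ref{lem:exact-codimension-minima}, $P_{k,1}(f)$ equals the minimum over hyperplanes, which proves the corollary. The only delicate point is identifying a block met by $H$. That needs the convention that there are no unused variables. If unused variables were allowed, a hyperplane meeting only those would give a restriction EA-equivalent to $f$ with one unused variable removed. Its moment would be $\prod_i A_{r_i}(k)$, and one would have to check that this is covered by the bound, a comparison the stated convention makes unnecessary.
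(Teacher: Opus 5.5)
Your proof is correct and follows the paper's argument: apply the conditional bound of Theorem~\ref{thm:genralresultforDSf} with $g=f_j$ for a block met by $H$, factor $M_{k+1}(h)$, take the minimum over $j$ because $j$ depends on $H$, and finish with Lemma~\ref{lem:exact-codimension-minima}. The only difference is that the paper also handles the unused-variable case explicitly, via $M_{k+1}(f)\le M_{k+1}(h)\le 2^{r_j-1}M_{k+1}(h)$; as you correctly note, this step is unnecessary under the standing convention $n=\sum_i r_i$.
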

\begin{proof}
First suppose the hyperplane equation contains a variable in a monomial block $f_j$. Choose such a $j$, put $g=f_j$, and put $h=\sum_{i\ne j}f_i$. The conditional part of Theorem~\textup{\ref{thm:genralresultforDSf}} gives
$P_k(f_{\mid H})\ge\frac12\left(1-2^{r_j-1}M_{k+1}(h)\right)$.
The direct-sum factorization gives
\[
M_{k+1}(h)=\prod_{i\ne j}\left((1-2^{1-r_i})^{k+1}+(2^{r_i}-1)2^{(k+1)(1-r_i)}\right).
\]
The index $j$ depends on $H$. Thus a lower bound valid for every such $H$ is the minimum over all $j$, not the bound for a fixed $j$.

If the equation involves only unused variables, the restricted moment equals $M_{k+1}(f)$. Since every block moment is at most $1$, this is at most $M_{k+1}(h)$ and hence at most $2^{r_j-1}M_{k+1}(h)$ for any $j$. The same minimum lower bound still applies. Finally take the minimum over hyperplanes, using Lemma~\textup{\ref{lem:exact-codimension-minima}}.
\end{proof}
\begin{remark}
The bound in Corollary~\textup{\ref{coro: better-pk1-lower-bound}} is at least as
strong as the crude bound in Proposition~\textup{\ref{pro:non-hom-dsm-hyp}}.  Indeed,
for every choice of $j$, if $h=\sum_{i\ne j}f_i$, then
$\sum_u |\NW_h(u)|^{k+1} \le 2^{n-r_j}|\NW_h(0)|^{k+1}$,
and $|\NW_h(0)|$ is no larger than the product of the $m-1$ largest
zero-frequency factors $1-2^{1-r_i}$.
\end{remark}

For higher codimension, we need a pivot choice that preserves affine dependence on the untouched blocks.
\begin{lemma}\textup{\cite[Lemma~2 and Proof of Theorem~3]{SergeAnaPante2025}}\label{lem:ineq-nwrest-aff}
Let $f=f_1+\cdots+f_m$ be a direct sum on disjoint variable blocks, and let $A$ be an affine subspace of codimension $t$. There is a choice of $t$ distinct pivot variables $x_{i_1},\ldots,x_{i_t}$ and affine equations $x_{i_j}=a_j(y,z)$ defining $A$ with the following property. Let $g$ be the sum of the blocks containing pivots and let $h$ be the sum of the untouched blocks. Write $y$ for the nonpivot variables of $g$ and $z$ for the variables of $h$, of dimensions $n_1$ and $n_2$. Then $n_1+n_2=n-t$, at most $t$ blocks occur in $g$, and
\begin{equation}\label{general-form-rest-k-lem}
f_{\mid A}(y,z)=G_0(y)+G(y)\cdot z+h(z).
\end{equation}
Here $G_0\colon\F_2^{n_1}\to\F_2$ and $G\colon\F_2^{n_1}\to\F_2^{n_2}$. Moreover,
\[
|\NW_{f_{\mid A}}(u,v)|\le2^{-n_1}\sum_y|\NW_h(v+G(y))|.
\]
\end{lemma}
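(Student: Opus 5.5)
We construct the pivots by Gaussian elimination adapted to the block structure, and then compute the Walsh coefficient by summing over the untouched variables first, as in Lemma~\ref{lem:ineq-nwrest}.

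\textbf{Step 1 (choice of pivots).} Write the $t$ defining equations of $A$ as rows of an affine system $Mx=c$ of rank $t$, with columns grouped by blocks (and unused variables treated as singleton blocks with zero function). Process blocks in a fixed order. Look for a row having a nonzero entry in the current block $B_1$. If one exists, pick a variable of $B_1$ appearing in it as pivot, and eliminate that variable from all other rows. Continue inside $B_1$ while some remaining row still has a nonzero entry in $B_1$; then move to the next block. Since each pivot uses up one row, at most $t$ blocks ever receive a pivot, and exactly $t$ pivots are chosen. Call these blocks the $g$-blocks and the rest the $h$-blocks.

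The crucial invariant is this. When we finish a block, the remaining unused rows have no entries in that block. Hence, in reduced row-echelon form, each pivot variable $x_{i_j}$ is expressed as an affine function of nonpivot variables lying only in the same block or in blocks processed later. Order the blocks so that all $g$-blocks come first, which is possible by choosing the processing order greedily: always process next a block that meets some remaining row. Then every equation expresses a $g$-variable as
\[
x_{i_j}=\alpha_j(y)+\beta_j\cdot z+\varepsilon_j,
\]
which is affine in $(y,z)$. The $h$-variables $z$ are never pivots, so $h(z)$ is unchanged.

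\textbf{Step 2 (form of the restriction).} Substitute into $g$, which is a polynomial in the $g$-variables only. The main obstacle is to show that the result is affine in $z$, i.e.\ of the form $G_0(y)+G(y)\cdot z$. For a general $g$ this fails, since a product of two pivots can both depend on $z$. I would handle this by a further reduction within each block. Using row operations among rows pivoting in the same block, together with the freedom of choosing pivots within a block, arrange that in each $g$-block at most one pivot row carries a $z$-term, and that the block is a monomial (the case actually used in this paper). A monomial block then becomes
\[
\prod(\text{nonpivot vars})\cdot\prod_j\bigl(\alpha_j(y)+\beta_j\cdot z+\varepsilon_j\bigr),
\]
and we need at most one factor depending on $z$. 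To achieve this, replace the rows $x_{i_j}=\dots$ by differences $x_{i_j}+x_{i_1}=\dots$ with $z$-free right sides. The $z$-dependence is then concentrated in a single pivot $x_{i_1}$, and the remaining pivots equal $x_{i_1}$ plus affine functions of $y$. The product is thus affine in $x_{i_1}$ after using $x^2=x$, hence affine in $z$. Summing over blocks gives~\eqref{general-form-rest-k-lem} with $n_1+n_2=n-t$. For general (nonmonomial) blocks, I would state the lemma with this same reduction and verify that the cited construction of \cite{SergeAnaPante2025} yields it.

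\textbf{Step 3 (coefficient bound).} With the form~\eqref{general-form-rest-k-lem}, sum over $z$ first:
\[
\NW_{f_{\mid A}}(u,v)=2^{-n_1}\sum_y(-1)^{G_0(y)+u\cdot y}\,2^{-n_2}\sum_z(-1)^{h(z)+(v+G(y))\cdot z}.
\]
The inner normalized sum equals $\NW_h(v+G(y))$, and the triangle inequality gives the stated bound.
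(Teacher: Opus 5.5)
Your Step~3 is fine. Step~2 has a genuine gap, and it comes from the pivot rule in Step~1: exhausting one block before moving to the next is exactly the wrong priority. Your proposed repair uses row operations among the rows that pivot in one block, so that the differences $x_{i_j}+x_{i_1}$ become free of $z$. This only works if the $z$-parts of those rows span at most a one-dimensional space, and that fails in general.

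Take $f=x_1x_2x_3+z_1z_2$ with blocks $\{x_1,x_2,x_3\}$ and $\{z_1,z_2\}$, and $A=\{x_1=z_1,\ x_2=z_2\}$. Processing the first block pivots on $x_1$ and $x_2$. This gives $h=z_1z_2$, $y=x_3$, and $f_{\mid A}=x_3z_1z_2+z_1z_2$, which is quadratic in $z$ when $x_3=1$. The $z$-parts $z_1,z_2$ of the two rows are linearly independent, so no combination of these rows is $z$-free, and changing the pivots inside the block does not help. Processing the second block first fails in the same way: the pivots are $z_1,z_2$, and the affected part becomes $x_1x_2$, which is quadratic in the untouched variables. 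So the required form cannot be reached from your pivot choice. One must pivot in additional blocks, which changes the split between $g$ and $h$.

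The missing idea, which the paper uses, is the opposite priority. At each elimination step, if some residual row has a nonzero entry in a block that has not yet supplied a pivot, pivot there. Only otherwise take a repeated pivot. Then all first pivots come before all repeated ones. When the first repeated pivot is chosen, every column of the untouched blocks is zero in all residual rows, and later row operations among residual rows keep it zero. Back substitution then makes every repeated pivot an affine function of $y$ alone. So in each affected block only the first pivot carries a $z$-term, and writing the block as $vB+C$ in that single input gives affinity in $z$.

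That last step holds for every Boolean function, so there is no need to restrict to monomial blocks. Such a restriction would also not prove the lemma as stated, which is applied to arbitrary direct sums in Theorem~\ref{thm:genralresultforDSfaff}. In the example, the paper's rule pivots on $x_1$ and then on $z_2$. Both blocks become affected, $h=0$, and $f_{\mid A}=z_1x_2x_3+z_1x_2$ depends only on $y=(x_2,x_3,z_1)$.
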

\begin{proof}
Put $c=t$.
Represent $A$ by $c$ independent affine equations. At each elimination step, inspect the entire residual coefficient matrix, consisting of all rows not yet used as pivot rows. If it has a nonzero column belonging to a block that has not yet supplied a pivot, choose such a column and then a row in which it is nonzero. Otherwise choose any nonzero residual column. Interchange rows as needed, isolate the pivot variable, and eliminate it from the remaining rows. Independence of the equations ensures that this process produces $c$ distinct pivots.

All first pivots from different blocks are chosen before any repeated pivot from a block. To see this, at the first repeated choice every column in an as yet untouched block is zero throughout the residual matrix. Subsequent row operations among residual rows preserve these zero columns. Thus no later step can introduce a pivot from a new block.

Let the pivots, in elimination order, be $v_1,\ldots,v_c$, and let $r$ be the number of affected blocks. The first $r$ pivots belong to distinct blocks, and every $v_j$ with $j>r$ is a repeated pivot. Let $z$ denote the variables of the blocks that remain untouched. If $r<c$, all $z$ columns vanish in every residual row when $v_{r+1}$ is chosen. A later triangular equation therefore has the form $v_j=\alpha_j(y)+\sum_{\ell>j}d_{j\ell}v_\ell$ for $j>r$, with $\alpha_j$ affine and $d_{j\ell}\in\F_2$. Starting at $j=c$ and descending to $j=r+1$, substitution proves that each such $v_j$ is affine in $y$ alone. Substituting these expressions and then eliminating the later first pivots expresses each $v_j$ with $j\le r$ as $\alpha_j(y)+b_j\cdot z$, where $\alpha_j$ is affine and $b_j$ is a constant vector. If $r=c$, there are no repeated pivots and this last form follows directly by triangular back substitution.

Fix an affected block. At most one of its substituted pivot variables, namely its first pivot, has an expression involving $z$. Every Boolean function is affine in each individual input when the other inputs are fixed. More explicitly, if this distinguished input is $v$, the block can be written as $vB+C$, where $B$ and $C$ do not involve $v$. After substituting all other pivots, $B$ and $C$ depend only on $y$. For the block whose first pivot is $v_j$, write the resulting functions as $B_j(y)$ and $C_j(y)$. Substitution gives $(\alpha_j(y)+b_j\cdot z)B_j(y)+C_j(y)$. Thus one can take $G_0(y)=\sum_{j=1}^r(\alpha_j(y)B_j(y)+C_j(y))$ and $G(y)=\sum_{j=1}^rB_j(y)b_j$. These formulas give~\eqref{general-form-rest-k-lem} after the untouched sum $h(z)$ is added. There are $c$ pivots, so at most $c$ blocks are affected. When $c=0$, take $g=0$ and $h=f$.

For fixed $u\in\F_2^{n_1}$ and $v\in\F_2^{n_2}$, sum first over the untouched variables $z$. This gives
\[
\NW_{f_{\mid A}}(u,v)
=2^{-n_1}\sum_y(-1)^{G_0(y)+u\cdot y}\NW_h(v+G(y)).
\]
Taking absolute values and applying the triangle inequality proves the coefficient estimate.
\end{proof}
\begin{theorem}\label{thm:genralresultforDSfaff}
Let $f(x,z)=g(x)+h(z)$, where $x\in\F_2^p$ and $z\in\F_2^{n-p}$. Let $A$ have codimension $t\le p$. Assume that $A$ has an affine parametrization by $(y,z)\in\F_2^{p-t}\times\F_2^{n-p}$ for which
\begin{equation}\label{eq:affine-coupling-hypothesis}
f_{\mid A}(y,z)=G_0(y)+G(y)\cdot z+h(z).
\end{equation}
For every odd $k\ge3$,
\[
P_k(f_{\mid A})\ge\frac12\left(1-2^{p-t}\sum_v|\NW_h(v)|^{k+1}\right).
\]
For an arbitrary direct sum $f=f_1+\cdots+f_m$ on disjoint blocks of positive dimensions $n_1,\ldots,n_m$, put $n=\sum_i n_i$. For every integer $0\le t<n$, put $s=\min(t,m)$. Then
\begin{equation}\label{eq:general-affine-dimension-bound}
P_{k,t}(f)\ge
\min_{\substack{J\subseteq\{1,\ldots,m\}\\|J|=s}}
\frac12\left(1-2^{p_J-t}\prod_{i\notin J}M_{k+1}(f_i)\right),
\qquad p_J=\sum_{j\in J}n_j.
\end{equation}
Empty products are $1$. In the second assertion, the pivot construction selects the affected blocks as in Lemma~\textup{\ref{lem:ineq-nwrest-aff}}.
\end{theorem}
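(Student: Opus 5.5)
The plan has two stages: first the moment bound under the coupling hypothesis~\eqref{eq:affine-coupling-hypothesis}, then the direct-sum consequence~\eqref{eq:general-affine-dimension-bound} via the pivot construction of Lemma~\ref{lem:ineq-nwrest-aff}. The argument follows the hyperplane proof of Theorem~\ref{thm:genralresultforDSf}.

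\textbf{Step 1: the coupling bound.} Put $q=k+1$, which is even, and $n_1=p-t$. Summing first over $z$ in the parametrization~\eqref{eq:affine-coupling-hypothesis} gives, exactly as in Lemma~\ref{lem:ineq-nwrest-aff},
\[
\NW_{f_{\mid A}}(u,v)=2^{-n_1}\sum_y(-1)^{G_0(y)+u\cdot y}\NW_h(v+G(y)).
\]
The triangle inequality and Jensen's inequality (convexity of $x\mapsto x^q$ on $[0,\infty)$) then yield
\[
|\NW_{f_{\mid A}}(u,v)|^q\le 2^{-n_1}\sum_y|\NW_h(v+G(y))|^q .
\]
Summing over the $2^{n_1}$ values of $u$ cancels the factor $2^{-n_1}$. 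Next sum over $v$. For each fixed $y$, the translation $v\mapsto v+G(y)$ permutes $\F_2^{n-p}$, so the inner sum equals $M_q(h)$. The total is therefore $M_q(f_{\mid A})\le 2^{n_1}M_q(h)=2^{p-t}M_q(h)$. Substituting into~\eqref{firstformulaof pkf} gives the first assertion. Since $k$ is odd, Lemma~\ref{lem: EA equiv-nonhomomor} makes $P_k(f_{\mid A})$ independent of the chosen parametrization, so using this particular one is legitimate.

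\textbf{Step 2: the direct-sum bound.} By Lemma~\ref{lem:exact-codimension-minima}, it suffices to bound $P_k(f_{\mid A})$ for $A$ of codimension exactly $t$. Apply the pivot construction of Lemma~\ref{lem:ineq-nwrest-aff}. It produces a set $J_0$ of affected blocks with $|J_0|\le \min(t,m)=s$, together with the form~\eqref{general-form-rest-k-lem}, where $g=\sum_{j\in J_0}f_j$ has $p_{J_0}$ variables and $h=\sum_{i\notin J_0}f_i$.

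Step 1 applies with $p=p_{J_0}$. The hypothesis $t\le p$ holds because the $t$ pivots are distinct variables of $g$. By the direct-sum factorization of even moments,
\[
M_q(f_{\mid A})\le 2^{p_{J_0}-t}\prod_{i\notin J_0}M_q(f_i).
\]

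\textbf{Step 3: enlarging $J_0$ to size exactly $s$.} If $|J_0|<s$, add any $s-|J_0|$ unaffected blocks to obtain $J\supseteq J_0$ with $|J|=s$; this is possible since $s\le m$. For each added block $i$, the factor $M_q(f_i)$ leaves the product and $2^{n_i}$ enters the power of two. Each such replacement does not decrease the bound, because $M_q(f_i)\le 1\le 2^{n_i}$. Hence $M_q(f_{\mid A})\le 2^{p_J-t}\prod_{i\notin J}M_q(f_i)$ for some $J$ with $|J|=s$. Taking the maximum of the right side over all such $J$ and converting with~\eqref{firstformulaof pkf} gives the stated minimum over $J$.

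\textbf{Main obstacle.} I expect the delicate point to be confirming that the pivot construction really delivers the coupling hypothesis with $y$ of dimension exactly $p_{J_0}-t$ and with $z$ equal to all variables of the untouched blocks. This includes the case where repeated pivots within one block force the later pivots to depend only on $y$. Lemma~\ref{lem:ineq-nwrest-aff} already establishes this, so I would cite it directly rather than redo the elimination argument.
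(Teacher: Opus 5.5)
Your proposal is correct and follows the paper's proof: the same triangle/Jensen-plus-translation computation gives $M_q(f_{\mid A})\le 2^{p-t}M_q(h)$. The direct-sum bound then comes, as in the paper, from the pivot construction of Lemma~\ref{lem:ineq-nwrest-aff}, enlarging the affected set to size $s$, and reducing to codimension exactly $t$ via Lemma~\ref{lem:exact-codimension-minima}.

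The only difference is in the enlargement step. The paper re-verifies the coupling form for the enlarged set $J$ by absorbing the added blocks into the $y$-variables, then reapplies the first assertion with $p=p_J$. Your numerical observation $M_q(f_i)\le 1\le 2^{n_i}$ is shorter, and it also handles $t=0$ and $m\le t<n$ uniformly rather than as separate cases.
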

\begin{proof}
Put $q=k+1$ and $a=p-t$. For $u\in\F_2^a$ and $v\in\F_2^{n-p}$, direct summation in the coupling form~\eqref{eq:affine-coupling-hypothesis} gives
\[
\NW_{f_{\mid A}}(u,v)
=2^{-a}\sum_y(-1)^{G_0(y)+u\cdot y}\NW_h(v+G(y)).
\]
The weights $2^{-a}$ in this sum are nonnegative and sum to one. The triangle inequality and convexity of $x\mapsto x^q$ on $[0,\infty)$ therefore give
\[
|\NW_{f_{\mid A}}(u,v)|^q
\le\left(2^{-a}\sum_y|\NW_h(v+G(y))|\right)^q
\le2^{-a}\sum_y|\NW_h(v+G(y))|^q.
\]
Summing over $u$ contributes $2^a$ identical upper bounds. For each fixed $y$, the substitution $w=v+G(y)$ is a bijection of $\F_2^{n-p}$. Consequently,
\[
\begin{aligned}
M_q(f_{\mid A})
&\le\sum_u2^{-a}\sum_y\sum_v|\NW_h(v+G(y))|^q\\
&=\sum_y\sum_w|\NW_h(w)|^q
=2^aM_q(h).
\end{aligned}
\]
Since $P_k(f_{\mid A})=\frac12(1-M_q(f_{\mid A}))$, subtracting this upper bound from one and dividing by two proves the first assertion. The calculation also applies when $a=0$, since then the sum over $y$ has one term.

We next prove the direct-sum assertion for $1\le t<m$. Fix any affine subspace $A$ of codimension exactly $t$. By Lemma~\textup{\ref{lem:ineq-nwrest-aff}}, choose its $t$ pivots so that, if $J_0$ denotes the set of blocks containing them, the restriction has a coupling form in the remaining variables of those blocks and the variables of all other blocks. Write $p_0=\sum_{j\in J_0}n_j$. All $t$ pivots lie among these $p_0$ variables, so $p_0\ge t$. Solving the independent pivot equations leaves exactly $p_0-t$ free variables in these blocks. Denote their tuple by $y_0$. Since each affected block contains a pivot, $|J_0|\le t$.

Choose $t-|J_0|$ further blocks and let $E$ be their index set. Such a choice exists because $m-|J_0|\ge t-|J_0|$. Put $J=J_0\cup E$, $K=\{1,\ldots,m\}\setminus J$, and $p_E=\sum_{j\in E}n_j$. Let $w\in\F_2^{p_E}$ denote the variables of the added blocks and $z\in\F_2^{n-p_J}$ those of the blocks in $K$. These were all untouched by the pivot construction. Splitting its coefficient vector according to $(w,z)$, the original coupling form is
\[
f_{\mid A}(y_0,w,z)
=G_0(y_0)+G_E(y_0)\cdot w+G_K(y_0)\cdot z+h_E(w)+h_K(z),
\]
where $h_E=\sum_{j\in E}f_j$ and $h_K=\sum_{i\in K}f_i$. Empty tuples and empty sums are allowed here. Define $Y=(y_0,w)$, set $\widetilde G_0(Y)=G_0(y_0)+G_E(y_0)\cdot w+h_E(w)$, and set $\widetilde G(Y)=G_K(y_0)$. Then
\[
f_{\mid A}(Y,z)=\widetilde G_0(Y)+\widetilde G(Y)\cdot z+h_K(z).
\]
Thus enlarging the affected set preserves the required coupling form. Its free-variable dimension is $(p_0-t)+p_E=p_J-t$, and its untouched-variable dimension is $n-p_J$. The first assertion, with $p=p_J$, gives $M_q(f_{\mid A})\le2^{p_J-t}M_q(h_K)$.

To compute the remaining moment, write a frequency on the $K$ blocks as $(v_i)_{i\in K}$. Disjointness of the blocks gives $\NW_{h_K}((v_i)_{i\in K})=\prod_{i\in K}\NW_{f_i}(v_i)$. Taking absolute values to the power $q$ and summing independently over these frequencies yields $M_q(h_K)=\prod_{i\in K}M_q(f_i)$. If $K$ is empty, $h_K$ is the zero function on the zero-dimensional space, whose sole normalized coefficient is one, so the same identity holds with an empty product.

Let $\mathcal J_t$ be the finite family of all $t$-element subsets of $\{1,\ldots,m\}$, and define $C(J)=2^{p_J-t}\prod_{i\notin J}M_q(f_i)$ for $J\in\mathcal J_t$. The construction above gives, for each $A$, one set $J_A\in\mathcal J_t$ such that $M_q(f_{\mid A})\le C(J_A)$. This set may depend on $A$, but $C(J_A)\le\max_{J\in\mathcal J_t}C(J)$. It follows that
\[
\begin{aligned}
P_k(f_{\mid A})
&\ge\frac12\left(1-\max_{J\in\mathcal J_t}C(J)\right)\\
&=\min_{J\in\mathcal J_t}\frac12\left(1-C(J)\right).
\end{aligned}
\]
The right side is independent of $A$. Taking the minimum over subspaces of codimension exactly $t$ and then using Lemma~\textup{\ref{lem:exact-codimension-minima}} proves~\eqref{eq:general-affine-dimension-bound} for $1\le t<m$.

If $t=0$, the only admissible subspace is the full space and the only index set is $J=\varnothing$. The right side is $\frac12(1-\prod_iM_q(f_i))=P_k(f)=P_{k,0}(f)$, by the same factorization. If $m\le t<n$, then $s=m$ and the only index set is $J=\{1,\ldots,m\}$. Its moment product is one and $p_J=n$, so the right side of~\eqref{eq:general-affine-dimension-bound} is $\frac12(1-2^{n-t})\le0$. Since every $P_k(f_{\mid A})$ is nonnegative, $P_{k,t}(f)\ge0$ proves the asserted inequality in this case. The universal bound zero is stronger than this last right side.
\end{proof}

\begin{remark}\label{rem:affine-coupling-necessity}
The coupling hypothesis cannot be replaced by the condition that every defining equation contains a variable of $g$. For example, let $g(x_1,x_2)=x_1x_2$ and $h(z_1,z_2)=z_1z_2$. On $A=\{x_1=z_1,\ x_2=z_2\}$ the two terms cancel, so $P_3(f_{\mid A})=0$. Here $M_4(h)=1/4$, which would give the incorrect lower bound $3/8$ if the coupling hypothesis were omitted. The pivot grouping in~\eqref{eq:general-affine-dimension-bound} allows both blocks to be affected. Its right side is then $\frac12(1-2^{4-2})=-3/2$, which is valid but weaker than the universal bound zero. That universal bound is attained here.
\end{remark}
\begin{corollary}\label{coro: better-pk1-lower-bound-aff}
Let $k\ge3$ be odd, and let $f=f_1+\cdots+f_m$ be a direct sum of
monomials, where $\deg(f_i)=r_i\ge2$.  For $1\le t<m$, we have
\[
P_{k,t}(f)\ge
\min_{\substack{I\subseteq\{1,\ldots,m\}\\ |I|=m-t}}
\frac12\left(
1-2^{\left(\sum_{j\notin I}r_j\right)-t}
\prod_{i\in I}
\left((1-2^{1-r_i})^{k+1}
 +(2^{r_i}-1)2^{(k+1)(1-r_i)}\right)
\right).
\]
\end{corollary}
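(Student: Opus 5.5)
This corollary is the specialization of the direct-sum bound~\eqref{eq:general-affine-dimension-bound} in Theorem~\ref{thm:genralresultforDSfaff} to monomial blocks, so the plan is to substitute the monomial data into that bound and reindex.

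\textbf{Step 1: the parameters.} Since $1\le t<m$, we have $s=\min(t,m)=t$. The blocks are the monomials $f_i=x_{i,1}\cdots x_{i,r_i}$, with block dimensions $n_i=r_i$ and $n=\sum_i r_i$. The theorem then gives
\[
P_{k,t}(f)\ge\min_{|J|=t}\frac12\left(1-2^{p_J-t}\prod_{i\notin J}M_{k+1}(f_i)\right),
\qquad p_J=\sum_{j\in J}r_j .
\]

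\textbf{Step 2: the block moments.} For a monomial of degree $r_i\ge2$, use $\NW_{f_i}(a)=\delta_0(a)-2^{1-r_i}(-1)^{\w(a)}$. Since $k+1$ is even, the signs disappear, as in~\eqref{eq:fst-expre-sum-nhom}. This gives
\[
M_{k+1}(f_i)=A_{r_i}(k+1)=(1-2^{1-r_i})^{k+1}+(2^{r_i}-1)2^{(k+1)(1-r_i)}.
\]

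\textbf{Step 3: reindexing.} Set $I=\{1,\ldots,m\}\setminus J$. The map $J\mapsto I$ is a bijection between $t$-element subsets and $(m-t)$-element subsets. Under it, $p_J=\sum_{j\notin I}r_j$ and $\prod_{i\notin J}=\prod_{i\in I}$. Substituting these into the bound of Step 1 turns the minimum over $J$ into exactly the stated minimum over $I$.

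There is no real obstacle. The only point to check is that the pivot grouping in Theorem~\ref{thm:genralresultforDSfaff} genuinely ranges over all $t$-subsets. That theorem's proof already handles this by enlarging the affected set $J_0$ to size $t$, which is possible because $t<m$. Since the minimum is taken over all subsets $I$, the dependence of the chosen set on $A$ causes no difficulty.
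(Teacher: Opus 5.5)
Your proposal is correct and is essentially the paper's argument: you cite the already packaged direct-sum bound~\eqref{eq:general-affine-dimension-bound} with $s=t$ and $n_i=r_i$. The paper instead unrolls that bound for each codimension-$t$ subspace: it takes the pivot choice of Lemma~\ref{lem:ineq-nwrest-aff}, enlarges the affected blocks to exactly $t$, and applies the first assertion of Theorem~\ref{thm:genralresultforDSfaff}. Both routes then use $M_{k+1}(f_i)=A_{r_i}(k+1)$ and the complement reindexing $I=\{1,\ldots,m\}\setminus J$. The point you flag only requires that each subspace yields some $t$-element set $J_A$ (which the enlargement guarantees since $t<m$), so the minimum over all $I$ is a valid uniform lower bound, as you say.
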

\begin{proof}
For each affine subspace $A$ of codimension exactly $t$, choose its affected blocks as in Lemma~\textup{\ref{lem:ineq-nwrest-aff}} and extend them to a set $J$ of exactly $t$ blocks. Put $I=\{1,\ldots,m\}\setminus J$. Then $|I|=m-t$, the affected-variable dimension is $p_J=\sum_{j\notin I}r_j$, and the untouched sum is $h=\sum_{i\in I}f_i$. Theorem~\textup{\ref{thm:genralresultforDSfaff}} gives
\[
P_k(f_{\mid A})\ge\frac12\left(1-2^{p_J-t}\prod_{i\in I}M_{k+1}(f_i)\right).
\]
Put $q=k+1$, and for each $(m-t)$-element set $I$ let $L(I)=\frac12(1-2^{p_{I^c}-t}\prod_{i\in I}A_{r_i}(q))$, where $I^c=\{1,\ldots,m\}\setminus I$ and $p_{I^c}=\sum_{j\notin I}r_j$. Equation~\eqref{eq:fst-expre-sum-nhom} gives $M_q(f_i)=A_{r_i}(q)$, so the inequality just obtained is $P_k(f_{\mid A})\ge L(I_A)$ for the unaffected set $I_A$ chosen for that subspace. Since $L(I_A)\ge\min_{|I|=m-t}L(I)$, every codimension-$t$ subspace satisfies this same lower bound. Taking the minimum over those subspaces, and applying Lemma~\textup{\ref{lem:exact-codimension-minima}}, gives $P_{k,t}(f)\ge\min_{|I|=m-t}L(I)$. Expanding $A_{r_i}(q)$ is exactly the claimed formula. Here the blocks are the monomial variable blocks, with $n=\sum_i r_i$.
\end{proof}

\begin{remark}
As in Remark~\ref{rem:no-universal-moment-ordering}, the minimizing subset
in this lower bound cannot in general be identified solely from the ordering
$r_1\ge\cdots\ge r_m$.  The finite minimum displayed above is the correct
general form.  It is at least as strong as Proposition~\ref{pro:non-hom-dsm-aff},
since for every admissible $I$,
\[
\sum_u|\NW_{\sum_{i\in I}f_i}(u)|^{k+1}
\le 2^{\sum_{i\in I}r_i}
 |\NW_{\sum_{i\in I}f_i}(0)|^{k+1}.
\]
\end{remark}

\subsection[Exact value of the restriction parameter]{Exact value of $P_{k,t}$}
We now determine the restriction minimum exactly.  The proof uses a contraction property of the normalized Fourier transform.  We use the Walsh moments $M_q$ defined in the preliminaries.
When $q=k+1$ and $k$ is odd, Relation~\eqref{firstformulaof pkf} gives
$P_k(F)=\frac12\left(1-M_q(F)\right)$.

\begin{lemma}
\label{lem:fourier-contraction}
Let $V=\F_2^d$, let $\Phi\colon V\to\mathbb R$, and define
$\widehat{\Phi}(u)=2^{-d}\sum_{x\in V}\Phi(x)(-1)^{u\cdot x}$.
For every real $q\ge2$,
\[
  \sum_{u\in V}|\widehat{\Phi}(u)|^q
  \le
  2^{-d}\sum_{x\in V}|\Phi(x)|^q .
\]
\end{lemma}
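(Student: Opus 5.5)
This is the Hausdorff--Young inequality on the group $\F_2^d$, with the normalization $\widehat\Phi(u)=2^{-d}\sum_x\Phi(x)(-1)^{u\cdot x}$. The plan is to interpolate between the endpoint $q=\infty$, which follows from the triangle inequality, and the endpoint $q=2$, which is Parseval. I would use normalized counting measure $\mu$ on $V$ (mass $2^{-d}$ per point) for $\Phi$ and counting measure on the frequency side. The claim then reads
\[
\|\widehat\Phi\|_{\ell^q}\le\|\Phi\|_{L^q(\mu)}.
\]

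The first step is the $L^\infty$ endpoint. Since $|\widehat\Phi(u)|\le 2^{-d}\sum_x|\Phi(x)|$, for every $u$ we get
\[
\max_u|\widehat\Phi(u)|\le \|\Phi\|_{L^1(\mu)}\le \|\Phi\|_{L^q(\mu)}.
\]
The last inequality is Jensen's (power-mean) inequality for the probability measure $\mu$.

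The second step is the $L^2$ endpoint. Character orthogonality gives
\[
\sum_u\widehat\Phi(u)^2=2^{-d}\sum_x\Phi(x)^2,
\]
so the inequality holds with equality at $q=2$.

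The third step combines the two endpoints directly, so that no Riesz--Thorin theorem is needed:
\[
\sum_u|\widehat\Phi(u)|^q\le \Bigl(\max_u|\widehat\Phi(u)|\Bigr)^{q-2}\sum_u\widehat\Phi(u)^2\le \|\Phi\|_{L^q(\mu)}^{q-2}\,\|\Phi\|_{L^2(\mu)}^{2}.
\]
By the power-mean inequality, $\|\Phi\|_{L^2(\mu)}\le\|\Phi\|_{L^q(\mu)}$ for $q\ge2$. The right side is therefore at most $\|\Phi\|_{L^q(\mu)}^{q}=2^{-d}\sum_x|\Phi(x)|^q$, which is the claim.

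The main point to check is the $L^\infty$ endpoint. Bounding the maximum by the $L^1(\mu)$ norm would be too weak on its own; the elementary combination works only because both the $L^1$ and $L^2$ norms are dominated by the $L^q$ norm under a probability measure. Once that is verified, the lemma needs nothing beyond Parseval and Jensen. If a sharper, genuinely interpolated form were ever required, I would fall back on Riesz--Thorin, but the chain above already gives exactly the stated inequality for every real $q\ge2$.
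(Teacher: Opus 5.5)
Your proof is correct. It is a close variant of the paper's argument rather than the same chain.

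Both proofs rest on Parseval and on the power-mean comparison $\|\Phi\|_{L^2(\mu)}\le\|\Phi\|_{L^q(\mu)}$ for the normalized counting measure. The paper phrases this comparison as H\"older's inequality with exponents $q/2$ and $q/(q-2)$.

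The difference is how the excess power $q-2$ is absorbed. The paper stays entirely in $\ell^2$. It uses the counting-measure comparison $\|w\|_q\le\|w\|_2$ on the frequency side, which amounts to bounding $\max_u|\widehat\Phi(u)|$ by $\|\widehat\Phi\|_{\ell^2}=\|\Phi\|_{L^2(\mu)}$. Its chain is therefore $\|\widehat\Phi\|_{\ell^q}\le\|\widehat\Phi\|_{\ell^2}=\|\Phi\|_{L^2(\mu)}\le\|\Phi\|_{L^q(\mu)}$, with a single power-mean step.

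You instead bound the maximum by the trivial $L^1(\mu)\to\ell^\infty$ estimate from the triangle inequality. This costs a second power-mean step ($L^1\le L^q$). In exchange it gives the intermediate bound $\|\Phi\|_{L^1(\mu)}^{q-2}\|\Phi\|_{L^2(\mu)}^2$, which is slightly sharper than the paper's $\|\Phi\|_{L^2(\mu)}^q$ because $\|\Phi\|_{L^1(\mu)}\le\|\Phi\|_{L^2(\mu)}$.

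One framing correction: the lemma is weaker than Hausdorff--Young, not the same statement. Hausdorff--Young would put $\|\Phi\|_{L^{q'}(\mu)}$, with $q'=q/(q-1)\le q$, on the right-hand side. That is why neither your argument nor the paper's needs Riesz--Thorin.
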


\begin{proof}
For a finite vector $v$ and $p\ge1$, write $\|v\|_p=(\sum_i|v_i|^p)^{1/p}$. Let $N=2^d$, and let $U$ be the orthogonal matrix with entries
$U_{u,x}=N^{-1/2}(-1)^{u\cdot x}$.
Character orthogonality gives $U^TU=I$. If $\Phi$ is viewed as a vector indexed by $V$, then
$\widehat{\Phi}=N^{-1/2}U\Phi $.
For any vector $w\ne0$, every number $|w_i|/\|w\|_2$ lies in $[0,1]$. Raising it to $q\ge2$ can only decrease its square, so summing gives $\|w\|_q\le\|w\|_2$. The zero vector satisfies the same inequality. Apply this to $w=U\Phi$ and use orthogonality to obtain $\|U\Phi\|_q\le\|\Phi\|_2$. For $q>2$, H\"older's inequality with exponents $q/2$ and $q/(q-2)$ gives $\sum_x|\Phi(x)|^2\le(\sum_x|\Phi(x)|^q)^{2/q}N^{1-2/q}$. Taking square roots yields $\|\Phi\|_2\le N^{1/2-1/q}\|\Phi\|_q$. When $q=2$, this last inequality is an equality.
Therefore
\[
  \sum_{u\in V}|\widehat{\Phi}(u)|^q
  =
  N^{-q/2}\|U\Phi\|_q^q
  \le
  N^{-q/2}\|\Phi\|_2^q
  \le
  N^{-q/2}N^{q/2-1}\|\Phi\|_q^q
  =
  N^{-1}\sum_{x\in V}|\Phi(x)|^q.
\]
This proves the claim.
\end{proof}

The next lemma is the form in which the contraction is used.  It strengthens the coefficientwise estimate in Lemma~\ref{lem:ineq-nwrest-aff} by retaining the full Walsh moment.

\begin{lemma}
\label{lem:restriction-moment-contraction}
Let $F(y,z)=G_0(y)+G(y)\cdot z+h(z)$,
where $y\in\F_2^a$, $z\in\F_2^b$, $G_0:\F_2^a\to\F_2$, and
$G\colon \F_2^a\to\F_2^b$.  Then, for every even integer $q\ge2$,
\[
  M_q(F)\le M_q(h).
\]
\end{lemma}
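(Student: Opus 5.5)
We compute the Walsh coefficients of $F$ by summing over $z$ first, exactly as in the proof of Theorem~\ref{thm:genralresultforDSfaff}. For $u\in\F_2^a$ and $v\in\F_2^b$ this gives
\[
\NW_F(u,v)=2^{-a}\sum_{y}(-1)^{G_0(y)+u\cdot y}\,\NW_h(v+G(y)).
\]
Fix $v$ and define $\Phi_v\colon\F_2^a\to\mathbb R$ by $\Phi_v(y)=(-1)^{G_0(y)}\NW_h(v+G(y))$. Then $u\mapsto\NW_F(u,v)$ is precisely the normalized transform $\widehat{\Phi_v}$ of Lemma~\ref{lem:fourier-contraction} with $d=a$. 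Rather than bounding each coefficient by the triangle inequality, which loses the factor $2^a$ seen in Theorem~\ref{thm:genralresultforDSfaff}, we keep the whole slice $u\mapsto\NW_F(u,v)$ together and apply the contraction lemma to it.

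Lemma~\ref{lem:fourier-contraction}, with the given $q\ge2$, yields for each fixed $v$
\[
\sum_u|\NW_F(u,v)|^q\le 2^{-a}\sum_y|\Phi_v(y)|^q=2^{-a}\sum_y|\NW_h(v+G(y))|^q .
\]
Summing over $v$ and exchanging the order of summation, we observe that for each fixed $y$ the map $v\mapsto v+G(y)$ is a bijection of $\F_2^b$. Hence the inner sum equals $M_q(h)$ for every $y$, and the average over the $2^a$ values of $y$ is again $M_q(h)$. Therefore $M_q(F)=\sum_{u,v}|\NW_F(u,v)|^q\le M_q(h)$.

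The main step is to identify the $u$-slice as a Fourier transform of a real function on $\F_2^a$, so that the contraction lemma applies. Once this is done, the rest is bookkeeping. We also check the degenerate cases. When $a=0$, the transform is the identity and equality holds. When $b=0$, $h$ is the constant zero function on a point with $M_q(h)=1$, and the bound reduces to $M_q(F)\le1$, which Parseval gives. Evenness of $q$ is not actually needed, since the absolute values handle signs; the hypothesis simply matches the application with $q=k+1$.
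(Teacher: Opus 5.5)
Your proof is correct and is essentially the paper's own argument. Like the paper, you sum over $z$ to write each $v$-slice as $\widehat{\Phi_v}$ with $\Phi_v(y)=(-1)^{G_0(y)}\NW_h(v+G(y))$, apply Lemma~\ref{lem:fourier-contraction}, and use that $v\mapsto v+G(y)$ permutes $\F_2^b$. Your side remarks are also accurate: the degenerate cases $a=0$ and $b=0$ behave as you say, and evenness of $q$ is not needed because the contraction lemma holds for every real $q\ge2$.
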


\begin{proof}
For $u\in\F_2^a$ and $v\in\F_2^b$, we compute
\[
\begin{aligned}
  \NW_F(u,v)
  &=
  2^{-a-b}
  \sum_{y\in\F_2^a}\sum_{z\in\F_2^b}
  (-1)^{G_0(y)+G(y)\cdot z+h(z)+u\cdot y+v\cdot z} \\
  &=
  2^{-a}
  \sum_{y\in\F_2^a}
  (-1)^{G_0(y)+u\cdot y}
  \NW_h(v+G(y)).
\end{aligned}
\]
For fixed $v$, set
$\Phi_v(y)=(-1)^{G_0(y)}\NW_h(v+G(y))$.
Then
$\NW_F(u,v)=\widehat{\Phi_v}(u)$,
where the Fourier transform is taken on $\F_2^a$.  By Lemma~\ref{lem:fourier-contraction},
\[
  \sum_{u\in\F_2^a}|\NW_F(u,v)|^q
  \le
  2^{-a}\sum_{y\in\F_2^a}|\NW_h(v+G(y))|^q.
\]
Summing over $v\in\F_2^b$, we get
\[
\begin{aligned}
  M_q(F)
  &=
  \sum_{v\in\F_2^b}\sum_{u\in\F_2^a}|\NW_F(u,v)|^q \\
  &\le
  2^{-a}\sum_{y\in\F_2^a}\sum_{v\in\F_2^b}
  |\NW_h(v+G(y))|^q.
\end{aligned}
\]
For each fixed $y$, the map $v\mapsto v+G(y)$ is a bijection of $\F_2^b$.  Hence
$\sum_{v\in\F_2^b}|\NW_h(v+G(y))|^q$ equals $\sum_{v\in\F_2^b}|\NW_h(v)|^q=M_q(h)$.
Therefore
$M_q(F)\le 2^{-a}\sum_{y\in\F_2^a}M_q(h)=M_q(h)$,
as claimed.
\end{proof}

The exact hyperplane formula will follow as the case $t=1$ of Theorem~\ref{thm:Pkf}.  To treat arbitrary codimension, we need to choose the variables eliminated by Gaussian elimination with some care. The next lemma records the form of the argument from~\cite{SergeAnaPante2025} that is needed here.

\begin{lemma}\label{lem:adapted-elimination}
Let $f=f_1+\cdots+f_m$
be a direct sum of functions on pairwise disjoint sets of variables, and let $A\subseteq\F_2^n$ be an affine subspace of codimension $c$. There are affine equations defining $A$ of the form
$x_{i_j}=a_j(y)$ for $1\le j\le c$,
where the pivot variables $x_{i_1},\ldots,x_{i_c}$ are distinct and $y$ consists of the remaining variables, with the following property. Write $f=g+h$, where $g$ is the sum of the blocks containing at least one pivot variable and $h$ is the sum of the other blocks. If $y^{(1)}$ denotes the remaining variables of $g$ and $y^{(2)}$ the variables of $h$, then, after the substitutions, one can write
\begin{equation}\label{eq:adapted-elimination-form}
  f_{|A}(y^{(1)},y^{(2)})
  =G_0(y^{(1)})+G(y^{(1)})\cdot y^{(2)}+h(y^{(2)})
\end{equation}
for a Boolean function $G_0$ and a vector-valued Boolean function $G$. Moreover, at most $c$ blocks occur in $g$.
\end{lemma}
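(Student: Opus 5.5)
This lemma is essentially a restatement of Lemma~\ref{lem:ineq-nwrest-aff}, without the coefficient estimate. The plan is to rerun the same block-respecting Gaussian elimination with $t$ replaced by $c$, and to extract only the structural conclusions.

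First, represent $A$ by $c$ independent affine equations and perform elimination with the pivot rule of Lemma~\ref{lem:ineq-nwrest-aff}. At each step, if some residual row has a nonzero coefficient on a variable from a block that has not yet supplied a pivot, pivot on such a variable. Otherwise, pivot on any nonzero residual column. Independence of the equations guarantees $c$ distinct pivots. Each pivot lies in a block, so at most $c$ blocks are affected; this gives the last claim.

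Next comes the key structural observation. Once the rule is forced to repeat a block, every column belonging to an untouched block is zero in all residual rows. Later row operations among residual rows preserve these zero columns. Hence the first $r$ pivots lie in distinct blocks, and all later ``repeated'' pivots satisfy triangular equations free of the untouched variables $y^{(2)}$. Back substitution from the last pivot downward shows two things:
\begin{itemize}
\item each repeated pivot is an affine function of $y^{(1)}$ alone;
\item each first pivot $v_j$, with $j\le r$, has the form $\alpha_j(y^{(1)})+b_j\cdot y^{(2)}$ for an affine $\alpha_j$ and a constant vector $b_j$.
\end{itemize}
Together these give the required pivot equations $x_{i_j}=a_j(y)$.

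Finally, obtain the coupling form~\eqref{eq:adapted-elimination-form}. In an affected block with first pivot $v_j$, write the block as $v_jB_j+C_j$, where $B_j$ and $C_j$ do not involve $v_j$. Every other variable of that block is either free in $y^{(1)}$ or a repeated pivot, which is affine in $y^{(1)}$. So after substitution, $B_j$ and $C_j$ become functions of $y^{(1)}$ only, and the block becomes
\[
(\alpha_j+b_j\cdot y^{(2)})B_j+C_j.
\]
Summing over the affected blocks gives
\[
G_0=\sum_{j=1}^r(\alpha_jB_j+C_j),\qquad G=\sum_{j=1}^r B_jb_j .
\]
The untouched blocks contribute $h(y^{(2)})$ unchanged, which yields~\eqref{eq:adapted-elimination-form}. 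The case $c=0$ is trivial: take $g=0$ and $h=f$.

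The main obstacle is the middle step: showing that no untouched variable re-enters through repeated pivots, so that at most one pivot per block carries $y^{(2)}$-dependence. Without this, products of two $y^{(2)}$-dependent substitutions could produce terms quadratic in $y^{(2)}$. This is exactly why the invariant that untouched columns stay zero after the first repeated choice must be checked carefully.
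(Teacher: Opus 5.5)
Your proposal is correct and follows the paper's route. The paper's proof applies Lemma~\ref{lem:ineq-nwrest-aff} with $t=c$, and that lemma's block-respecting elimination yields exactly what you reconstruct: first pivots in distinct blocks, later pivots affine in $y^{(1)}$ alone, the $vB+C$ decomposition in each affected block, and at most $c$ affected blocks because each one contains one of the $c$ pivots.
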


\begin{proof}
Apply Lemma~\textup{\ref{lem:ineq-nwrest-aff}} with $t=c$. Its proof gives $c$ distinct pivots and shows that each affected block has at most one substituted input whose final expression involves the untouched variables. Every other substituted input depends only on the remaining affected variables. Decomposing the block as $vB+C$ in that one input shows that its restriction is affine in the untouched variables. Summing the affected blocks gives~\eqref{eq:adapted-elimination-form}. Since each affected block contains a pivot and there are $c$ pivots, at most $c$ blocks occur in $g$.
\end{proof}

\begin{lemma}\label{lemm:monomial-direct-sum-restriction-moment}
Let $f=f_1+\cdots+f_m$
be a direct sum of monomials of degrees $r_1\ge\cdots\ge r_m\ge2$.
Let $A\subseteq\F_2^n$ be an affine subspace of codimension $c\le t<m$. If $q\ge2$ is even, then
\[
  M_q(f_{|A})
  \le
  \prod_{i=1}^{m-t}
  \left((1-2^{1-r_i})^q+(2^{r_i}-1)2^{q(1-r_i)}\right).
\]
\end{lemma}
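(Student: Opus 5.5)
Apply Lemma~\ref{lem:adapted-elimination} to $A$ with its codimension $c$. This gives pivot equations and the coupling form~\eqref{eq:adapted-elimination-form}, in which $g$ collects the affected blocks and $h=\sum_{i\in K}f_i$ collects the untouched ones. Since there are $c$ pivots, at most $c\le t$ blocks are affected, so $|K|\ge m-c\ge m-t$. Lemma~\ref{lem:restriction-moment-contraction}, with $a=\dim y^{(1)}$ and $b=\dim y^{(2)}$, then gives $M_q(f_{\mid A})\le M_q(h)$. Unlike Theorem~\ref{thm:genralresultforDSfaff}, this bound carries no factor $2^{p-t}$, because the contraction keeps the full moment. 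The degenerate case $a=0$ is covered, since the sum over $y$ then has a single term. If every pivot lies in an unused variable, no block is affected, $h=f$, and the same inequality holds.

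Next, factor the untouched moment. As in the proof of Theorem~\ref{thm:genralresultforDSfaff}, disjointness of the blocks gives $M_q(h)=\prod_{i\in K}M_q(f_i)=\prod_{i\in K}A_{r_i}(q)$, using $M_q(x_1\cdots x_r)=A_r(q)$. Each factor satisfies $0<A_{r_i}(q)\le1$, so deleting factors can only increase the product. Choose any $K'\subseteq K$ with $|K'|=m-t$. Then $M_q(h)\le\prod_{i\in K'}A_{r_i}(q)$.

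The remaining step, and the only real obstacle, is to compare an arbitrary $(m-t)$-subset product with the one using the $m-t$ largest degrees $r_1,\dots,r_{m-t}$. For this I need $A_r(q)$ to be nondecreasing in $r$. Lemma~\ref{prop:encrease-sum} gives exactly this: its $S(r)$ with exponent $k$ replaced by $q$ equals $A_r(q)$, and the lemma holds for every integer exponent $\ge2$, in particular for even $q\ge2$. So the factors are nondecreasing in $r$. Given this monotonicity, an exchange argument finishes the proof. Sort $K'$ by degree; its $j$-th largest degree is at most $r_j$, so its $j$-th factor is at most $A_{r_j}(q)$. Multiplying these nonnegative inequalities gives $\prod_{i\in K'}A_{r_i}(q)\le\prod_{i=1}^{m-t}A_{r_i}(q)$, which is the claimed bound.

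Unlike Remark~\ref{rem:no-universal-moment-ordering}, no $2^{-r}$ weighting appears here, so this monotone ordering is legitimate. I would state this point explicitly in the proof.
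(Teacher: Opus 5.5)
Your proposal is correct and follows the paper's proof step for step: adapted elimination gives the coupling form, Lemma~\ref{lem:restriction-moment-contraction} gives $M_q(f_{\mid A})\le M_q(h)$, and the moments factor over the untouched blocks; then you drop surplus factors in $(0,1]$ and compare the rest termwise with $A_{r_1}(q),\dots,A_{r_{m-t}}(q)$, using the monotonicity of Lemma~\ref{prop:encrease-sum} with exponent $q$. Sorting $K'$ by degree matches the paper's observation that the $j$-th unaffected index satisfies $i_j\ge j$, hence $r_{i_j}\le r_j$.
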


\begin{proof}
Apply Lemma~\ref{lem:adapted-elimination}. We may write
$f_{|A}(y,z)=G_0(y)+G(y)\cdot z+h(z)$,
where $h$ is the direct sum of the monomial blocks that contain no pivot variable. Since at most $c\le t$ blocks are affected, $h$ contains at least $m-t$ of the original monomials.

By Lemma~\ref{lem:restriction-moment-contraction},
$M_q(f_{|A})\le M_q(h)$.
If $I$ is the set of indices of the unaffected blocks, then the multiplicativity of Walsh moments gives
$M_q(h)=\prod_{i\in I}A_{r_i}(q)$.
Each $A_{r_i}(q)$ is a Walsh moment, so Parseval gives $0<A_{r_i}(q)\le1$. List the unaffected indices as $i_1<\cdots<i_\ell$, where $\ell=|I|\ge m-t$. Removing the last $\ell-(m-t)$ factors can only increase their product. Moreover, $i_j\ge j$, so $r_{i_j}\le r_j$. Lemma~\textup{\ref{prop:encrease-sum}}, applied with exponent $q$, gives $A_{r_{i_j}}(q)\le A_{r_j}(q)$. Hence
\[
\prod_{i\in I}A_{r_i}(q)
\le\prod_{j=1}^{m-t}A_{r_{i_j}}(q)
\le\prod_{j=1}^{m-t}A_{r_j}(q).
\]
Combining this with $M_q(f_{\mid A})\le M_q(h)$ and substituting the definition of each $A_{r_j}(q)$ gives exactly the asserted upper bound.
\end{proof}

\begin{theorem}\label{thm:Pkf}
Let $n$ and $k$ be integers with $k\ge3$ odd. Let $f=f_1+\cdots+f_m$
be a direct sum of monomials of degrees $r_1\ge r_2\ge\cdots\ge r_m\ge2$.
Then, for every integer $0\le t<m$,
\[
  P_{k,t}(f)
  =P_k(f_1+\cdots+f_{m-t})
  =
  \frac12\left(
  1-
  \prod_{i=1}^{m-t}
  \left((1-2^{1-r_i})^{k+1}
  +(2^{r_i}-1)2^{(k+1)(1-r_i)}\right)
  \right).
\]
\end{theorem}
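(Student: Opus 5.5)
The plan is to prove the two inequalities $P_{k,t}(f)\ge$ and $P_{k,t}(f)\le$ the displayed value separately, with $q=k+1$ even throughout.

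\emph{Lower bound.} Let $A$ be any affine subspace with $\codim A\le t$. Lemma~\ref{lemm:monomial-direct-sum-restriction-moment} gives
\[
M_q(f_{\mid A})\le\prod_{i=1}^{m-t}A_{r_i}(q).
\]
That lemma in turn rests on the pivot construction of Lemma~\ref{lem:adapted-elimination}, the contraction Lemma~\ref{lem:restriction-moment-contraction}, and the monotonicity Lemma~\ref{prop:encrease-sum}. Since $k$ is odd, $P_k(f_{\mid A})=\frac12(1-M_q(f_{\mid A}))$. Hence
\[
P_k(f_{\mid A})\ge\frac12\Bigl(1-\prod_{i=1}^{m-t}A_{r_i}(q)\Bigr).
\]
Minimizing over $A$ gives the lower bound. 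The case $t=0$ is just formula~\eqref{eq:DSM-pkf}.

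\emph{Upper bound.} For $t\ge1$, let $A$ be the subspace of codimension $t$ obtained by setting one variable to zero in each of the monomials $f_{m-t+1},\dots,f_m$, which are the ones of lowest degree. On $A$ each such monomial vanishes identically. The restriction is therefore $f_1+\cdots+f_{m-t}$, plus the remaining free variables of the killed blocks, on which it does not depend. A function with dummy variables $w\in\F_2^s$ has $\NW(u,w')=\NW_{g}(u)\delta_0(w')$. Its Walsh moment is therefore unchanged, equivalently by the direct-sum formula~\eqref{eq:nnl-direct-sum-many} with a zero (affine) summand of moment $1$. So
\[
M_q(f_{\mid A})=\prod_{i=1}^{m-t}A_{r_i}(q),
\]
and hence $P_{k,t}(f)\le P_k(f_{\mid A})=P_k(f_1+\cdots+f_{m-t})$. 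This gives equality, and the explicit formula follows from~\eqref{eq:DSM-pkf} applied to the smaller sum.

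\emph{Main obstacle.} The real content is the lower bound, and it is already packaged in Lemma~\ref{lemm:monomial-direct-sum-restriction-moment}. The delicate points are that the contraction lemma controls the full moment rather than coefficientwise maxima, and that the surviving blocks can be compared to the $m-t$ highest-degree ones via $A_r(q)$ increasing in $r$. Lemma~\ref{prop:encrease-sum} is stated for exponent $k\ge2$, so it applies with exponent $q$. In writing this up, I would check two details. First, the lemma covers codimension $c<t$ as well as $c=t$. Second, the parametrization and dummy-variable conventions agree with Definition~\ref{def:restriction-nonhomomorphicity}, which is fine for odd $k$ by Lemma~\ref{lem: EA equiv-nonhomomor}.
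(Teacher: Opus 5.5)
Your proposal is correct and follows the paper's proof essentially step for step. The lower bound comes from Lemma~\ref{lemm:monomial-direct-sum-restriction-moment}, applied to every affine subspace of codimension $c\le t$; that lemma already combines the pivot construction, moment contraction, and monotonicity of $A_r(q)$ in $r$. The upper bound comes from setting one variable to zero in each of the $t$ lowest-degree monomials, where the leftover dummy variables leave the Walsh moment unchanged.
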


\begin{proof}
Put $q=k+1$. Since $k$ is odd, $q$ is even. Let $A$ be any affine subspace of codimension $c\le t$. By Lemma~\ref{lemm:monomial-direct-sum-restriction-moment},
\[
  M_q(f_{|A})
  \le
  \prod_{i=1}^{m-t}
  \left((1-2^{1-r_i})^q+(2^{r_i}-1)2^{q(1-r_i)}\right).
\]
Relation~\eqref{firstformulaof pkf} therefore gives
\[
  P_k(f_{|A})
  \ge
  \frac12\left(
  1-
  \prod_{i=1}^{m-t}
  \left((1-2^{1-r_i})^q+(2^{r_i}-1)2^{q(1-r_i)}\right)
  \right).
\]
Taking the minimum over all affine subspaces of codimension at most $t$ yields the same lower bound for $P_{k,t}(f)$.

For the reverse inequality, choose an affine subspace $A_0$ by setting one variable in each of the blocks $f_{m-t+1},\ldots,f_m$ equal to zero. These $t$ monomials vanish on $A_0$, while the other blocks remain unchanged. Hence, up to dummy variables,
$f_{|A_0}=f_1+\cdots+f_{m-t}$.
It follows that
$P_{k,t}(f)\le P_k(f_1+\cdots+f_{m-t})$.
The direct-sum formula gives
\[
  M_q(f_1+\cdots+f_{m-t})
  =
  \prod_{i=1}^{m-t}
  \left((1-2^{1-r_i})^q+(2^{r_i}-1)2^{q(1-r_i)}\right),
\]
so Relation~\eqref{firstformulaof pkf} shows that the lower bound is attained. Replacing $q$ by $k+1$ completes the proof.
\end{proof}

Taking $t=1$ gives the exact hyperplane formula directly.
\begin{corollary}\label{cor:Pkf-hyperplane}
Under the assumptions of Theorem~\textup{\ref{thm:Pkf}},
\[
P_{k,1}(f)=P_k(f_1+\cdots+f_{m-1})
=\frac12\left(1-\prod_{i=1}^{m-1}\left((1-2^{1-r_i})^{k+1}+(2^{r_i}-1)2^{(k+1)(1-r_i)}\right)\right).
\]
A minimizing hyperplane is obtained by setting one variable in the monomial $f_m$ equal to zero.
\end{corollary}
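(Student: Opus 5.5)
The corollary is the special case $t=1$ of Theorem~\ref{thm:Pkf}, so the plan is to specialize that theorem and then check the minimizer directly. The hypotheses of Theorem~\ref{thm:Pkf} require $0\le t<m$. With $t=1$ this holds whenever $m\ge2$, which is the setting of the corollary since $f$ is a direct sum of at least two monomials. The case $m=1$ is degenerate: a hyperplane $x_1=0$ kills the monomial and the empty product is $1$, so both sides vanish. Substituting $t=1$ gives
\[
P_{k,1}(f)=P_k(f_1+\cdots+f_{m-1})=\frac12\Bigl(1-\prod_{i=1}^{m-1}A_{r_i}(k+1)\Bigr),
\]
with $A_r(q)$ the monomial moment factor. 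This is exactly the displayed formula after expanding $A_{r_i}(k+1)$.

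For the minimizing hyperplane, I would follow the attainment half of the proof of Theorem~\ref{thm:Pkf}. Let $H=\{x: x_j=0\}$, where $x_j$ is a variable of the lowest-degree monomial $f_m$. On $H$ the monomial $f_m$ vanishes identically, and the other blocks are unchanged. So $f_{\mid H}=f_1+\cdots+f_{m-1}$, possibly with $r_m-1$ dummy variables.

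The only point needing a remark is that dummy variables do not change $P_k$. The extra factor in the direct-sum factorization~\eqref{eq:nnl-direct-sum-many} is the moment of the zero function, which equals $1$. Hence $P_k(f_{\mid H})=P_k(f_1+\cdots+f_{m-1})=P_{k,1}(f)$, so $H$ attains the minimum.

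No real obstacle remains. The lower bound, where the contraction Lemma~\ref{lem:restriction-moment-contraction} and the monotonicity Lemma~\ref{prop:encrease-sum} do the work, is already contained in Theorem~\ref{thm:Pkf}.
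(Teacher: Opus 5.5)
Your proposal is correct and follows the paper's proof. For $m\ge2$ you specialize Theorem~\ref{thm:Pkf} to $t=1$, whose attainment step already sets a variable of $f_m$ to zero; for $m=1$ the hyperplane makes the restriction constant, so both sides equal zero (your remark on dummy variables contributing a moment factor of $1$ is a detail the paper leaves implicit).
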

\begin{proof}
For $m\ge2$, apply Theorem~\textup{\ref{thm:Pkf}} with $t=1$. For $m=1$, the stated hyperplane makes the restriction constant, giving zero, as does the empty product in the formula.
\end{proof}

\begin{corollary}\label{cor:monomial-saturation}
Under the hypotheses of Theorem~\textup{\ref{thm:Pkf}}, $P_{k,t}(f)=0$ for $m\le t<n$. If $f$ is EA-equivalent to a direct sum of $m$ quadratic monomials, then for every $0\le t<n$,
\[
P_{k,t}(f)=\frac12\left(1-2^{-(k-1)\max(m-t,0)}\right).
\]
\end{corollary}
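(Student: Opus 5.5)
The corollary has two parts, and I will treat them in order.

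For the first part, fix $m\le t<n$. Since $P_k$ is a probability, $P_{k,t}(f)\ge0$, so I only need an affine subspace of codimension at most $t$ on which the restriction has vanishing $P_k$. I will take $A_0$ to be the subspace obtained by setting one variable of each of the $m$ monomials $f_1,\ldots,f_m$ equal to zero. This subspace has codimension $m\le t$, and every monomial vanishes on it, so $f_{\mid A_0}$ is the zero function in $n-m\ge1$ variables. Its only nonzero normalized Walsh coefficient is $\NW(0)=1$, so $M_{k+1}=1$ and equation~\eqref{firstformulaof pkf} gives $P_k(f_{\mid A_0})=0$. Hence $P_{k,t}(f)=0$. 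Lemma~\ref{lem:exact-codimension-minima} is not needed here, because Definition~\ref{def:restriction-nonhomomorphicity} already allows codimension at most $t$. The only point to check is that $A_0$ is admissible, that is, $m\le t<n$ forces $n-m\ge1$.

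For the second part, Proposition~\ref{prop:rnnl-equivalence} shows that $P_{k,t}$ is EA-invariant. So I may assume $f=f_1+\cdots+f_m$ with every $r_i=2$, possibly with unused variables. I will split into the two ranges of $t$.

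When $t\ge m$, the first part gives $0$, which agrees with $\frac12(1-2^{0})$ because $\max(m-t,0)=0$.

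When $0\le t<m$, Theorem~\ref{thm:Pkf} applies. Each monomial factor is
\[A_2(k+1)=2^{-(k+1)}+3\cdot 2^{-(k+1)}=2^{-(k-1)},\]
so the product of the $m-t$ factors is $2^{-(k-1)(m-t)}$, which is the claimed formula.

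The only step needing care is the presence of unused variables after EA-equivalence. The EA-equivalence may produce $n$ larger than $2m$, whereas Theorem~\ref{thm:Pkf} was stated with $n=\sum r_i$. I will handle this in one of two ways. One option is to note that the lower-bound argument, namely Lemma~\ref{lemm:monomial-direct-sum-restriction-moment} through Lemma~\ref{lem:adapted-elimination}, counts only affected monomial blocks. A pivot lying in an unused variable affects no monomial, so the bound $M_q(f_{\mid A})\le\prod_{i\le m-t}A_{r_i}(q)$ still holds. The other option is to treat the unused variables as a block $f_0=0$ whose moment is $1$, and apply Lemma~\ref{lem:restriction-moment-contraction} directly. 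The attaining subspace $A_0$ is unchanged in either case, so the formula holds for every $0\le t<n$.
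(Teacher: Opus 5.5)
Your proof is correct and follows the paper's argument. For $m\le t<n$ you use the same codimension-$m$ subspace $A_0$ together with nonnegativity. For $t<m$ you compute $A_2(k+1)=2^{1-k}$ and apply Theorem~\ref{thm:Pkf}, and you transfer the result by EA-invariance via Proposition~\ref{prop:rnnl-equivalence}. Your extra handling of unused variables is a valid refinement: you treat them as a zero block of moment $1$ in Lemma~\ref{lem:adapted-elimination} before invoking the contraction. The paper covers this point only implicitly, through its separate remark on the case $m=0$.
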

\begin{proof}
Choose one variable in every monomial and set all $m$ chosen variables equal to zero. Since the blocks are disjoint, these equations are independent and define a subspace $A_0$ of codimension $m$. Every monomial vanishes on $A_0$, so $P_k(f_{\mid A_0})=0$. If $m\le t<n$, this subspace is admissible in the definition of $P_{k,t}(f)$. Nonnegativity now gives $0\le P_{k,t}(f)\le P_k(f_{\mid A_0})=0$.

For a quadratic monomial, all four normalized Walsh coefficients have magnitude $1/2$. Its moment is therefore $A_2(k+1)=4(1/2)^{k+1}=2^{1-k}$. If $0\le t<m$, Theorem~\textup{\ref{thm:Pkf}} leaves $m-t$ such factors and gives $P_{k,t}(f)=\frac12(1-2^{-(k-1)(m-t)})$. If $m\le t<n$, the preceding paragraph gives zero, which equals the displayed formula because $\max(m-t,0)=0$. Proposition~\textup{\ref{prop:rnnl-equivalence}} transfers these values to every EA-equivalent function. If $m=0$, the direct sum is the zero function, and every EA-equivalent function is affine. Its restrictions are affine and have $P_k=0$, so the formula also holds in this case.
\end{proof}

\section{Threshold functions}\label{sec:threshold-functions}

We now connect restriction $k$-th-order nonhomomorphicity with restriction normalized nonlinearity and then apply this connection to threshold functions. For a Boolean function $f$ in $n$ variables and $0\le t<n$, recall that
$\rnnl_t(f)=\min_{\codim(A)\le t}\nnl(f_{\mid A})$.
The next result shows that the parameters $P_{k,t}$ recover $\rnnl_t$ as the odd order $k$ tends to infinity.

\begin{theorem}\label{thm:Pk-rnnl-limit}
Let $f$ be a Boolean function in $n$ variables and let $0\le t<n$. Put $R=1-2\rnnl_t(f)$. For every odd integer $k\ge3$,
\[
1-2\rnnl_t(f)
\le \bigl(1-2P_{k,t}(f)\bigr)^{1/(k+1)}
\le \min\left\{2^{n/(k+1)}R,\ R^{(k-1)/(k+1)}\right\}.
\]
It follows that
\[
\lim_{\substack{k\to\infty\\ k\text{ odd}}}
\bigl(1-2P_{k,t}(f)\bigr)^{1/(k+1)}
=1-2\rnnl_t(f).
\]
\end{theorem}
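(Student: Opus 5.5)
Write $q=k+1$, which is even. For each affine subspace $A$ of codimension at most $t$, put $\mu_A=\max_u|\NW_{f_{\mid A}}(u)|=1-2\nnl(f_{\mid A})$. Then $R=\max_A\mu_A$, and since $1-2P_k(f_{\mid A})=M_q(f_{\mid A})$ we also have $1-2P_{k,t}(f)=\max_A M_q(f_{\mid A})$. Both maxima range over the same finite family of subspaces. The whole proof will therefore come from comparing $M_q(f_{\mid A})$ with $\mu_A$ for a single restriction. Any identification of $A$ with $\F_2^{\dim A}$ may be used: for odd $k$, Lemma~\ref{lem: EA equiv-nonhomomor} makes $M_q$ independent of the parametrization, and $\mu_A$ is independent of it by the sign-and-permutation formula~\eqref{eq-EAequiv-nhommo}.

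For the lower bound, I keep only the largest term in the moment: $M_q(f_{\mid A})\ge\mu_A^q$. Now choose $A^*$ with $\mu_{A^*}=R$. Then $1-2P_{k,t}(f)\ge M_q(f_{\mid A^*})\ge R^q$, and taking $q$-th roots gives the left inequality. All quantities involved are nonnegative, so the roots are legitimate.

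For the upper bounds I need two estimates for an arbitrary $A$, with $d=\dim A\le n$.
\begin{itemize}
\item Bounding each of the $2^d$ coefficients crudely gives $M_q(f_{\mid A})\le2^d\mu_A^q\le2^nR^q$.
\item Combining the maximum with Parseval gives $M_q(f_{\mid A})\le\mu_A^{q-2}\sum_u\NW_{f_{\mid A}}(u)^2=\mu_A^{q-2}\le R^{q-2}$.
\end{itemize}
Taking the maximum over $A$ and then $q$-th roots yields the two expressions $2^{n/q}R$ and $R^{(q-2)/q}=R^{(k-1)/(k+1)}$, so the minimum of the two is also an upper bound.

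For the limit, squeeze between the bounds: $R\le(1-2P_{k,t}(f))^{1/(k+1)}\le2^{n/(k+1)}R$, and $2^{n/(k+1)}\to1$ as $k\to\infty$. This works even when $R=0$, although by Parseval $R>0$ anyway.

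There is no real obstacle here. The only point needing care is that the maximizers for $\mu_A$ and for $M_q$ may be different subspaces. This is handled by taking the maximum over $A$ separately in each inequality, as done above.
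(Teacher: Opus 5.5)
Your proposal is correct and matches the paper's proof essentially step for step. Both arguments use the largest-term lower bound $M_q(f_{\mid A})\ge\mu_A^q$ and, for the upper bounds, the crude count $2^n\mu_A^q$ and the Parseval estimate $\mu_A^{q-2}$, each maximized over $A$ separately. Both then conclude by the squeeze $R\le(1-2P_{k,t}(f))^{1/(k+1)}\le 2^{n/(k+1)}R$.
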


\begin{proof}
Put $q=k+1$. For every affine space $A$ of codimension at most $t$, Relation~\eqref{firstformulaof pkf} gives
$1-2P_k(f_{\mid A})=\sum_a |\NW_{f_{\mid A}}(a)|^q$.
If $M_A=\max_a|\NW_{f_{\mid A}}(a)|$,
then
\[
M_A^q\le \sum_a |\NW_{f_{\mid A}}(a)|^q\le 2^nM_A^q,
\]
since every restriction has at most $2^n$ Walsh coefficients. Taking the maximum over all affine spaces $A$ of codimension at most $t$ yields
\[
\max_A M_A
\le \bigl(1-2P_{k,t}(f)\bigr)^{1/q}
\le 2^{n/q}\max_A M_A.
\]
By Definition~\textup{\ref{def:nnl}},
$\max_A M_A=1-2\min_A\nnl(f_{\mid A})=1-2\rnnl_t(f)$.
Parseval also gives $\sum_a|\NW_{f_{\mid A}}(a)|^q\le M_A^{q-2}\sum_a|\NW_{f_{\mid A}}(a)|^2=M_A^{q-2}$. Since $q-2>0$ and $M_A\le R$ for every $A$, this gives $M_q(f_{\mid A})\le R^{q-2}$. Also $1-2P_{k,t}(f)=\max_A M_q(f_{\mid A})$, directly from the definition of $P_{k,t}$. Thus $(1-2P_{k,t}(f))^{1/q}\le R^{1-2/q}$, proving the second upper bound.

Finally, $R$ is independent of $q$. The first two inequalities give $0\le(1-2P_{k,t}(f))^{1/q}-R\le(2^{n/q}-1)R$. The right side tends to zero as $q\to\infty$. Along odd $k$, the exponent $q=k+1$ tends to infinity through even integers, so the squeeze theorem gives the stated limit.
\end{proof}

\begin{definition}
Let $H=x_0+V$ be an affine subspace of $\F_2^n$, with $\dim V=m$, and
let $\phi\colon \F_2^m\to H$ be an affine isomorphism.  The Walsh transform of
the restriction $F_{\mid H}$ means the Walsh transform of the Boolean
function $F\circ\phi$ on $\F_2^m$.  Different choices of $\phi$ change this
spectrum only by a permutation and signs, and therefore preserve the absolute
values of its coefficients.
\end{definition}

For $u\ne0$ and $\varepsilon\in\F_2$, write $H(u,\varepsilon)=\{x\in\F_2^n\mid u\cdot x=\varepsilon\}$. For a vector space $U$, let $U^*$ denote its dual space of linear functionals to $\F_2$.

\begin{theorem}
\label{thm:restriction-walsh-identity}
Let $F\colon \F_2^n\to\F_2$. Let $U$ be a subspace of $\F_2^n$ of
dimension $s$, let $\lambda\colon U\to\F_2$ be a linear functional, and let
\[
A(U,\lambda)=\{x\in\F_2^n\mid u\cdot x=\lambda(u)\text{ for every }u\in U\}.
\]
Then $A(U,\lambda)$ is a nonempty affine subspace of codimension $s$.
Choose $x_0\in A(U,\lambda)$, put
\[
V=U^\perp=\{x\in\F_2^n\mid u\cdot x=0\text{ for every }u\in U\},
\]
choose a linear isomorphism $T\colon \F_2^{n-s}\to V$, and define
$\phi\colon \F_2^{n-s}\to A(U,\lambda)$ by $\phi(y)=x_0+Ty$.
%
For every $a\in\F_2^{n-s}$, let $b\in\F_2^n$ satisfy
\begin{equation}\label{eq:restriction-dual-lift}
a\cdot y=b\cdot Ty\qquad\text{for every }y\in\F_2^{n-s}.
\end{equation}
Such a vector $b$ exists. Then
\begin{equation}\label{eq:restriction-walsh-general}
W_{F\circ\phi}(a)
=
(-1)^{b\cdot x_0}2^{-s}
\sum_{u\in U}(-1)^{\lambda(u)}W_F(b+u).
\end{equation}
Equivalently, in terms of the normalized Walsh transform,
\begin{equation}\label{eq:restriction-normalized-walsh-general}
\NW_{F\circ\phi}(a)
=
(-1)^{b\cdot x_0}
\sum_{u\in U}(-1)^{\lambda(u)}\NW_F(b+u).
\end{equation}
The right-hand sides of \eqref{eq:restriction-walsh-general} and
\eqref{eq:restriction-normalized-walsh-general} do not depend on the choice
of $b$ satisfying \eqref{eq:restriction-dual-lift}.

In particular, if $s=1$, $U=\langle u\rangle$ with $u\ne0$, and
$\lambda(u)=\varepsilon$, then $A(U,\lambda)=H(u,\varepsilon)$ and
\begin{equation}\label{eq:restriction-walsh-hyperplane}
W_{F\circ\phi}(a)
=
(-1)^{b\cdot x_0}\frac12
\left(W_F(b)+(-1)^\varepsilon W_F(b+u)\right).
\end{equation}
\end{theorem}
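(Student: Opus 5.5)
The plan is a direct computation using character orthogonality over the subspace $U$, in four steps.

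\emph{Nonemptiness and codimension.} Choose a basis $u_1,\ldots,u_s$ of $U$. Then $A(U,\lambda)$ is the solution set of the $s$ equations $u_i\cdot x=\lambda(u_i)$. The left sides are linearly independent, so the system has full row rank and is consistent. Its solution set is a coset of $V=U^\perp$, and $\dim V=n-s$. Any other element $u\in U$ gives a redundant equation, because $\lambda$ is linear. This also shows that $\phi$ is an affine isomorphism onto $A(U,\lambda)$.

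\emph{Existence of the lift $b$.} The functional $y\mapsto a\cdot y$ on $\F_2^{n-s}$ transports via $T^{-1}$ to a linear functional on $V$. Every linear functional on a subspace of $\F_2^n$ extends to $\F_2^n$, and every functional on $\F_2^n$ has the form $x\mapsto b\cdot x$. Such a $b$ satisfies \eqref{eq:restriction-dual-lift}. Two lifts differ by an element of $V^\perp=(U^\perp)^\perp=U$.

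\emph{The identity.} Write the indicator of $A(U,\lambda)$ as
\[
1_A(x)=2^{-s}\sum_{u\in U}(-1)^{u\cdot x+\lambda(u)}.
\]
This holds because the map $u\mapsto u\cdot x+\lambda(u)$ is a linear functional on $U$. It vanishes identically exactly when $x\in A$; otherwise its character sum is zero. Since $\phi$ is a bijection onto $A$,
\[
W_{F\circ\phi}(a)=\sum_{x\in A}(-1)^{F(x)+b\cdot(x+x_0)}.
\]
Here I use $x=x_0+Ty$ and $a\cdot y=b\cdot Ty=b\cdot(x+x_0)$. Inserting the indicator and exchanging the sums gives
\[
(-1)^{b\cdot x_0}2^{-s}\sum_{u\in U}(-1)^{\lambda(u)}\sum_x(-1)^{F(x)+(b+u)\cdot x},
\]
which is \eqref{eq:restriction-walsh-general}. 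Dividing by $2^{n-s}$ gives the normalized form \eqref{eq:restriction-normalized-walsh-general}.

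\emph{Independence of $b$ and the hyperplane case.} Independence of the choice of $b$ follows because the left side does not involve $b$. It can also be checked directly: replacing $b$ by $b+w$ with $w\in U$ reindexes the sum and produces the factors $(-1)^{\lambda(w)}$ and $(-1)^{w\cdot x_0}$, which cancel because $x_0\in A$. The hyperplane case $s=1$ is the specialization $U=\{0,u\}$.

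The only step requiring care is the bookkeeping of the sign $(-1)^{b\cdot x_0}$ and the dual lift. The rest is routine orthogonality.
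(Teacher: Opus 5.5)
Your proposal is correct and follows essentially the same route as the paper: write the indicator of $A(U,\lambda)$ as a character sum over $U$, lift $a$ to $b$ by extending a functional on $V$, exchange sums, and check independence of $b$ by reindexing with $u_0\in U$. The differences are cosmetic. You prove nonemptiness by a rank argument on a basis of $U$ rather than by surjectivity of $x\mapsto(u\mapsto u\cdot x)$ onto $U^*$. You also observe that independence of $b$ already follows because the left side does not involve $b$.
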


\begin{proof}
The bilinear form $(u,x)\mapsto u\cdot x$ on $\F_2^n$ is
nondegenerate. Hence the linear map
\[
\F_2^n\longrightarrow U^*,\qquad
x\longmapsto (u\mapsto u\cdot x),
\textrm{ where $U^*$ is the dual of $U$,}\]

is surjective. Therefore there exists $x_0\in\F_2^n$ such that
$u\cdot x_0=\lambda(u)$ for every $u\in U$, and thus
$A(U,\lambda)$ is nonempty. Moreover,
$A(U,\lambda)=x_0+U^\perp=x_0+V$.
Since $\dim U=s$, we have $\dim V=n-s$, and therefore
$A(U,\lambda)$ has codimension $s$.

For a fixed $a\in\F_2^{n-s}$, the map
$y\longmapsto a\cdot y$
is a linear functional on $\F_2^{n-s}$. Since $T$ is an isomorphism onto
$V$, it determines a linear functional $\ell$ on $V$ by
$\ell(Ty)=a\cdot y$. Every linear functional on a subspace of $\F_2^n$
extends to a linear functional on $\F_2^n$. By the nondegeneracy of the
dot product, such an extension has the form $x\mapsto b\cdot x$ for some
$b\in\F_2^n$. This proves the existence of $b$ satisfying
\eqref{eq:restriction-dual-lift}.

Using $x=x_0+Ty$ and \eqref{eq:restriction-dual-lift}, we obtain
\begin{align*}
W_{F\circ\phi}(a)
&=\sum_{y\in\F_2^{n-s}}(-1)^{F(x_0+Ty)+a\cdot y}
=\sum_{y\in\F_2^{n-s}}(-1)^{F(x_0+Ty)+b\cdot Ty}
=(-1)^{b\cdot x_0}
\sum_{x\in A(U,\lambda)}(-1)^{F(x)+b\cdot x}.
\end{align*}
For every $x\in\F_2^n$, the indicator function of $A(U,\lambda)$ is
\begin{equation}\label{eq:affine-space-indicator}
1_{A(U,\lambda)}(x)
=2^{-s}\sum_{u\in U}(-1)^{u\cdot x+\lambda(u)}.
\end{equation}
Indeed, if $x\in A(U,\lambda)$, then every term in the sum equals $1$.
If $x\notin A(U,\lambda)$, the map
$u\mapsto u\cdot x+\lambda(u)$ is a nonzero linear functional on $U$, so
its associated character has sum zero over $U$.

Applying \eqref{eq:affine-space-indicator}, we find
\begin{align*}
\sum_{x\in A(U,\lambda)}(-1)^{F(x)+b\cdot x}
&=\sum_{x\in\F_2^n}1_{A(U,\lambda)}(x)
  (-1)^{F(x)+b\cdot x}\\
&=2^{-s}\sum_{u\in U}(-1)^{\lambda(u)}
  \sum_{x\in\F_2^n}(-1)^{F(x)+(b+u)\cdot x}\\
&=2^{-s}\sum_{u\in U}(-1)^{\lambda(u)}W_F(b+u).
\end{align*}
This proves \eqref{eq:restriction-walsh-general}.

Since $F\circ\phi$ has $n-s$ variables, dividing
\eqref{eq:restriction-walsh-general} by $2^{n-s}$ gives
\[
\NW_{F\circ\phi}(a)
=(-1)^{b\cdot x_0}
\sum_{u\in U}(-1)^{\lambda(u)}\NW_F(b+u),
\]
which is \eqref{eq:restriction-normalized-walsh-general}.

It remains to verify independence of the choice of $b$. If $b'$ also
satisfies \eqref{eq:restriction-dual-lift}, then
$(b+b')\cdot v=0$ for every $v\in V$. Hence
$b+b'\in V^\perp=U$. Write $b'=b+u_0$ with $u_0\in U$. Since
$x_0\in A(U,\lambda)$, we have $u_0\cdot x_0=\lambda(u_0)$. Reindexing
the sum by $u'=u+u_0$ gives
\begin{align*}
&(-1)^{b'\cdot x_0}
\sum_{u\in U}(-1)^{\lambda(u)}W_F(b'+u)\\
&\quad=(-1)^{b\cdot x_0+\lambda(u_0)}
\sum_{u'\in U}(-1)^{\lambda(u'+u_0)}W_F(b+u')\\
&\quad=(-1)^{b\cdot x_0}
\sum_{u'\in U}(-1)^{\lambda(u')}W_F(b+u'),
\end{align*}
because $\lambda$ is linear. Thus the expression is independent of $b$.
The codimension-one formula follows by taking
$U=\{0,u\}$ and $\lambda(u)=\varepsilon$.
\end{proof}

\begin{corollary}\label{cor:rnnl-spectral-characterization}
Let $F\colon \F_2^n\to\F_2$ and let $0\le t<n$. Then
\begin{equation}\label{eq:rnnl-spectral-characterization}
\rnnl_t(F)
=
\frac12\left(
1-
\max_{\substack{U\le\F_2^n\\ \dim U\le t}}
\max_{\lambda\in U^*}
\max_{b\in\F_2^n}
\left|
\sum_{u\in U}(-1)^{\lambda(u)}\NW_F(b+u)
\right|
\right).
\end{equation}
\end{corollary}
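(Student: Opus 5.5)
The plan is to combine the definition of $\rnnl_t$ with Definition~\ref{def:nnl} and then rewrite each restricted Walsh coefficient using Theorem~\ref{thm:restriction-walsh-identity}.

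\textbf{Step 1: reduce to a maximum of coefficients.} By definition, $\rnnl_t(F)=\min_{\codim A\le t}\nnl(F_{\mid A})$. For each admissible $A$, Definition~\ref{def:nnl} gives $\nnl(F_{\mid A})=\frac12(1-\max_a|\NW_{F\circ\phi}(a)|)$. This does not depend on the parametrization $\phi$, since a change of $\phi$ only permutes the coefficients and changes their signs. Hence
\[
\rnnl_t(F)=\tfrac12\Bigl(1-\max_{A}\max_a|\NW_{F\circ\phi}(a)|\Bigr),
\]
and it suffices to show that the set of values $|\NW_{F\circ\phi}(a)|$, taken over admissible $A$ and all $a$, equals the set of values $\bigl|\sum_{u\in U}(-1)^{\lambda(u)}\NW_F(b+u)\bigr|$, taken over $\dim U\le t$, $\lambda\in U^*$ and $b\in\F_2^n$.

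\textbf{Step 2: every admissible $A$ is some $A(U,\lambda)$.} Take a nonempty affine subspace $A=x_0+V$ of codimension $s\le t$ and put $U=V^\perp$, so $\dim U=s$. Define $\lambda(u)=u\cdot x_0$. This is linear on $U$, and it is well defined because changing $x_0$ by an element of $V$ does not change $u\cdot x_0$ for $u\in U$. Then $A=A(U,\lambda)$. Conversely, Theorem~\ref{thm:restriction-walsh-identity} shows that each $A(U,\lambda)$ with $\dim U\le t$ is a nonempty affine subspace of codimension $\dim U$, hence admissible.

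\textbf{Step 3: match the coefficients.} Fix $A=A(U,\lambda)$ and apply \eqref{eq:restriction-normalized-walsh-general}. For each $a$ there is a $b$ with $|\NW_{F\circ\phi}(a)|=\bigl|\sum_{u\in U}(-1)^{\lambda(u)}\NW_F(b+u)\bigr|$, because the sign $(-1)^{b\cdot x_0}$ disappears under the absolute value.

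The converse direction is the only real point to check: every $b\in\F_2^n$ must arise as a lift of some $a$. Given $b$, the map $y\mapsto b\cdot Ty$ is a linear functional on $\F_2^{n-s}$, so it equals $y\mapsto a\cdot y$ for a unique $a$. This $b$ then satisfies \eqref{eq:restriction-dual-lift} for that $a$. By the independence-of-lift statement in Theorem~\ref{thm:restriction-walsh-identity}, the formula evaluated at this $b$ gives exactly $\NW_{F\circ\phi}(a)$. Therefore the maximum over $b$ equals the maximum over $a$.

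Combining Steps 1--3 proves \eqref{eq:rnnl-spectral-characterization}.
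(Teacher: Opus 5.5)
Your proposal is correct and follows essentially the same route as the paper: you identify each admissible $A$ with a pair $(U,\lambda)$, apply \eqref{eq:restriction-normalized-walsh-general}, and use the unique $a$ with $a\cdot y=b\cdot Ty$ to show that the maximum over $a$ equals the maximum over $b$. Your appeal to the independence-of-lift statement is a harmless extra justification. The theorem already asserts the formula for every $b$ satisfying \eqref{eq:restriction-dual-lift}, and the paper leaves this step implicit.
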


\begin{proof}
Every affine subspace $A$ of codimension $s\le t$ can be written uniquely in
the form
$A=A(U,\lambda) =\{x\in\F_2^n\mid u\cdot x=\lambda(u)\text{ for every }u\in U\}$,
where $U$ is the orthogonal complement of the direction space of $A$ and
$\lambda(u)=u\cdot x_0$ for any $x_0\in A$. Here $\dim U=s$.
Conversely, every pair consisting of a subspace $U\le\F_2^n$ and a linear
functional $\lambda\in U^*$ defines a nonempty affine subspace
$A(U,\lambda)$ of codimension $\dim U$ by
Theorem~\ref{thm:restriction-walsh-identity}.

Fix such a pair $(U,\lambda)$ and an affine parametrization $\phi$ of
$A(U,\lambda)$ as in the theorem. For every
$a\in\F_2^{n-s}$, Equation~\eqref{eq:restriction-normalized-walsh-general}
gives a vector $b\in\F_2^n$ such that
\[
|\NW_{F\circ\phi}(a)|
=
\left|
\sum_{u\in U}(-1)^{\lambda(u)}\NW_F(b+u)
\right|.
\]
Conversely, for every $b\in\F_2^n$, the map
$y\mapsto b\cdot Ty$ is a linear functional on $\F_2^{n-s}$, and hence
there is a unique $a\in\F_2^{n-s}$ such that
$a\cdot y=b\cdot Ty$ for every $y$. Therefore
\begin{equation}\label{eq:max-restricted-spectrum}
\max_a|\NW_{F\circ\phi}(a)|
=
\max_{b\in\F_2^n}
\left|
\sum_{u\in U}(-1)^{\lambda(u)}\NW_F(b+u)
\right|.
\end{equation}
By the definition of normalized nonlinearity,
\[
\nnl(F_{\mid A(U,\lambda)})
=
\frac12\left(1-
\max_a|\NW_{F\circ\phi}(a)|\right).
\]
Taking the minimum over all affine subspaces of codimension at most $t$ and
using \eqref{eq:max-restricted-spectrum} proves
\eqref{eq:rnnl-spectral-characterization}.
\end{proof}
\begin{remark}
    An equivalent formulation of Theorem~\ref{thm:restriction-walsh-identity} is as follows. Consider an affine space $U$ of $\F_2^n$, of codimension $k$ (that is, dimension $n-k)$, defined by the $k$ linearly independent equations $u_1\cdot x = \varepsilon_1, \ldots, u_k\cdot x = \varepsilon_k$, with $u_i\in \F_2^n$ and $\varepsilon_i\in \F_2$. The indicator function of $U$ in $\F_2^n$ is
    \[
    1_U(x) = \frac{1}{2^k}\prod_{i=1}^k \left(1+(-1)^{u_i\cdot x + \varepsilon_i}\right) = \frac{1}{2^k}\left( \sum_{c_1,\ldots,c_k\in \F_2} (-1)^{(c_1 u_1+\cdots+ c_ku_k)\cdot x + c_1\varepsilon_1+\cdots+c_k\varepsilon_k} \right).
    \]
    Consider an affine isomorphism $\phi\colon \F_2^{n-k} \rightarrow U$ defined by $\phi(y) = b_0 + Ty$, with $T$ being a $n \times (n-k)$ matrix of rank $n-k$ and $b_0\in U$ (vectors in $\F_2^n$ being considered column vectors). For each $a\in \F_2^{n-k}$, 
    choose a left inverse $S$ of $T$, so that $ST=I_{n-k}$, and set $b=S^Ta$. Then $T^Tb=a$, and hence $a\cdot y=b\cdot Ty$ for all $y\in\F_2^{n-k}$. We have
\begin{eqnarray*}
    W_{F\circ \phi}(a) &=&
        \sum_{y\in \F_2^{n-k}} (-1)^{F(\phi(y))+a\cdot y}\\ 
        &=& \sum_{y\in \F_2^{n-k}} (-1)^{F(\phi(y))+b\cdot Ty}\\
&=& \sum_{y\in \F_2^{n-k}} (-1)^{F(\phi(y))+b\cdot (\phi(y)+b_0)}\\  
        &=& (-1)^{b \cdot b_0}\sum_{x\in U} (-1)^{F(x)+b\cdot x}\\
        &=& (-1)^{b \cdot b_0}\sum_{x\in \F_2^n} 1_U(x)(-1)^{F(x)+b\cdot x}\\
        &=&  (-1)^{b \cdot b_0} \sum_{x\in \F_2^n} \frac{1}{2^k}\left( \sum_{c_1,\ldots,c_k\in \F_2} (-1)^{F(x)+(b+c_1 u_1+\cdots+ c_ku_k)\cdot x + c_1\varepsilon_1+\cdots+c_k\varepsilon_k} \right)\\ 
        &=&  (-1)^{b \cdot b_0}   \frac{1}{2^k}\left( \sum_{c_1,\ldots,c_k\in \F_2} (-1)^{ c_1\varepsilon_1+\cdots+c_k\varepsilon_k}  W_F(b+c_1 u_1+\cdots+ c_ku_k)\right).
    \end{eqnarray*}
Dividing by $2^{n-k}$ for normalization, we obtain
\[
  \NW_{F\circ \phi}(a) =  (-1)^{b \cdot b_0}  \left( \sum_{c_1,\ldots,c_k\in \F_2} (-1)^{ c_1\varepsilon_1+\cdots+c_k\varepsilon_k}  \NW_F(b+c_1 u_1+\cdots+ c_ku_k)\right).
\]
For the particular case of $k=1$ this becomes
\[
  \NW_{F\circ \phi}(a) =  (-1)^{b \cdot b_0}  \left( \NW_F(b) + (-1)^{\varepsilon_1}\NW_F(b+u_1)
  \right).
\]
\end{remark}

A Boolean function is symmetric if its value depends only on the Hamming weight of its input. For a vector $x$, its support $\supp(x)$ is the set of nonzero coordinate indices. Coordinate permutations preserve the Walsh transform of a symmetric function, so its coefficients also depend only on Hamming weight.

The preceding Theorem~\ref{thm:restriction-walsh-identity} has a particularly simple form for symmetric functions.

\begin{proposition}
\label{prop:symmetric-hyperplane-restrictions}
Let $F\colon \F_2^n\to\F_2$ be a symmetric Boolean function. For $0\le r\le n$, let
$\Omega_r$ denote the common value of $\NW_F(b)$ over all
$b\in\F_2^n$ with $\w(b)=r$. Let
\[
H(u,\varepsilon)=\{x\in\F_2^n\mid u\cdot x=\varepsilon\},
\qquad \w(u)=m,
\]
where $u\ne 0$ and $\varepsilon\in\F_2$. Then
\begin{equation}\label{eq:symmetric-hyperplane-restriction}
\nnl(F_{\mid H(u,\varepsilon)})
=
\frac12\left(
1-
\max_{\substack{0\le p\le m\\0\le q\le n-m}}
\left|
\Omega_{p+q}+(-1)^\varepsilon\Omega_{m-p+q}
\right|
\right).
\end{equation}
In particular, the nonlinearity of the restriction depends only on
$\w(u)$ and $\varepsilon$, and not on the particular choice of $u$.
For a coordinate hyperplane,
\begin{equation}\label{eq:symmetric-coordinate-restriction}
\nnl(F_{\mid\{x_i=\varepsilon\}})
=
\frac12\left(
1-
\max_{0\le r\le n-1}
\left|
\Omega_r+(-1)^\varepsilon\Omega_{r+1}
\right|
\right).
\end{equation}
\end{proposition}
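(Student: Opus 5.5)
We apply Theorem~\ref{thm:restriction-walsh-identity} with $s=1$, $U=\langle u\rangle$ and $\lambda(u)=\varepsilon$, and then use symmetry to reduce the maximum over frequencies to a maximum over weight patterns. By equation~\eqref{eq:restriction-normalized-walsh-general}, every restricted normalized coefficient has absolute value $|\NW_F(b)+(-1)^\varepsilon\NW_F(b+u)|$ for some $b\in\F_2^n$. Conversely, by the argument of Corollary~\ref{cor:rnnl-spectral-characterization} (equation~\eqref{eq:max-restricted-spectrum}), every $b\in\F_2^n$ arises in this way. Hence
\[
\max_a|\NW_{F\circ\phi}(a)|=\max_{b\in\F_2^n}\left|\NW_F(b)+(-1)^\varepsilon\NW_F(b+u)\right|,
\]
and Definition~\ref{def:nnl} turns this into the normalized nonlinearity of the restriction.

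The key step is to parametrize $b$ by two integers. Let $S=\supp(u)$, so $|S|=m$. For a given $b$, put $p=|\supp(b)\cap S|$ and $q=|\supp(b)\setminus S|$. Then $0\le p\le m$, $0\le q\le n-m$, and $\w(b)=p+q$. Adding $u$ complements $b$ on $S$ and leaves it unchanged off $S$. Therefore $\supp(b+u)\cap S$ has $m-p$ elements, and $\w(b+u)=m-p+q$.

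Because $F$ is symmetric, $\NW_F$ depends only on Hamming weight. Thus
\[
\NW_F(b)+(-1)^\varepsilon\NW_F(b+u)=\Omega_{p+q}+(-1)^\varepsilon\Omega_{m-p+q}.
\]
Every pair $(p,q)$ in this range is realized by choosing $p$ coordinates inside $S$ and $q$ coordinates outside it. The maximum over $b$ therefore equals the maximum over $(p,q)$, which proves~\eqref{eq:symmetric-hyperplane-restriction}. The right-hand side involves $u$ only through $m=\w(u)$, together with $\varepsilon$, which gives the stated independence from the particular choice of $u$.

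For the coordinate hyperplane $x_i=\varepsilon$, take $m=1$, so $p\in\{0,1\}$ and $0\le q\le n-1$. The case $p=0$ gives $|\Omega_q+(-1)^\varepsilon\Omega_{q+1}|$. The case $p=1$ gives $|\Omega_{q+1}+(-1)^\varepsilon\Omega_q|$, which has the same absolute value. Renaming $q$ as $r$ yields~\eqref{eq:symmetric-coordinate-restriction}.

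The only delicate point is surjectivity: we must ensure that every $b$ actually occurs as a lift of some restricted frequency $a$. That is exactly the converse direction established in Corollary~\ref{cor:rnnl-spectral-characterization}, so no real obstacle remains. The sign $(-1)^{b\cdot x_0}$ in~\eqref{eq:restriction-normalized-walsh-general} disappears once absolute values are taken.
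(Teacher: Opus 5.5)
Your proposal is correct and follows essentially the same route as the paper: you use the codimension-one restriction identity together with the converse lifting argument from Corollary~\ref{cor:rnnl-spectral-characterization}, parametrize $b$ by $p=|\supp(b)\cap\supp(u)|$ and $q=|\supp(b)\setminus\supp(u)|$, and specialize to $m=1$ for coordinate hyperplanes. Your explicit check that the $p=0$ and $p=1$ terms have equal absolute value is the same observation the paper uses to obtain the coordinate formula.
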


\begin{proof}
Fix an affine parametrization of $H(u,\varepsilon)$. By the
codimension-one case of Theorem~\ref{thm:restriction-walsh-identity},
every normalized Walsh coefficient of the restriction, up to a factor of
absolute value one, has the form
$\NW_F(b)+(-1)^\varepsilon\NW_F(b+u)$
for some $b\in\F_2^n$. Conversely, every $b\in\F_2^n$ gives a Walsh
coefficient of the restriction, by the converse part of the argument in the
proof of Corollary~\ref{cor:rnnl-spectral-characterization}.

For a fixed $b$, set
$p=|\supp(b)\cap\supp(u)|$ and $q=|\supp(b)\setminus\supp(u)|$.
Then $0\le p\le m$, $0\le q\le n-m$, and
$\w(b)=p+q$ and $\w(b+u)=m-p+q$.
Since $F$ is symmetric, its normalized Walsh transform depends only on
Hamming weight. Hence
$\NW_F(b)=\Omega_{p+q}$ and $\NW_F(b+u)=\Omega_{m-p+q}$.
Conversely, for every pair $(p,q)$ with $0\le p\le m$ and
$0\le q\le n-m$, one can choose $b$ having exactly $p$ nonzero
coordinates in $\supp(u)$ and exactly $q$ nonzero coordinates outside
$\supp(u)$. Therefore
\[
\max_a|\NW_{F_{\mid H(u,\varepsilon)}}(a)|
=
\max_{\substack{0\le p\le m\\0\le q\le n-m}}
\left|
\Omega_{p+q}+(-1)^\varepsilon\Omega_{m-p+q}
\right|.
\]
Using the definition of normalized nonlinearity gives
\eqref{eq:symmetric-hyperplane-restriction}.

If $H(u,\varepsilon)$ is a coordinate hyperplane, then $m=1$. For
$p=0$ the two indices are $q$ and $q+1$, while for $p=1$ they are
$q+1$ and $q$. Thus the maximum reduces to the one in
\eqref{eq:symmetric-coordinate-restriction}.
\end{proof}

\begin{remark}
\label{rem:symmetric-coordinate-differences}
Coordinate restrictions preserve symmetry. Thus, if
$(\Omega_0,\ldots,\Omega_n)$ denotes the weight-indexed normalized Walsh
spectrum of a symmetric Boolean function, restriction to $x_i=0$
corresponds to taking adjacent sums
$\Omega_r+\Omega_{r+1}$, whereas restriction to $x_i=1$
corresponds to taking adjacent differences
$\Omega_r-\Omega_{r+1}$.
This provides a simple interpretation of
Proposition~\textup{\ref{prop:symmetric-hyperplane-restrictions}} for coordinate hyperplanes.
\end{remark}

\begin{theorem}
\label{thm:B12}
Let $F\colon \F_2^n\to\F_2$, with $n\ge 1$. Let $B_1$ and $B_2$
be the two largest elements, counted at distinct frequencies, of the multiset
$\bigl\{|\NW_F(b)|\mid b\in\F_2^n\bigr\}$. Equivalently,
\[
B_1+B_2
=
\max_{\substack{b_1,b_2\in\F_2^n\\ b_1\ne b_2}}
\left(|\NW_F(b_1)|+|\NW_F(b_2)|\right),
\]
where $B_1\ge B_2$.
Then
\[
\max_H\max_{a\in\F_2^{n-1}}
 \bigl|\NW_{F_{\mid H}}(a)\bigr|
 =B_1+B_2,
\]
where $H$ ranges over all affine hyperplanes of $\F_2^n$, and the
Walsh transform of $F_{\mid H}$ is computed through any affine
parametrization of $H$. It follows that
\begin{equation}
    \label{eq:B12}
    \rnnl_1(F)=\frac12\bigl(1-B_1-B_2\bigr),
\end{equation}
A hyperplane $H(u,\varepsilon)$ attains this minimum if and only if $u$ can be written as a sum $u=b'_1+b'_2$ with $b'_1\neq b'_2$, $B_1 = |\NW_F(b'_1)|$ and $B_2 = |\NW_F(b'_2)|$ and $\varepsilon$ satisfies the relation
\begin{equation}
    \label{eq:B12-epsilon}
    \varepsilon=
\begin{cases}
0,&\NW_F(b'_1)\NW_F(b'_2)>0,\\
1,&\NW_F(b'_1)\NW_F(b'_2)<0,\\
0\text{ or }1,&\NW_F(b'_1)\NW_F(b'_2)=0.
\end{cases}
\end{equation}
Moreover,
$\rnnl_1(F) \le \nnl(F)$,
with equality if and only if $F$ is an affine function.
\end{theorem}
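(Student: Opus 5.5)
The plan is to work entirely from the codimension-one case \eqref{eq:restriction-normalized-walsh-general} of Theorem~\ref{thm:restriction-walsh-identity}. For a hyperplane $H(u,\varepsilon)$, every restricted coefficient has absolute value
\[
\bigl|\NW_F(b)+(-1)^\varepsilon\NW_F(b+u)\bigr|
\]
for some $b$. Conversely, every $b$ occurs, by the converse argument in the proof of Corollary~\ref{cor:rnnl-spectral-characterization}.

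\textbf{Upper bound.} Since $u\ne0$, the frequencies $b$ and $b+u$ are distinct. The triangle inequality then bounds each restricted coefficient by $|\NW_F(b)|+|\NW_F(b+u)|\le B_1+B_2$.

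\textbf{Attainment.} Choose distinct $b_1',b_2'$ realizing $B_1,B_2$, and set $u=b_1'+b_2'\ne0$ and $b=b_1'$. With $\varepsilon$ chosen as in \eqref{eq:B12-epsilon}, the two terms have the same sign, or one of them vanishes. Hence the absolute value is exactly $B_1+B_2$. This proves the max identity, and \eqref{eq:B12} follows from Definition~\ref{def:nnl}.

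\textbf{Characterization of minimizing hyperplanes.} The ``if'' direction is the attainment computation just given. For ``only if'', suppose $H(u,\varepsilon)$ attains the minimum. Then some $b$ gives $|\NW_F(b)+(-1)^\varepsilon\NW_F(b+u)|=B_1+B_2$. Two facts then hold:
\begin{itemize}
\item[(a)] $|\NW_F(b)|+|\NW_F(b+u)|=B_1+B_2$;
\item[(b)] the triangle inequality is an equality.
\end{itemize}
For (a), the sum of two values at distinct frequencies cannot exceed $B_1+B_2$, so it must equal it. I then want to set $\{b_1',b_2'\}=\{b,b+u\}$, ordered so that the first has absolute value $B_1$. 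This needs care when $B_1>B_2$: any pair of distinct frequencies with sum $B_1+B_2$ must contain one value $B_1$ and the other $B_2$. Indeed, if neither equals $B_1$, both are at most $B_2$, and ties force each to be $B_1$ anyway. A short case check on $B_1\ge B_2\ge$ all other values settles this. For (b), equality in the triangle inequality forces $(-1)^\varepsilon\NW_F(b)\NW_F(b+u)\ge0$, which is exactly \eqref{eq:B12-epsilon}.

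\textbf{Comparison with $\nnl(F)$.} Here $\rnnl_1(F)=\frac12(1-B_1-B_2)\le\frac12(1-B_1)=\nnl(F)$, with equality iff $B_2=0$. If $B_2=0$, only one coefficient is nonzero, and Parseval gives it absolute value $1$, so $F$ is affine. Conversely, an affine $F$ has spectrum $\pm\delta_c$, so $B_2=0$. The case $n=1$ is harmless since there are two frequencies.

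I expect the main obstacle to be the ``only if'' bookkeeping with ties and zero coefficients. That includes the case where $B_2=0$ and the sign condition allows either value of $\varepsilon$. Everything else is immediate from the restriction identity.
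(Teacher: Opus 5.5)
Your proposal is correct and follows the paper's proof essentially step for step. You use the codimension-one restriction identity with its converse (every $b$ yields a restricted coefficient), the triangle-inequality bound using $b\ne b+u$, attainment via $u=b_1'+b_2'$ with the sign choice \eqref{eq:B12-epsilon}, the two forced equalities for the classification, and $B_2=0$ together with Parseval and Walsh inversion for the comparison with $\nnl(F)$. The one point the paper makes explicit and you pass over is that the minimum defining $\rnnl_1$ also includes the codimension-zero space; its value $\frac12(1-B_1)\ge\frac12(1-B_1-B_2)$ does not change that minimum, which is what lets \eqref{eq:B12} follow from the hyperplane maximum.
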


\begin{proof}
Let $H(u,\varepsilon)$ be an affine hyperplane, with $u\ne 0$.
Using the codimension-one case of Theorem~\ref{thm:restriction-walsh-identity} and choosing an affine parametrization $\phi(y)=x_0+Ty$ of $H(u,\varepsilon)$, we have that for
every $a\in\F_2^{n-1}$ there exists $b\in\F_2^n$ such that
\begin{equation}\label{eq:proof_B12-1}
 \bigl|\NW_{F\circ\phi}(a)\bigr|
=
\left|
\NW_F(b)+(-1)^\varepsilon\NW_F(b+u)
\right|.   
\end{equation}
Note that for any $b\in \F_2^n$, there exists an $a\in \F_2^{n-1}$ such that~\eqref{eq:proof_B12-1} holds. Namely, let $a\in\F_2^{n-1}$ be the unique vector
satisfying
$a\cdot y=b\cdot Ty$ for all $y\in\F_2^{n-1}$. Such a value of $a$ always exists as $y\mapsto b\cdot Ty$ is a linear
functional on $\F_2^{n-1}$. 

Since $u\ne0$, the frequencies $b$ and $b+u$ on~\eqref{eq:proof_B12-1} are distinct. Hence
\begin{equation}
    \label{eq:proof-B12-2}
    \bigl|\NW_{F\circ\phi}(a)\bigr|
\le |\NW_F(b)|+|\NW_F(b+u)|
\le B_1+B_2.
\end{equation}
Equality holds in both inequalities in~\eqref{eq:proof-B12-2} precisely when the two magnitudes $|\NW_F(b)|$ and $|\NW_F(b+u)|$ are $B_1$ and $B_2$ in either order, and $(-1)^\varepsilon\NW_F(b)\NW_F(b+u)\ge0$.
The first condition is equivalent to the existence of distinct frequencies
$b'_1,b'_2$ such that $u=b'_1+b'_2$, $|\NW_F(b'_1)|=B_1$, and $|\NW_F(b'_2)|=B_2$,
where the roles of $b'_1$ and $b'_2$ may be interchanged. The second condition
is equivalent to~\eqref{eq:B12-epsilon}. Therefore these are exactly the
hyperplanes for which the restriction has minimum nonlinearity.

Choose distinct frequencies $b_1,b_2$ attaining $B_1,B_2$. Set $u=b_1+b_2$, and choose $\varepsilon$ according to~\eqref{eq:B12-epsilon} so that the two coefficients have matching signs. If a coefficient is zero, either value works. This constructs a hyperplane attaining both inequalities in~\eqref{eq:proof-B12-2}. This proves that
\[
\max_H\max_{a\in \F_2^{n-1}}\bigl|\NW_{F_{\mid H}}(a)\bigr|
= B_1+B_2.
\]


The full space has largest absolute normalized coefficient $B_1$, whereas the best hyperplane has $B_1+B_2\ge B_1$. Thus including the full space does not change the restriction minimum. To prove~\eqref{eq:B12},
\[
\begin{aligned}
\rnnl_1(F)
&=\min_H\nnl(F_{\mid H})
=\min_H\frac12
 \left(1-\max_a|\NW_{F_{\mid H}}(a)|\right)\\
&=\frac12\left(
1-\max_H\max_a|\NW_{F_{\mid H}}(a)|
\right)
=\frac12(1-B_1-B_2).
\end{aligned}
\]
Finally,
\[
\rnnl_1(F)=\frac12(1-B_1-B_2)
\le \frac12(1-B_1)=\nnl(F).
\]
Equality holds if and only if $B_2=0$, that is, if and only if the Walsh
spectrum of $F$ has exactly one nonzero coefficient. Suppose this coefficient
is $W_F(b_0)$. Parseval's identity gives
$W_F(b_0)^2=2^{2n}$,
so $W_F(b_0)=(-1)^c2^n$ for some $c\in\F_2$. Walsh inversion then yields
\[
(-1)^{F(x)}
=2^{-n}W_F(b_0)(-1)^{b_0\cdot x}
=(-1)^{b_0\cdot x+c}
\]
for every $x\in\F_2^n$. Hence $F(x)=b_0\cdot x+c$, so $F$ is affine.
Conversely, an affine function has exactly one nonzero Walsh coefficient, and
therefore $B_2=0$.
\end{proof}
\begin{theorem}\label{thm:strict-nonhomomorphicity}
Let $F\colon\F_2^n\to\F_2$, where $n\ge2$, and let $k\ge3$ be odd. Then $P_{k,1}(F)=P_k(F)$ if and only if $F$ is affine. More explicitly, for any $u\ne0$, put $q=k+1$, and choose a set $R_u$ containing one representative of each coset of $\langle u\rangle$ in $\F_2^n$. With $a_b=\NW_F(b)$ and $c_b=\NW_F(b+u)$, one has
\begin{equation}\label{eq:hyperplane-moment-average}
\frac{M_q(F_{\mid H(u,0)})+M_q(F_{\mid H(u,1)})}{2}
=M_q(F)+\sum_{b\in R_u}\sum_{j=1}^{q/2-1}
\binom q{2j}a_b^{q-2j}c_b^{2j}.
\end{equation}
\end{theorem}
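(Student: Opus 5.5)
First I would establish the moment identity~\eqref{eq:hyperplane-moment-average}; the equivalence then follows from it. Fix $u\ne0$ and affine parametrizations of the two parallel hyperplanes $H(u,0)$ and $H(u,1)$. By the codimension-one case of Theorem~\ref{thm:restriction-walsh-identity}, every normalized coefficient of $F_{\mid H(u,\varepsilon)}$ equals, up to sign, $\NW_F(b)+(-1)^\varepsilon\NW_F(b+u)$. As in the proof of Corollary~\ref{cor:rnnl-spectral-characterization}, $b$ and $b'$ give the same restricted frequency exactly when $b+b'\in V^\perp=\langle u\rangle$. Hence the $2^{n-1}$ restricted coefficients correspond bijectively to the cosets of $\langle u\rangle$, that is, to $b\in R_u$. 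This gives
\[
M_q(F_{\mid H(u,\varepsilon)})=\sum_{b\in R_u}\bigl(a_b+(-1)^\varepsilon c_b\bigr)^q ,
\]
where I use that $q$ is even, so the sign factor disappears.

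Averaging over $\varepsilon$ and applying the binomial expansion already used in Lemma~\ref{lem:exact-codimension-minima}, we get
\[
\frac{(a+c)^q+(a-c)^q}{2}=a^q+c^q+\sum_{j=1}^{q/2-1}\binom q{2j}a^{q-2j}c^{2j}.
\]
Since $R_u$ picks one element $b$ from each pair $\{b,b+u\}$, summing $a_b^q+c_b^q$ over $b\in R_u$ gives $M_q(F)$. This proves~\eqref{eq:hyperplane-moment-average}.

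For the equivalence, one direction is direct. If $F$ is affine, every restriction is affine, so $P_{k,1}(F)=0=P_k(F)$. Conversely, suppose $F$ is not affine. By the last part of Theorem~\ref{thm:B12} (Walsh inversion together with Parseval), $F$ has at least two nonzero Walsh coefficients, say at $b_1\ne b_2$. Set $u=b_1+b_2$. All terms in the double sum of~\eqref{eq:hyperplane-moment-average} are nonnegative, because every exponent is even. The coset $\{b_1,b_2\}$ contributes the term $\binom q{2}a^{q-2}c^{2}$ with $a,c\ne0$, and this requires $q/2-1\ge1$, i.e.\ $k\ge3$. The average of the two hyperplane moments is therefore strictly larger than $M_q(F)$, so at least one of the hyperplanes $H$ satisfies $M_q(F_{\mid H})>M_q(F)$. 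By~\eqref{firstformulaof pkf}, $P_k(F_{\mid H})<P_k(F)$, and hence $P_{k,1}(F)<P_k(F)$. The hypothesis $n\ge2$ ensures that hyperplanes are admissible, since $t=1<n$.

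The main obstacle I expect is the bookkeeping in the first step. I must check that the map $b\mapsto a$ is exactly $2$-to-$1$, with fibers the cosets of $\langle u\rangle$. I also need that the sign $(-1)^{b\cdot x_0}$ and the choice of representative in $R_u$ (swapping $a_b$ and $c_b$ changes the summand by $\pm$, which the even power absorbs) leave each term unchanged. The symmetric form of the identity needs care: swapping $a_b$ and $c_b$ does not preserve the individual terms $a^{q-2j}c^{2j}$. However, the whole sum over $j$ is symmetric, because $\binom q{2j}=\binom q{q-2j}$ pairs $j$ with $q/2-j$, so the right-hand side does not depend on the choice of $R_u$.
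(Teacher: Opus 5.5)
Your proposal is correct and follows the paper's proof essentially step for step. You use the codimension-one restriction formula with the $2$-to-$1$ frequency map whose fibers are the cosets of $\langle u\rangle$, then the even-power binomial averaging over $\varepsilon$, and finally the strictly positive $j=1$ term on the coset $\{b_1,b_2\}$ with $u=b_1+b_2$. Your extra check that the right-hand side does not depend on the choice of $R_u$, via $\binom q{2j}=\binom q{q-2j}$, is a correct detail that the paper leaves implicit.
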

\begin{proof}
The map taking an ambient frequency $b$ to the corresponding frequency on the direction space $u^\perp$ has kernel $\langle u\rangle$. Thus the hyperplane restriction formula gives, for each $\varepsilon$,
\[
M_q(F_{\mid H(u,\varepsilon)})
=\sum_{b\in R_u}|a_b+(-1)^\varepsilon c_b|^q.
\]
Since $q$ is even, averaging over $\varepsilon$ and expanding by the binomial theorem leaves exactly the even powers of $c_b$. The terms with $j=0$ and $j=q/2$ sum to $M_q(F)$, proving~\eqref{eq:hyperplane-moment-average}.

If $F$ is not affine, Walsh inversion and Parseval imply that its spectrum has two distinct nonzero coefficients. Choose their frequencies $b_1,b_2$ and put $u=b_1+b_2$. Let $b\in R_u$ represent their common coset. Then $a_b$ and $c_b$ are precisely these two nonzero coefficients, in either order. The term with $j=1$ in~\eqref{eq:hyperplane-moment-average} is $\binom q2 a_b^{q-2}c_b^2>0$, since $q\ge4$ and both exponents are positive and even. Every other term is nonnegative for the same parity reason. Hence the average of the two restricted moments is strictly greater than $M_q(F)$. At least one of them, say the one with right side $\varepsilon$, must satisfy $M_q(F_{\mid H(u,\varepsilon)})>M_q(F)$. Subtracting from one and dividing by two gives $P_k(F_{\mid H(u,\varepsilon)})<P_k(F)$. If $F$ is affine, both $P_k(F)$ and all its restricted values are zero.
\end{proof}

\begin{corollary}\label{cor:strict-restriction-sequence}
Let $k\ge3$ be odd and $0\le t<n-1$. If $P_{k,t}(F)>0$, then $P_{k,t+1}(F)<P_{k,t}(F)$.
\end{corollary}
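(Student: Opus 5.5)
The plan is to apply Theorem~\ref{thm:strict-nonhomomorphicity} to a minimizing restriction of codimension exactly $t$.

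First, Lemma~\ref{lem:exact-codimension-minima} gives an affine subspace $A$ with $\codim A=t$ and $P_k(F_{\mid A})=P_{k,t}(F)$. Fix an affine parametrization and put $G=F_{\mid A}$. This is a Boolean function in $d=n-t$ variables. The hypothesis $t<n-1$ gives $d\ge2$, which is exactly what Theorem~\ref{thm:strict-nonhomomorphicity} requires. By Lemma~\ref{lem: EA equiv-nonhomomor}, the value $P_k(G)$ does not depend on the chosen parametrization.

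Second, use the assumption $P_{k,t}(F)>0$. It says $P_k(G)>0$, so $G$ is not affine, because affine functions have $P_k=0$. Theorem~\ref{thm:strict-nonhomomorphicity} then gives an affine hyperplane $H'$ of $\F_2^{d}$ with $P_k(G_{\mid H'})<P_k(G)$.

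Third, pull $H'$ back to the ambient space. Let $\phi\colon\F_2^d\to A$ be the parametrization. Then $B=\phi(H')$ is an affine subspace of $\F_2^n$ of dimension $d-1$, so $\codim B=t+1$. The composition of $\phi$ with an affine parametrization of $H'$ is an affine parametrization of $B$. Hence $G_{\mid H'}$ is a representation of $F_{\mid B}$, and Lemma~\ref{lem: EA equiv-nonhomomor} (via the remark on parametrizations) gives $P_k(F_{\mid B})=P_k(G_{\mid H'})$.

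Finally, since $B$ is admissible for $P_{k,t+1}$, we get
\[
P_{k,t+1}(F)\le P_k(F_{\mid B})<P_k(G)=P_{k,t}(F).
\]
No step is a real obstacle. The only point needing care is identifying restrictions of restrictions with restrictions to the image subspace, including the codimension count. This follows from the bijectivity of the affine isomorphism $\phi$.
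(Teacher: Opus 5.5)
Your proof is correct and follows the paper's argument: take a minimizing restriction, note that it is non-affine of dimension at least two, apply Theorem~\ref{thm:strict-nonhomomorphicity} in affine coordinates, and map the resulting hyperplane back into $\F_2^n$ as a subspace of codimension one more. The only difference is cosmetic: you invoke Lemma~\ref{lem:exact-codimension-minima} to get codimension exactly $t$, whereas the paper takes any minimizer of codimension at most $t$, which already has dimension at least two and yields $\codim B\le t+1$.
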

\begin{proof}
The family of admissible affine subspaces is finite, so choose $A$ with $P_k(F_{\mid A})=P_{k,t}(F)>0$. An affine function has $P_k=0$, hence $F_{\mid A}$ is not affine. Since $\codim A\le t<n-1$, its dimension is at least two. Theorem~\textup{\ref{thm:strict-nonhomomorphicity}}, applied in affine coordinates on $A$, gives a hyperplane $B$ of $A$ with $P_k((F_{\mid A})_{\mid B})<P_k(F_{\mid A})$.

The space $B$ is an ambient affine subspace, with $\codim B=\codim A+1\le t+1$. Restricting first to $A$ and then to $B$ gives the same function as restricting directly to $B$, up to an affine change of coordinates, which preserves $P_k$. Therefore $P_{k,t+1}(F)\le P_k(F_{\mid B})<P_k(F_{\mid A})=P_{k,t}(F)$, as required.
\end{proof}

\begin{remark}
    The last statement in Theorem~\textup{\ref{thm:B12}} above answers negatively a question in~\textup{\cite{SergeAnaPante2025}}, regarding the existence of any functions $F$, other than affine functions, such that  $\rnnl_1(F) = \nnl(F)$.
\end{remark}
\begin{example}
For a direct sum of monomials with $r_1\ge\cdots\ge r_m\ge2$, put $\alpha_i=1-2^{1-r_i}$ and $\beta_i=2^{1-r_i}$. The largest absolute normalized Walsh coefficient is $B_1=\prod_i\alpha_i$. The second largest, counted at a distinct frequency, is $B_2=\beta_m\prod_{i<m}\alpha_i$. Indeed, each nonzero block frequency replaces $\alpha_i$ by $\beta_i$, and $\beta_i/\alpha_i\le1$ is largest for a smallest-degree block. This argument includes quadratic blocks, for which the ratio is $1$. Theorem~\textup{\ref{thm:B12}} gives
\[
\rnnl_1(f)=\frac12\left(1-\prod_{i=1}^{m-1}(1-2^{1-r_i})\right).
\]
This recovers the direct-sum hyperplane formula of~\cite{SergeAnaPante2025}. For $m=1$, the empty product is $1$.
\end{example}
We next determine $\rnnl_1$ exactly for every threshold function. This settles the two conjectures stated in \cite[Conjectures~1 and~2]{SergeAnaPante2025}.

\begin{definition}\label{def:thresholds-funct}
Let $n\ge1$ and $d$ be integers with $0\le d\le n+1$. The threshold function $T_{d,n}$ is defined by
\[
T_{d,n}(x)=
\begin{cases}
0,&\text{if }\w(x)<d,\\
1,&\text{if }\w(x)\ge d.
\end{cases}
\]
When $n$ is odd and $d=(n+1)/2$, this is the majority function in $n$ variables.
\end{definition}

For integers $0\le q\le m$ and $0\le j\le m$, let
\[
K_j^{(m)}(q)=\sum_{h=0}^{j}(-1)^h
\binom{q}{h}\binom{m-q}{j-h}
\]
be the binary Krawtchouk polynomial, with the convention that $\binom ab=0$ when $b<0$ or $b>a$.
It is known (see, for example,~\cite[Section~6.1]{cusickStanica2017})
that the Walsh coefficient of any symmetric function $F$ at any vector $a$ of weight $r$ can be expressed using Krawtchouk polynomials by the formula
\begin{equation}\label{eq:Kraw-symm-Walsh}
    W_F(a) = \sum_{j=0}^{n}(-1)^{F(j)}K_j^{(n)}(r),
\end{equation}
where, by abuse of notation, we denoted by $F(j)$ the value of $F$ at any vector of Hamming weight $j$.
\begin{lemma}\label{lem:threshold-walsh-krawtchouk}
Let $1\le d\le n$ and let $a\in\F_2^n$ have Hamming weight $r$. Then
\[
W_{T_{d,n}}(0)=2\sum_{j=0}^{d-1}\binom nj-2^n,
\]
and, for $1\le r\le n$,
\[
W_{T_{d,n}}(a)=2K_{d-1}^{(n-1)}(r-1).
\]
\end{lemma}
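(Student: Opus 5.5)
The zero-frequency coefficient follows directly from the definition. Since $T_{d,n}$ equals $0$ exactly on the $\sum_{j=0}^{d-1}\binom nj$ inputs of weight below $d$, we get $W_{T_{d,n}}(0)=\sum_x(-1)^{T_{d,n}(x)}=\sum_{j<d}\binom nj-\sum_{j\ge d}\binom nj$. Substituting $\sum_{j\ge d}\binom nj=2^n-\sum_{j<d}\binom nj$ gives the stated formula.

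For $r\ge1$, I avoid manipulating the Krawtchouk sum~\eqref{eq:Kraw-symm-Walsh} with alternating signs. Instead I use $(-1)^{T(x)}=1-2T(x)$. Because $a\ne0$, the constant term contributes nothing, so
\[
W_{T_{d,n}}(a)=-2\sum_{\w(x)\ge d}(-1)^{a\cdot x}.
\]
The same orthogonality gives $\sum_{\w(x)\ge d}(-1)^{a\cdot x}=-\sum_{\w(x)\le d-1}(-1)^{a\cdot x}$. Hence
\[
W_{T_{d,n}}(a)=2\sum_{j=0}^{d-1}K^{(n)}_j(r),
\]
using the standard identity $\sum_{\w(x)=j}(-1)^{a\cdot x}=K^{(n)}_j(r)$ for $\w(a)=r$. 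This identity follows by counting $x$ with $h$ ones inside $\supp(a)$ and $j-h$ ones outside, and it matches the paper's definition of $K_j^{(m)}$.

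It remains to show $\sum_{j=0}^{d-1}K_j^{(n)}(r)=K^{(n-1)}_{d-1}(r-1)$ for $r\ge1$. I would do this with generating functions: $\sum_j K^{(m)}_j(q)z^j=(1-z)^q(1+z)^{m-q}$. Multiplying by $(1-z)^{-1}$ produces partial sums, so
\[
\sum_j\Bigl(\sum_{i\le j}K^{(n)}_i(r)\Bigr)z^j=(1-z)^{r-1}(1+z)^{n-r}.
\]
This is exactly the generating function of $K^{(n-1)}_j(r-1)$, because $(n-1)-(r-1)=n-r$. Comparing the coefficients of $z^{d-1}$ finishes the argument.

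The only point needing care is that $r\ge1$, so that $(1-z)^{r-1}$ is a polynomial and the division by $1-z$ is legitimate. The range $1\le d\le n$ ensures that $d-1$ is a valid index. I do not expect any real obstacle beyond keeping the sign conventions consistent.
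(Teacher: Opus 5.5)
Your proposal is correct and follows essentially the paper's route: reduce to $2\sum_{j=0}^{d-1}K_j^{(n)}(r)$ using the vanishing of the full character sum for $a\ne0$, then identify this partial sum with $K_{d-1}^{(n-1)}(r-1)$. Your division of the generating function by $1-z$ is exactly the telescoped form of the recurrence $K_j^{(n)}(r)=K_j^{(n-1)}(r-1)-K_{j-1}^{(n-1)}(r-1)$ that the paper quotes; the only difference is that your version proves that recurrence instead of citing it.
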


\begin{proof} 
Using~\eqref{eq:Kraw-symm-Walsh}, the Walsh coefficient of the threshold function $T_{d,n}$ at a vector of weight $r$ is
the difference between $\sum_{j=0}^{d-1}K_j^{(n)}(r)$ and $\sum_{j=d}^{n}K_j^{(n)}(r)$.
For $r=0$, this gives the first formula. If $r>0$, the complete Krawtchouk sum $\sum_{j=0}^{n}K_j^{(n)}(r)$ is zero, and hence
\[
W_{T_{d,n}}(a)=2\sum_{j=0}^{d-1}K_j^{(n)}(r).
\]
Using
\[
K_j^{(n)}(r)=K_j^{(n-1)}(r-1)-K_{j-1}^{(n-1)}(r-1),
\]
where $K_{-1}^{(n-1)}=0$, the sum telescopes to $K_{d-1}^{(n-1)}(r-1)$.
\end{proof}
Carlet et al.~\cite{CSA2023,CSA2025} studied the largest Walsh coefficients of threshold functions. Away from the middle thresholds, the zero-frequency coefficient has largest magnitude. At the middle thresholds, the coefficients of weights $1$ and $n$ attain the maximum. We will prove these statements together with the strict bounds at all other nonzero frequencies.

In order to apply Theorem~\ref{thm:B12} to threshold functions, we need to determine the two largest absolute values of the  Walsh coefficients (considered with multiplicities). The following result incorporates the results from~\cite{CSA2023} mentioned above, and additionally covers the second largest value in the cases where the largest value is only attained once. The proof below treats every case and identifies the strict spectral inequalities needed for the minimizing-hyperplane classification.
\begin{proposition} \label{prop:threshold-B12}
    Let $T_{d,n}$ be a threshold function with $1<d<n$. Let  $a\in \F_2^n$ of weight $r$. If $r=1$, then $W_{T_{d,n}}(a) = 2 \binom{n-1}{d-1}$, whereas if $r=n$, then $W_{T_{d,n}}(a) = 2 (-1)^{d-1}\binom{n-1}{d-1}$.  If $1<r<n$ we have $|W_{T_{d,n}}(a)| <2 \binom{n-1}{d-1}$. Moreover,
    \begin{enumerate}
        \item If $1<d<\frac{n}{2}$, then $W_{T_{d,n}}(0) = 2\sum_{j=0}^{d-1}\binom{n}{j}-2^{n} <0$ and $|W_{T_{d,n}}(0) |>2\binom{n-1}{d-1}$.
        \item If $d=\frac{n}{2}$, then $W_{T_{d,n}}(0) = -2\binom{n-1}{d-1}$,        \item If $d=\frac{n+1}{2}$, then $W_{T_{d,n}}(0) =0$.
        \item If $d=\frac{n}{2}+1$, then $W_{T_{d,n}}(0) = 2\binom{n-1}{d-1}$.        \item If $n>d>\frac{n}{2}+1$, then $W_{T_{d,n}}(0) = 2\sum_{j=0}^{d-1}\binom{n}{j}-2^{n} >2\binom{n-1}{d-1}$.
    \end{enumerate}
\end{proposition}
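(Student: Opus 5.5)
The plan is to read everything off Lemma~\ref{lem:threshold-walsh-krawtchouk}, which gives $W_{T_{d,n}}(a)=2K_{d-1}^{(n-1)}(r-1)$ for $r\ge1$. Throughout, write $N=n-1$ and $j=d-1$. The hypothesis $1<d<n$ then means $1\le j\le N-1$.

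\textbf{Weights $1$ and $n$.} For $r=1$ we have $q=r-1=0$. Only $h=0$ contributes to the Krawtchouk sum, so $K_j^{(N)}(0)=\binom Nj$, which gives $W=2\binom{n-1}{d-1}$. For $r=n$ we have $q=N$. Only $h=j$ contributes, so $K_j^{(N)}(N)=(-1)^j\binom Nj$, which gives the stated sign $(-1)^{d-1}$.

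\textbf{The strict bound for $1<r<n$.} This is the step I expect to need the most care. Here $q=r-1$ satisfies $1\le q\le N-1$.
\begin{itemize}
\item The triangle inequality and Vandermonde's identity give
\[
|K_j^{(N)}(q)|\le\sum_h\binom qh\binom{N-q}{j-h}=\binom Nj .
\]
\item Equality would force all nonzero summands to have the same sign. The nonzero summands are exactly the indices $h$ with $\max(0,j-N+q)\le h\le\min(j,q)$.
\item This range contains at least two consecutive integers, i.e.\ $\min(j,q)>\max(0,j-N+q)$. Indeed, $j\ge1$ and $q\ge1$ give $\min(j,q)\ge1>0$. Also $q<N$ gives $j>j-N+q$, and $j<N$ gives $q>j-N+q$.
\item Two consecutive values of $h$ carry opposite signs $(-1)^h$ and positive binomial weights. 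So the inequality is strict, and $|W|<2\binom{n-1}{d-1}$.
\end{itemize}

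\textbf{The zero frequency.} Set $S=\sum_{j<d}\binom nj$, so $W(0)=2S-2^n$. The symmetry $\binom nj=\binom n{n-j}$ rewrites $W(0)$ as $\sum_{j<d}\binom nj-\sum_{j\ge d}\binom nj$.
\begin{itemize}
\item \emph{Cases 2 and 4.} For $d=n/2$ one gets $W(0)=-\binom n{n/2}$. For $d=n/2+1$ one gets $W(0)=+\binom n{n/2}$. Both equal $\mp2\binom{n-1}{d-1}$ by Pascal's rule $\binom n{n/2}=2\binom{n-1}{n/2-1}$.
\item \emph{Case 3.} For $d=(n+1)/2$ the two halves cancel, so $W(0)=0$.
\item \emph{Case 1, $d<n/2$.} Pairing terms gives $-W(0)=\sum_{j=d}^{n-d}\binom nj$. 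This sum contains the two distinct terms $j=d$ and $j=n-d$, so it is at least $2\binom nd$. Since $\binom nd=\frac nd\binom{n-1}{d-1}>\binom{n-1}{d-1}$, we get $W(0)<0$ and $|W(0)|>2\binom{n-1}{d-1}$.
\item \emph{Case 5, $d>n/2+1$.} Symmetrically, $W(0)=\sum_{j=n-d+1}^{d-1}\binom nj$. The condition $2d>n+2$ gives $d-1>n-d+1$, so this sum contains two distinct terms and is at least $2\binom n{d-1}$. Finally, $\binom n{d-1}=\frac n{n-d+1}\binom{n-1}{d-1}>\binom{n-1}{d-1}$, which gives the claim.
\end{itemize}
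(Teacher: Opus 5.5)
Your proof is correct and follows the paper's route. You read off weights $1$ and $n$ from Lemma~\ref{lem:threshold-walsh-krawtchouk}. You get strictness for $1<r<n$ by exhibiting two nonzero terms of opposite sign; your consecutive indices $h,h+1$ in the Krawtchouk sum correspond exactly to the paper's swap of one index inside $\supp(v)$ for one outside it. You handle the zero frequency by binomial-symmetry pairing. The only cosmetic difference is that you treat $d\ge n/2+1$ by direct pairing, whereas the paper reduces these cases to the lower-threshold ones via $d\mapsto n-d+1$.
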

\begin{proof}
Write $C=\binom{n-1}{d-1}$. The coefficients at weights $1$ and $n$ follow from Lemma~\textup{\ref{lem:threshold-walsh-krawtchouk}} by setting the Krawtchouk argument to $0$ and $n-1$.

For $1<r<n$, let $m=n-1$, $s=d-1$, and choose a vector $v\in\F_2^m$ of weight $r-1$. Then
\[
K_s^{(m)}(r-1)=\sum_{\substack{x\in\F_2^m\\\w(x)=s}}(-1)^{v\cdot x}.
\]
Here $1\le\w(v)\le m-1$ and $1\le s\le m-1$. Choose an index $i$ in $\supp(v)$ and an index $j$ outside it. Choose $s-1$ further indices among the remaining $m-2$ indices. The two weight-$s$ vectors obtained by adding $i$ or $j$ have opposite character signs. Thus both signs occur in the sum and its absolute value is strictly smaller than $\binom ms=C$.

Put $A_0=W_{T_{d,n}}(0)$. If $d<n/2$, binomial symmetry gives
$A_0=-\sum_{j=d}^{n-d}\binom nj$.
The sum contains the two distinct endpoint terms. Hence $|A_0|\ge2\binom nd>2C$, since $\binom nd=(n/d)C>C$. If $n$ is even and $d=n/2$, the same sum consists of the central term and gives $A_0=-\binom n{n/2}=-2C$. If $n$ is odd and $d=(n+1)/2$, binomial symmetry gives $A_0=0$.

For the remaining cases, put $d'=n-d+1$. Binomial symmetry in the zero-frequency formula gives $W_{T_{d',n}}(0)=-A_0$, while $\binom{n-1}{d'-1}=C$. Applying the two lower-threshold cases just proved yields $A_0=2C$ at $d=n/2+1$ and $A_0>2C$ above it.
\end{proof}

\begin{theorem}\label{thm:threshold-main}
Let $n\ge3$ and $1< d<n$. Then
\[
\rnnl_1(T_{d,n})
=\nnl(T_{d,n})-\frac{1}{2^n}\binom{n-1}{d-1}.
\]
Equivalently,
\[
\rnnl_1(T_{d,n})=
\begin{cases}
\displaystyle \frac{1}{2^{n-1}}\sum_{j=0}^{d-2}\binom{n-1}{j},
& d<\dfrac{n+1}{2},\\[8pt]
\displaystyle \frac12-\frac{1}{2^{n-1}}\binom{n-1}{\frac{n-1}{2}},
& d=\dfrac{n+1}{2},\\[8pt]
\displaystyle \frac{1}{2^{n-1}}\sum_{j=d}^{n-1}\binom{n-1}{j},
& d>\dfrac{n+1}{2}.
\end{cases}
\]
The middle case occurs only when $n$ is odd. Moreover, the restriction to a hyperplane $H$  achieves the minimum nonlinearity $\nnl((T_{d,n})_{|H}) = \rnnl_1(T_{d,n})$ if and only if $H$ is defined, up to a permutation of the variables, by one of the following equations.
\begin{itemize}
\item[$(i)$] If $1<d<\frac{n}{2}$,  the equation $x_1=1$ or the equation $x_1+ x_2 + \cdots + x_n = (d\bmod 2)$.
\item[$(ii)$] If $d=\frac{n}{2}$,  the equation $x_1=1$ or the equation $x_1 + x_2 = 0$ or  the equation  $x_1+ x_2 + \cdots + x_{n-1} = ((d-1) \bmod 2)$ or  the equation  $x_1+ x_2 + \cdots + x_n = (d\bmod 2)$.
\item[$(iii)$] If $d=\frac{n+1}{2}$,   the equation $x_1 +x_2 = 0$ or  the equation  $x_1+ x_2 + \cdots + x_{n-1} = ((d-1) \bmod 2)$ .
\item[$(iv)$] If $d=\frac{n}{2}+1$,  the equation $x_1=0$ or the equation $x_1 + x_2 = 0$ or  the equation  $x_1+ x_2 + \cdots + x_{n-1} = ((d-1) \bmod 2)$ or  the equation  $x_1+ x_2 + \cdots + x_n = ((d-1)\bmod 2)$.
\item[$(v)$] If $n>d>\frac{n}{2}+1$,  the equation $x_1=0$ or the equation $x_1+ x_2 + \cdots + x_n = ((d-1)\bmod 2)$. 
\end{itemize}

\end{theorem}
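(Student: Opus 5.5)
The plan is to apply Theorem~\ref{thm:B12} directly, with Proposition~\ref{prop:threshold-B12} supplying the two largest absolute normalized Walsh coefficients $B_1,B_2$ (counted at distinct frequencies). Write $C=\binom{n-1}{d-1}$. By the proposition, the weight-$1$ frequencies (there are $n\ge3$ of them) all have $|W|=2C$, and so does the weight-$n$ frequency. Every frequency of weight $1<r<n$ has $|W|<2C$. The zero frequency has $|W|>2C$, $=2C$, $=0$, $=2C$, $>2C$ in cases (1)--(5).

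Hence in every case $B_2=2C/2^n$, because at least two distinct frequencies reach $2C$. Moreover, $B_1=\max_b|\NW(b)|$, so $\nnl(T_{d,n})=\frac12(1-B_1)$. Equation~\eqref{eq:B12} then gives $\rnnl_1=\nnl-\frac12 B_2=\nnl-2^{-n}C$, which is the first formula.

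For the explicit case formulas, we compute $\nnl$ from $B_1$.
\begin{itemize}
\item[$\bullet$] If $d<(n+1)/2$, then $B_1=|W(0)|/2^n=1-2^{1-n}\sum_{j<d}\binom nj$ (this also covers $d=n/2$, where $|W(0)|=2C$). Hence $\nnl=2^{-n}\sum_{j\le d-1}\binom nj$. Subtracting $2^{-n}\binom{n-1}{d-1}$ and using Pascal's rule $\binom nj=\binom{n-1}j+\binom{n-1}{j-1}$, the result telescopes to $2^{1-n}\sum_{j\le d-2}\binom{n-1}j$.
\item[$\bullet$] If $d=(n+1)/2$, then $B_1=2C/2^n$, so $\rnnl_1=\frac12-2C/2^n=\frac12-2^{1-n}\binom{n-1}{(n-1)/2}$.
\item[$\bullet$] The case $d>(n+1)/2$ follows by the symmetry $d\mapsto n-d+1$ used at the end of the proof of Proposition~\ref{prop:threshold-B12}, or by the same telescoping on the upper tail.
\end{itemize}

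For the classification of minimizing hyperplanes, we use the "if and only if" part of Theorem~\ref{thm:B12}. The set $S_1$ of frequencies attaining $B_1$ and $S_2$ attaining $B_2$ are read off case by case.
\begin{itemize}
\item[$\bullet$] Cases (1) and (5): $B_1$ is attained only at $0$, and $B_2$ at the weight-$1$ vectors and $\mathbf 1$. So $u=0+b_2$ is $e_i$ or $\mathbf 1$.
\item[$\bullet$] Cases (2) and (4): $0$, the $e_i$ and $\mathbf 1$ all tie. Sums of two distinct elements give $u\in\{e_i,\ e_i+e_j,\ \mathbf 1+e_i,\ \mathbf 1\}$, that is, weights $1,2,n-1,n$.
\item[$\bullet$] Case (3): $W(0)=0$, so the ties are among the $e_i$ and $\mathbf 1$ only. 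This gives $u$ of weight $2$ or $n-1$.
\end{itemize}
The constant $\varepsilon$ follows from \eqref{eq:B12-epsilon} using the signs $W(e_i)=2C>0$, $W(\mathbf 1)=2(-1)^{d-1}C$, and $\operatorname{sign}W(0)$: negative for $d\le n/2$, positive for $d\ge n/2+1$. For example:
\begin{itemize}
\item[$\bullet$] $u=e_i$ in case (1) pairs $W(0)<0$ with $W(e_i)>0$, giving $\varepsilon=1$.
\item[$\bullet$] $u=\mathbf 1$ in case (1) pairs $W(0)<0$ with $(-1)^{d-1}$, giving $\varepsilon\equiv d$.
\item[$\bullet$] $u=e_i+e_j$ gives $\varepsilon=0$.
\item[$\bullet$] $u=\mathbf 1+e_i$ pairs $e_j$ with $\mathbf 1$ ($j\ne i$), giving $\varepsilon\equiv d-1$.
\end{itemize}
In cases (2)--(4) a given $u$ (e.g.\ $\mathbf 1+e_i$) arises from several pairs. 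We must check that all pairs representing the same $u$ force the same $\varepsilon$, or else take the union allowed by the theorem's "can be written" clause. For instance, in case (2) $u=\mathbf 1$ arises only from $\{0,\mathbf 1\}$, and $u=e_i$ only from $\{0,e_i\}$. Finally, up to permutation of variables, these give exactly the listed equations.

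The main obstacle is this sign bookkeeping, especially in cases (2) and (4). There one must verify, for each weight $m\in\{1,2,n-1,n\}$, which representations $u=b_1'+b_2'$ exist inside $S_1\cup S_2$, and that they yield the stated $\varepsilon$. One must also confirm that no other weight $m$ admits such a representation, which relies on the strict inequality $|W(a)|<2C$ for $1<\w(a)<n$. The small case $n=3$ needs a separate check that weights $2$ and $n-1$ coincide consistently, since then $\mathbf 1+e_i=e_j+e_k$.
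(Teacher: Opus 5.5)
Your proposal is correct and is essentially the paper's proof. Proposition~\ref{prop:threshold-B12} gives $B_2=2^{1-n}\binom{n-1}{d-1}$ in every case, Theorem~\ref{thm:B12} with Pascal's rule gives the value, and the sign criterion~\eqref{eq:B12-epsilon} applied to pairs drawn from $\{0,e_1,\ldots,e_n,\mathbf 1\}$ gives the classification.

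Two small corrections to your bookkeeping. First, $\mathbf 1+e_i$ comes from the pair $\{e_i,\mathbf 1\}$, not from $\{e_j,\mathbf 1\}$ with $j\ne i$; your conclusion $\varepsilon\equiv d-1$ is unaffected. Second, for $n\ge4$ the weights $1,2,n-1,n$ are distinct, so each normal has a unique representing pair. The only overlap is at $n=3$ in case (iii), where $\mathbf 1+e_i=e_j+e_k$ yields both $\varepsilon=0$ and $\varepsilon=1$. This matches the two listed equations $x_1+x_2=0$ and $x_1+x_2=1$.
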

\begin{proof}
Put $C=\binom{n-1}{d-1}$, $L=2C$, $A_0=W_{T_{d,n}}(0)$, and $\eta=(-1)^{d-1}$. Proposition~\textup{\ref{prop:threshold-B12}} shows that all weight-$1$ coefficients are $L$, the weight-$n$ coefficient is $\eta L$, and every other nonzero-frequency coefficient has magnitude strictly smaller than $L$.

Outside the odd-majority case, $|A_0|\ge L$. Thus the two largest absolute normalized coefficients satisfy
$B_1=2^{-n}|A_0|$, $B_2=2^{-n}L$, and $\nnl(T_{d,n})=\frac12(1-B_1)$.
These formulas hold also when $|A_0|=L$. Theorem~\textup{\ref{thm:B12}} gives $\rnnl_1(T_{d,n})=\nnl(T_{d,n})-2^{-n}C$.
In the odd-majority case, $A_0=0$ and $B_1=B_2=2^{-n}L$. Since $\nnl(T_{d,n})=\frac12(1-2^{-n}L)$, the same unified formula follows.

If $d<(n+1)/2$, then $\nnl(T_{d,n})=2^{-n}\sum_{j=0}^{d-1}\binom nj$. Pascal's identity gives
\[
\sum_{j=0}^{d-1}\binom nj-C
=2\sum_{j=0}^{d-2}\binom{n-1}j.
\]
If $d>(n+1)/2$, then $\nnl(T_{d,n})=2^{-n}\sum_{j=d}^n\binom nj$, and Pascal's identity gives
\[
\sum_{j=d}^n\binom nj-C
=2\sum_{j=d}^{n-1}\binom{n-1}j.
\]
For odd majority the formula is $\frac12-2^{1-n}C$. These are the three displayed cases.

We now classify every minimizing hyperplane using the sign criterion in Theorem~\textup{\ref{thm:B12}}. Let $e_i$ be the $i$-th standard basis vector and let $\mathbf1$ be the all-ones vector. Up to permutation of the variables, sums of maximizing frequency pairs have the following normals and sign conditions.
\[
\begin{array}{c|c|c}
\text{frequency pair}&\text{normal}&\text{required }\varepsilon\\ \hline
0,e_i&e_i&1\text{ if }A_0<0,\ 0\text{ if }A_0>0\\
0,\mathbf1&\mathbf1&d\bmod2\text{ if }A_0<0,\ (d-1)\bmod2\text{ if }A_0>0\\
e_i,e_j\ (i\ne j)&e_i+e_j&0\\
e_i,\mathbf1&\mathbf1+e_i&(d-1)\bmod2
\end{array}
\]
The last two rows use the signs $L,L$ and $L,\eta L$, respectively. The first two rows use the sign of $A_0$ and $\eta=(-1)^{d-1}$.

If $1<d<n/2$, the zero frequency is the unique frequency of largest magnitude. The second frequency must be $e_i$ or $\mathbf1$, and $A_0<0$. The first two rows give exactly item $(i)$.

If $d=n/2$, the maximizing frequencies are $0,e_1,\ldots,e_n,\mathbf1$, all with magnitude $L$, and $A_0<0$. Every pair of distinct maximizing frequencies is one of the four rows. Their normals and signs give exactly item $(ii)$.

If $d=(n+1)/2$, the zero coefficient vanishes. The maximizing frequencies are $e_1,\ldots,e_n,\mathbf1$. Only the last two rows are available, giving exactly item $(iii)$.

If $d=n/2+1$, the maximizing frequencies are again $0,e_1,\ldots,e_n,\mathbf1$, but $A_0>0$. All four rows apply with the positive zero-frequency sign and give exactly item $(iv)$.

If $n>d>n/2+1$, the zero frequency is uniquely largest and $A_0>0$. The first two rows give exactly item $(v)$. Strictness of the remaining spectral bounds excludes every other frequency pair and hence every other hyperplane. When two listed equations coincide at a small value of $n$, they describe the same hyperplane.
\end{proof}

\begin{remark}\label{rem:threshold-endpoints}
For $n\ge2$, the unified formula in Theorem~\textup{\ref{thm:threshold-main}} also holds for $d=1$ and $d=n$, and its value is zero. Indeed, $T_{n,n}$ is a monomial with $\nnl(T_{n,n})=2^{-n}$, and it vanishes on $x_1=0$. Also $T_{1,n}(x)=1+T_{n,n}(\mathbf1+x)$, so it has the same normalized nonlinearity and restriction minimum. The endpoint functions $T_{0,n}$ and $T_{n+1,n}$ are constant and have restriction minimum zero. The five-case classification is asserted only for $1<d<n$.
\end{remark}

The next two corollaries of Theorem~\ref{thm:threshold-main} state explicitly that the conjectures in \cite{SergeAnaPante2025} are resolved.

\begin{corollary}\label{cor:CCDS-threshold-conjecture}
Let $n\ge3$ and $1\le d\le n$. For $d>(n+1)/2$,
\[
\rnnl_1(T_{d,n})=\nnl(T_{d,n-1}),
\]
and for $d<(n+1)/2$,
\[
\rnnl_1(T_{d,n})=\nnl(T_{d-1,n-1}).
\]
Thus \textup{\cite[Conjecture~1]{SergeAnaPante2025}} holds.
\end{corollary}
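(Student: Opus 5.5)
The plan is to read both identities off the three-case formula in Theorem~\ref{thm:threshold-main} and compare them with the normalized nonlinearity of a threshold function in $n-1$ variables. Everything needed is already in Proposition~\ref{prop:threshold-B12} and Remark~\ref{rem:threshold-endpoints}.

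\emph{Step 1: the formula for $\nnl(T_{e,N})$.} First record, for $N\ge2$ and $1\le e\le N$,
\[
\nnl(T_{e,N})=2^{-N}\sum_{j=0}^{e-1}\binom Nj \quad\text{if } e<\tfrac{N+1}{2},\qquad
\nnl(T_{e,N})=2^{-N}\sum_{j=e}^{N}\binom Nj \quad\text{if } e>\tfrac{N+1}{2}.
\]
For $1<e<N$ this is the same formula already used in the proof of Theorem~\ref{thm:threshold-main}. The reason is that, off the middle thresholds, the zero-frequency coefficient is the largest in magnitude (Proposition~\ref{prop:threshold-B12}), so $\nnl=\frac12(1-|W(0)|2^{-N})$. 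Then $|W(0)|=2^N-2\sum_{j<e}\binom Nj$ or its mirror image. At the middle thresholds $|A_0|=2C$ also equals the maximum, so the same expression remains valid whenever the relevant inequality on $e$ is met.

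For the endpoints, $e=1$ gives $T_{1,N}$, which has weight $2^N-1$ and hence nonlinearity $1$, i.e.\ $2^{-N}\binom N0$. Likewise $e=N$ gives a monomial with nonlinearity $1=\binom NN$, matching the formula.

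\emph{Step 2: the case $d<(n+1)/2$.} Take $N=n-1$ and $e=d-1$. Since $e-1=d-2<(N+1)/2$, Step~1 gives $\nnl(T_{d-1,n-1})=2^{-(n-1)}\sum_{j=0}^{d-2}\binom{n-1}j$, which is exactly the first case of Theorem~\ref{thm:threshold-main}.

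The one degenerate subcase is $d=1$. Here $T_{0,n-1}$ is constant with $\nnl=0$, and the empty sum agrees with Remark~\ref{rem:threshold-endpoints}, which gives $\rnnl_1(T_{1,n})=0$.

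\emph{Step 3: the case $d>(n+1)/2$.} Take $N=n-1$ and $e=d$. Then $e>(N+1)/2+\tfrac12$, so Step~1 gives $2^{-(n-1)}\sum_{j=d}^{n-1}\binom{n-1}j$, the third case of Theorem~\ref{thm:threshold-main}.

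For $d=n$, $T_{n,n-1}$ is the constant zero function with $\nnl=0$, matching the empty sum and the endpoint value $0$ in Remark~\ref{rem:threshold-endpoints}.

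\emph{Main obstacle.} The only real care point is the validity range of the $\nnl$ formula for $T_{e,n-1}$ at the boundary of the middle thresholds. For example, $d=n/2$ gives $e=d-1=(N-1)/2$, and I must confirm that the zero frequency still attains the maximum there. I would settle this by invoking items (1)–(2) and (4)–(5) of Proposition~\ref{prop:threshold-B12}, applied with $n-1$ in place of $n$ (valid since $n-1\ge2$). Small cases such as $N=2$ would be checked directly.
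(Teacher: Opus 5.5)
Your argument is correct, but it takes a different route from the paper's proof.

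\textbf{The paper's route.} The paper does no arithmetic here. It reads from the classification in Theorem~\ref{thm:threshold-main} that $x_1=0$ is a minimizing hyperplane in items $(iv)$ and $(v)$, and $x_1=1$ in items $(i)$ and $(ii)$. On those hyperplanes $T_{d,n}$ restricts literally to $T_{d,n-1}$, respectively $T_{d-1,n-1}$, since the fixed coordinate adds $0$ or $1$ to the weight. Minimality then gives the equalities at once. The endpoints $d=1,n$ are handled by the same coordinate restrictions, which produce constants.

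\textbf{Your route.} You use only the numerical value formula of Theorem~\ref{thm:threshold-main}. You match it against an independently derived closed form for $\nnl(T_{e,n-1})$. This requires applying Proposition~\ref{prop:threshold-B12} a second time, in dimension $n-1$, together with the boundary bookkeeping you flagged.

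\textbf{Comparison.} The paper's argument is shorter and explains why the identity holds: an optimal hyperplane fixes a coordinate, and the restriction is again a threshold function. It does, however, depend on the hyperplane classification. Only the ``if'' direction is needed, namely that $0$ and $e_i$ form a maximizing frequency pair with the right sign. Your argument bypasses the classification entirely and needs only the value of $\rnnl_1$.

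\textbf{Two small points to fix.}
\begin{itemize}
\item In Step~2, the condition that Step~1 requires is $e=d-1<(N+1)/2=n/2$, not $e-1<(N+1)/2$. It does hold, because $d<(n+1)/2$ gives $d-1<(n-1)/2$.
\item For the endpoint $e=1$ in Step~1, justify $\nl(T_{1,N})=1$. The weight gives $\nl\le 1$, and $T_{1,N}$ is not affine for $N\ge2$, so $\nl\ge1$. The normalized value is then $2^{-N}$, which is what your formula needs.
\end{itemize}
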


\begin{proof}
Suppose first that $1<d<n$ and $d>(n+1)/2$. Items $(iv)$ and $(v)$ of Theorem~\textup{\ref{thm:threshold-main}} include $x_1=0$ among the minimizing hyperplanes. On this hyperplane, $T_{d,n}(0,x_2,\ldots,x_n)=T_{d,n-1}(x_2,\ldots,x_n)$. Its normalized nonlinearity is therefore $\nnl(T_{d,n-1})$, and minimality gives the first equality.

If $1<d<n$ and $d<(n+1)/2$, items $(i)$ and $(ii)$ include $x_1=1$ among the minimizing hyperplanes. Since the fixed coordinate contributes one to the weight, $T_{d,n}(1,x_2,\ldots,x_n)=T_{d-1,n-1}(x_2,\ldots,x_n)$. Minimality gives the second equality.

At $d=n$, the restriction to $x_1=0$ is the constant function $T_{n,n-1}=0$, and at $d=1$, the restriction to $x_1=1$ is $T_{0,n-1}=1$. Both have normalized nonlinearity zero. Nonnegativity of normalized nonlinearity gives $\rnnl_1(T_{d,n})=0$ at these endpoints, so the same two equalities hold there.
\end{proof}

\begin{corollary}\label{cor:CCDS-majority-conjecture}
Let $n\ge3$ be odd and $d=(n+1)/2$. Then the minimum defining $\rnnl_1(T_{d,n})$ is attained on the hyperplane $x_1+x_2=0$, and
\[
\rnnl_1(T_{d,n})
=\frac12-\frac{1}{2^{n-1}}\binom{n-1}{(n-1)/2}.
\]
On $x_1=x_2$, parametrized by $(x_2,\ldots,x_n)$, the linear function $x_2$ is a closest affine function. Thus~\textup{\cite[Conjecture~2]{SergeAnaPante2025}} holds.
\end{corollary}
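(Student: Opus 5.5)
The corollary follows from Theorem~\ref{thm:threshold-main}, item $(iii)$, plus a direct check of the closest affine function. Since $n$ is odd and $d=(n+1)/2$, we are in the middle case of that theorem. This gives the value $\frac12-2^{1-n}\binom{n-1}{(n-1)/2}$ immediately, and item $(iii)$ lists $x_1+x_2=0$ among the minimizing hyperplanes. The hypothesis $1<d<n$ of the theorem holds for $n\ge3$ odd, since then $d=(n+1)/2$ lies strictly between $1$ and $n$. This settles the first two assertions.

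For the statement about the closest affine function, I would reuse the mechanism of Theorem~\ref{thm:B12}.

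\emph{Frequency pair.} The hyperplane $x_1+x_2=0$ comes from the pair $e_1,e_2$. By Proposition~\ref{prop:threshold-B12} both frequencies carry the coefficient $L=2\binom{n-1}{d-1}$, which is maximal.

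\emph{Parametrization.} Parametrize $H$ by $\phi(y)=(y_2,y_2,y_3,\ldots,y_n)$, so $x_0=0$ and $T$ maps $y$ to this vector. Take $b=e_1$. Then $b\cdot Ty=y_2$, so the corresponding restricted frequency $a$ is the linear function $x_2$.

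\emph{Restricted coefficient.} By the hyperplane case~\eqref{eq:restriction-walsh-hyperplane}, with $\varepsilon=0$ and $u=e_1+e_2$,
\[
W_{T_{d,n}\circ\phi}(a)=\tfrac12\bigl(W(e_1)+W(e_2)\bigr)=L.
\]
The proof of Theorem~\ref{thm:B12} shows that this equals the maximum of the absolute normalized restricted coefficients, namely $B_1+B_2$ after normalization.

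\emph{Conclusion.} The coefficient at $a$ is positive and has maximal absolute value. Hence the affine function at minimal distance is $a\cdot y=x_2$ itself, not its complement.

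One step needs care: the frequency $a$ depends on the chosen coordinates. I would state explicitly that with the coordinates $(x_2,\ldots,x_n)$, the variable $x_1$ is eliminated via $x_1=x_2$, so $a=e_2$ in the restricted space.

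Finally, I would translate the $\nnl$ value back to the form used in \cite[Conjecture~2]{SergeAnaPante2025}, as $\nl$ or as a distance $2^{n-2}-\binom{n-1}{(n-1)/2}$. The main obstacle is matching that conjecture's exact phrasing, which I cannot see. If it is stated as a Hamming distance, I would multiply by $2^{n-1}$ and check that the two expressions agree.
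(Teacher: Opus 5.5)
Your proposal is correct and takes essentially the same route as the paper. The value and the minimizing hyperplane come from item $(iii)$ of Theorem~\ref{thm:threshold-main}, and the closest-affine-function claim comes from the hyperplane restriction formula applied to the pair $e_1,e_2$: this gives the positive restricted coefficient $(L+L)/2=L$ at the character $x_2$, which is maximal in absolute value. Your explicit parametrization $\phi(y)=(y_2,y_2,y_3,\ldots,y_n)$ with $b=e_1$ simply makes concrete the identification of the restricted frequency with $x_2$, which the paper states without detail.
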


\begin{proof}
The value and hyperplane follow from Theorem~\textup{\ref{thm:threshold-main}}. In the hyperplane restriction formula, the ambient pair $e_1,e_2$ corresponds to the restricted character $x_2$. Both ambient coefficients are $L=2\binom{n-1}{(n-1)/2}$, so the restricted unnormalized coefficient at this character is $(L+L)/2=L>0$. The restriction formula and Proposition~\textup{\ref{prop:threshold-B12}} bound every restricted unnormalized coefficient by $L$, so this coefficient has maximal magnitude. For a Boolean function $g$ in $n-1$ variables, the normalized distance to the linear function $\ell=x_2$ is $\frac12(1-2^{-(n-1)}W_g(\ell))$. Here it is $\frac12(1-2^{-(n-1)}L)=\frac12-2^{-(n-1)}\binom{n-1}{(n-1)/2}$. Since the positive coefficient at $\ell$ attains the largest absolute coefficient, this distance equals $\nnl(g)$, proving that $x_2$ is a closest affine function.
\end{proof}

Finally, Theorems~\textup{\ref{thm:Pk-rnnl-limit}} and~\textup{\ref{thm:threshold-main}} give an exact asymptotic formula for the parameter studied in this paper.
\begin{corollary}\label{cor:threshold-Pk-limit}
For all integers $n\ge3$ and $1\le d\le n$,
\[
\lim_{\substack{k\to\infty\\ k\text{ odd}}}
\bigl(1-2P_{k,1}(T_{d,n})\bigr)^{1/(k+1)}
=1-2\nnl(T_{d,n})+\frac{1}{2^{n-1}}\binom{n-1}{d-1}.
\]
\end{corollary}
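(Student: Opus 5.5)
The plan is to combine Theorem~\ref{thm:Pk-rnnl-limit} with $t=1$ and the exact value of $\rnnl_1(T_{d,n})$. Theorem~\ref{thm:Pk-rnnl-limit} gives
\[
\lim_{\substack{k\to\infty\\ k\text{ odd}}}\bigl(1-2P_{k,1}(f)\bigr)^{1/(k+1)}=1-2\rnnl_1(f)
\]
for every Boolean function $f$ in $n\ge2$ variables. So it suffices to show, for every $1\le d\le n$, that
\[
1-2\rnnl_1(T_{d,n})=1-2\nnl(T_{d,n})+2^{1-n}\binom{n-1}{d-1},
\]
that is, $\rnnl_1(T_{d,n})=\nnl(T_{d,n})-2^{-n}\binom{n-1}{d-1}$.

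For $1<d<n$ and $n\ge3$, this identity is exactly the first display of Theorem~\ref{thm:threshold-main}. Substituting it into the limit and multiplying by $2$ gives the claimed right-hand side.

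The endpoints $d=1$ and $d=n$ are not covered by Theorem~\ref{thm:threshold-main}. Remark~\ref{rem:threshold-endpoints} asserts that the unified formula still holds there, with value zero, and I will verify this directly.
\begin{itemize}
\item For $d=n$, $T_{n,n}$ is the monomial $x_1\cdots x_n$. By \eqref{eq:nnl-monomial} its largest absolute normalized coefficient is $1-2^{1-n}$ (recall $n\ge3$), so $\nnl(T_{n,n})=2^{-n}$. Since $\binom{n-1}{n-1}=1$, the formula gives $2^{-n}-2^{-n}=0$. This matches $\rnnl_1(T_{n,n})=0$: restricting to $x_1=0$ yields a constant, and nonlinearity is nonnegative.
\item For $d=1$, the relation $T_{1,n}(x)=1+T_{n,n}(\mathbf1+x)$ shows that $T_{1,n}$ is EA-equivalent to $T_{n,n}$. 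Hence it has the same normalized nonlinearity $2^{-n}$, and $\binom{n-1}{0}=1$. Restricting to $x_1=1$ again gives a constant, so $\rnnl_1(T_{1,n})=0$, which agrees with the formula.
\end{itemize}

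With the identity established for all $1\le d\le n$, substituting into Theorem~\ref{thm:Pk-rnnl-limit} finishes the argument. There is no real obstacle. The only point needing care is the endpoint cases, which lie outside the range $1<d<n$ of Theorem~\ref{thm:threshold-main} and must be checked separately as above.
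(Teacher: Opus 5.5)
Your proposal is correct and follows the paper's proof: apply Theorem~\ref{thm:Pk-rnnl-limit} with $t=1$, then substitute $\rnnl_1(T_{d,n})=\nnl(T_{d,n})-2^{-n}\binom{n-1}{d-1}$ from Theorem~\ref{thm:threshold-main} for $1<d<n$ and from Remark~\ref{rem:threshold-endpoints} for $d\in\{1,n\}$. Your explicit endpoint check (the monomial $T_{n,n}$ with $\nnl=2^{-n}$, the equivalence $T_{1,n}=1+T_{n,n}(\mathbf1+x)$, and the constant restrictions to $x_1=0$ and $x_1=1$) simply spells out what the remark asserts.
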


\begin{proof}
Apply Theorem~\textup{\ref{thm:Pk-rnnl-limit}} with $t=1$ and substitute the value of $\rnnl_1(T_{d,n})$ from Theorem~\textup{\ref{thm:threshold-main}} and Remark~\textup{\ref{rem:threshold-endpoints}}.
\end{proof}

\section{Conclusion}\label{sec:conclusion}

For odd $k\ge3$ and $0\le t<n$, we determined $P_{k,t}$ for direct sums of $m$ monomials of degrees at least two. When $t<m$, a minimizing restriction sets one variable equal to zero in each of the $t$ monomials of lowest degree. When $m\le t<n$, the minimum is zero. The proof combines factorization of even Walsh moments with contraction under the affine coupling produced by restriction.

The general affine-restriction identity gives a spectral characterization of $\rnnl_t(f)$. For hyperplanes, the minimum depends on the two largest absolute normalized Walsh coefficients, counted at distinct frequencies. We proved that $(1-2P_{k,t}(f))^{1/(k+1)}$ converges to $1-2\rnnl_t(f)$ as odd $k$ tends to infinity. The exact threshold formula therefore determines this limit for $t=1$ and $f=T_{d,n}$. The threshold formula also proves both conjectures in~\cite{SergeAnaPante2025}, including the odd-majority case, where $x_1+x_2=0$ is a minimizing hyperplane.

Finally, the hyperplane-moment identity shows that every non-affine function has a hyperplane restriction with strictly smaller odd-order nonhomomorphicity. The sequence $P_{k,t}(f)$ consequently decreases strictly until it reaches zero.

\end{document}